\renewcommand{\algorithmcfname}{ALGORITHM}
\definecolor{darkgreen}{rgb}{0,0.6,0}
\newcommand{\kibitz}[2]{\ifnum\Comments=1{\color{#1}{#2}}\fi}
\newcommand{\omer}[1]{\kibitz{blue}{[Omer :#1]}}
\newcommand{\RemoveAlgoNumber}{\renewcommand{\fnum@algocf}{\AlCapSty{\AlCapFnt\algorithmcfname}}}
\newcommand{\RevertAlgoNumber}{\algocf@resetfnum}
\newcommand{\mX}{\mathcal{X}}
\newcommand{\mI}{\mathcal{I}}
\newcommand{\mF}{\mathcal{F}}
\newcommand{\mY}{\mathcal{Y}}
\newcommand{\mZ}{\mathcal{Z}}
\newcommand{\mH}{\mathcal{H}}
\newcommand{\mG}{\mathcal{G}}
\newcommand{\mD}{\mathcal{D}}
\newcommand{\mR}{\mathcal{R}}
\newcommand{\mN}{\mathcal{N}}
\newcommand{\mS}{\mathcal{S}}
\newcommand{\mT}{\mathcal{T}}
\newcommand{\mW}{\mathcal{W}}
\newcommand{\ind}{\mathds{1}}
\newcommand{\R}{\mathbb{R}}
\newcommand{\defeq}{\stackrel{\text{def}}{=}}
\newcommand{\vc}{\textnormal{VCdim}}
\newcommand{\pdim}{\textnormal{Pdim}}
\newcommand{\Var}{\mathrm{Var}}
\newcommand{\tmS}{\tilde{\mathcal{S}}}
\newcommand{\tmZ}{\tilde{\mathcal{Z}}}
\newcommand{\tmD}{\tilde{\mathcal{D}}}
\newcommand{\tz}{\tilde{z}}
\newcommand{\tw}{\tilde{w}}
\newcommand{\tpi}{\tilde{\pi}}
\newcommand{\inprod}[2]{\langle\, #1,#2\rangle}
\newtheorem{theorem}{Theorem}[section]
\newtheorem{lemma}[theorem]{Lemma}
\newtheorem{proposition}[theorem]{Proposition}
\newtheorem{claim}[theorem]{Claim}
\newtheorem{corollary}[theorem]{Corollary}
\newtheorem{definition}{Definition}[section]
\theoremstyle{remark}
\newtheorem{example}{Example}[section]
\newcommand\bl[1]{\boldsymbol{ #1 } }
\newcommand{\norm}[1]{\left\lVert#1\right\rVert}
\newcommand\abs[1]{\left| #1  \right|}
\DeclareMathOperator*{\argmax}{arg\,max} %argmax
\DeclareMathOperator{\E}{\mathbb{E}}
\DeclareMathOperator{\sign}{sign}
\renewcommand{\citet}[1]{\citeauthor{#1} \cite{#1}}
\begin{document} 
\title{Regression Equilibrium}

\author{Omer Ben{-}Porat%
\thanks{%
    {Technion - Israeli Institute of Technology (\url{omerbp@campus.technion.ac.il})}}
\and Moshe Tennenholtz%
\thanks{%
    {Technion - Israeli Institute of Technology (\url{moshet@ie.technion.ac.il})}}
}
\maketitle

\begin{abstract}
Prediction is a well-studied machine learning task, and prediction algorithms are core ingredients in online products and services.
Despite their centrality in the competition between online companies who offer prediction-based products, the \textit{strategic} use of prediction algorithms remains unexplored. The goal of this paper is to examine strategic use of prediction algorithms.
We introduce a novel game-theoretic setting that is based on the PAC learning framework, where each player (aka a prediction algorithm aimed at competition) seeks to maximize the sum of points for which it produces an accurate prediction and the others do not. We show that algorithms aiming at generalization may wittingly mispredict some points to perform better than others on expectation. We analyze the empirical game, i.e., the game induced on a given sample, prove that it always possesses a pure Nash equilibrium, and show that every better-response learning process converges. Moreover, our learning-theoretic analysis suggests that players can, with high probability, learn an approximate pure Nash equilibrium for the whole population using a small number of samples.  
\end{abstract}

\section{Introduction}
Prediction plays an important role in twenty-first century economics. In a prediction task, an algorithm is given a sequence of examples composed of labeled instances, and its goal is to learn a general rule that maps instances to labels. With the recent data explosion, commercial companies can, like never before, collect massive amounts of data and employ sophisticated machine learning algorithms to discover patterns and seek connections between different observations. For instance, after examining a sufficient number of apartments -- their characteristics and selling prices -- real estate experts may attempt to accurately predict the selling price of a new, unseen apartment. 
Typically, the quality of a prediction algorithm is measured by its success in predicting the value of an unlabeled (or unseen) instance.

However, ubiquitously prediction is not carried out in isolation. 
For revenue-seeking companies, prediction is another tool that can be exploited to increase revenue. To illustrate, consider several competing real estate experts, who provide prediction service for the selling value of apartments on their websites. These experts gain directly from user traffic to their websites, and hence aim to attract as many users as possible. A user, after receiving the experts' predictions and selling his\footnote{For ease of exposition, third-person singular pronouns are ``he'' for a user and ``she'' for a player.} apartment, can evaluate which experts were accurate\footnote{Crucially, we assume that the predicted values and the actual ones are independent. This is arguably the case if experts are external, and not buyers nor real-estate brokers. Consequently, users care for accurate prediction and not over-estimation.} and which were not; he will typically decide that an expert is accurate based on his experience and/or the experience of his family and friends. That user may decide to have future interaction with one/some of the accurate experts, or recommend his friends to interact with them; thus, providing an accurate prediction to a user translates to higher revenue. Interestingly, maximizing revenue and minimizing discrepancy do not coincide in this example, as further illustrated in Figure \ref{fig:motivation}, suggesting that prediction algorithms in competition should optimize revenue explicitly, and not other measures that affect revenue only indirectly. Despite the intuitive clarity of this tradeoff and the enormous amount of work done on prediction in the machine learning and statistical learning communities, far too little attention has been paid to the study of prediction in the context of \textit{competition}. 

\begin{figure}[t]
\centering
\includegraphics[scale=.7]{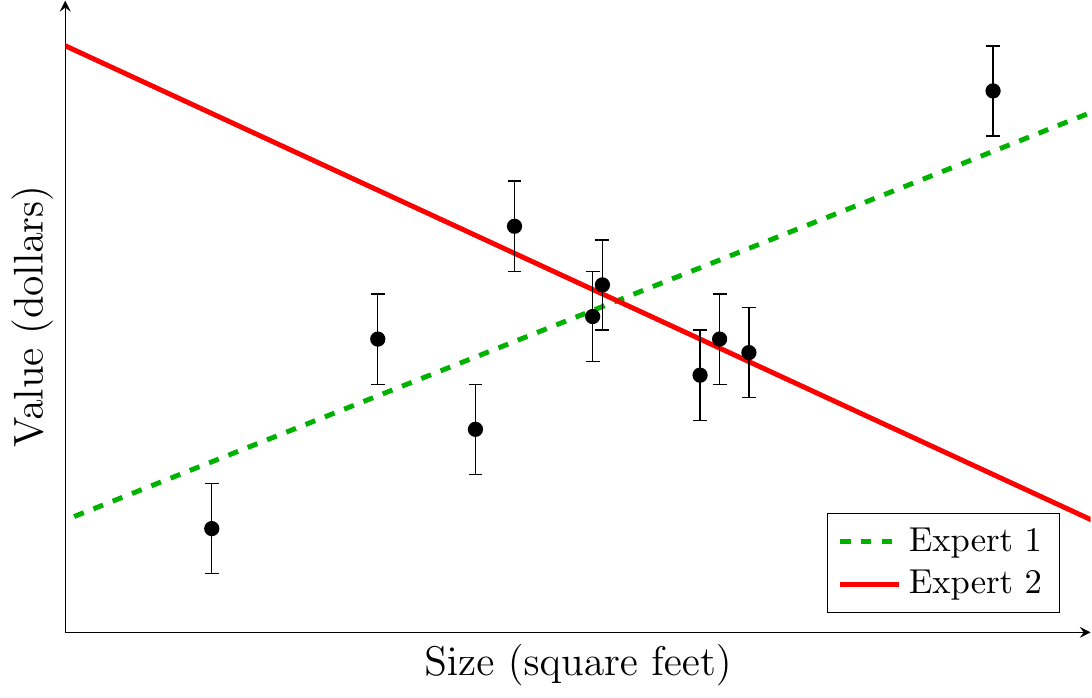}
\caption{A case where minimizing the square error is non-optimal. Each point is an instance-value pair, and  
the user associated with each point considers a prediction as accurate if it lies inside the segment engulfing that point. The green (dashed) line represents the predictions of expert 1, employing the linear least squares estimators and being accurate on 20\% of the points. The red (solid) line represents the predictions of expert 2, providing an accurate prediction on 60\% of the points, and completely ignores the increasing trend of the data. Expert 1 could be considered accurate by more points if she deviates to the regression line of expert 2 (among other deviations).
\label{fig:motivation}}
\end{figure}

In this paper we study how the selection of prediction algorithms is affected by strategic behavior in a competitive setting, using a game-theoretic lens. We consider a space of users, where each user is modeled as a triplet $(x,y,t)$ of an instance, a label and a threshold, respectively. An instance encodes public information describing the user, e.g., a real vector with his apartment's characteristics; the label is the value that should be predicted, e.g., the actual selling value of his apartment (that is only revealed later); and the threshold is the ``distance'' he is willing to accept between a proposed prediction and his label. Namely, a prediction $f(x)$ is said to be \textit{accurate} w.r.t. the user associated with $(x,y,t)$ if $\abs{f(x) - y}$ is less than or equal to $t$. 
In such a case, the user is \textit{satisfied} and willing to adopt the prediction, granting one monetary unit to the expert who produced the accurate prediction. If a user is satisfied with several predictions (of several experts), he selects one uniformly at random. Indeed, the user-model we adopt is aligned with the celebrated ``Satisficing" principle of \citet{simon1956rational}, and other widely-accepted models in the literature on choice prediction, e.g., the model of selection based on small samples \cite{barron2003small,erev2010choice}.
Several players (companies or experts) are equipped with infinite strategy spaces, or hypothesis classes in learning-theoretic terminology. A player's strategy space models the possible predictive functions she can employ. Players are competing for the users, and a player's payoff is the expected number of users who find her predictions to be accurate, in the aforementioned sense. To model uncertainty w.r.t. the labels and thresholds, we use the PAC-learning framework of \citet{valiant1984theory}. We assume user distribution is unknown, but the players have access to a sequence of examples, containing instances, labels and thresholds, with which they should optimize their payoffs w.r.t. the unknown underlying user distribution.

From a machine learning perspective, we now face the challenge of what would be a good prediction algorithm profile, i.e., a set of algorithms for the players such that no player would deviate from her algorithm assuming the others all stick to their algorithms. Indeed, such a profile of algorithms determines a \textit{pure Nash equilibrium} (PNE), a powerful solution concept which rarely exists in games. An important question in this regard is whether such a profile exists. An accompanying question is whether learning dynamics in which players may change their prediction algorithms to better respond to others would converge. 
Therefore, we ask:

\begin{itemize}
\item Does a PNE exist?
\item Will the players be able to find it efficiently with high probability using better-response dynamics?
\end{itemize}

We prove that the answer to both questions is in the affirmative. We first show that when the capacity of each strategy space is bounded (i.e., finite pseudo-dimension), players can learn payoffs from samples. Namely, we show that the payoff function of each player uniformly converges over all possible strategy profiles (that include strategies of the other players); thus, with high probability a player's payoff under any strategy profile is not too distant from her empirical payoff. 
Later, we show that an empirical PNE always exists, i.e., a PNE of the game induced by the empirical sample distribution. Moreover, we show that any learning dynamics in which players improve their payoff by more than a non-negligible quantity converges fast to an approximate PNE. 
Using the two latter results, we show an interesting property of the setting: the elementary idea of sampling and better-responding according to the empirical distribution until convergence leads to an approximate PNE of the game on the whole population. We analyze this learning process, and formalize the above intuition via an algorithm that runs in polynomial time in the instance parameters, and returns an approximate PNE with high probability.

To further exemplify the setting, we then consider the case of players employing linear regression. We modify the algorithm of \citet{porat2017best}, proposed for a related setting, to devise an algorithm for the best linear response of a player. We conduct simulations on various synthetic datasets to visualize the structure of PNEs in several settings, and to perform sensitivity analysis. Finally, we discuss two extensions. We consider the case of infinite capacities, and demonstrate that non-learnability can occur even if the user distribution is known to all players. We also study a natural variant of the model, where each user grants one monetary unit to the most accurate player, and show that learning and PNE existence are no longer guaranteed.

\subsection{Related work}
The intersection of research in game theory and machine learning has increased rapidly in recent years. Sample empowered mechanism design is a fruitful line of research. For example, \citet{cole2014sample}, \citet{GonczarowskiN17}, \citet{morgenstern2015pseudo} reconsider  auctions where the auctioneer can sample from bidder valuation functions, thereby relaxing the assumption of prior knowledge on bidder valuation distribution \cite{myerson1981optimal}.
Empirical distributions also play a key role in other lines of research  \cite{althofer1994sparse,babichenko2016empirical,lipton2003playing}, where e.g. \citet{babichenko2016empirical} show how to obtain an approximate equilibrium by sampling any mixed equilibrium. The PAC-learning framework proposed by \citet{valiant1984theory} has also been extended by \citet{blum2017collaborative}, who consider a collaborative game where players attempt to learn the same underlying prediction function, but each player has her own distribution over the space. In their work each player can sample from her own distribution, and the goal is to use information sharing among the players to reduce the sample complexity. Another line of work deals with extracting data from selfish agents for a variety of machine learning tasks \cite{charapodimata,dekel2008incentive,meir2012algorithms,Hardt2016}. We stress that these works are only parallel to this one, as in this paper there are no strategic agents, only myopic users who follow the ``best'' prediction they get. Moreover, in this work the strategic parties are those who provide the predictions, aiming at satisfying as many users as possible.

Our work is inspired by Dueling Algorithms \cite{immorlica2011dueling}. \citet{immorlica2011dueling} analyze an optimization problem from the perspective of competition, rather than from the point of view of a single optimizer. That work examines the dueling form of several optimization problems, e.g., the shortest path from the source vertex to the target vertex in a graph with random weights. While minimizing the expected length is a probable solution concept for a single optimizer (i.e., socially optimal), this is no longer the case in the defined duel. Our model is also related to Competing Bandits \cite{MansourSW18}.
\citet{MansourSW18} consider a competition between two bandit algorithms faced with the same sample, where users arrive one by one and choose between the two algorithms. This work is substantially different than \citet{MansourSW18}. To name a few differences, our work players also share the same sample but we consider an offline setting and not an online one;  infinite strategy spaces and not a finite set of actions; context in the form of property vector for each user; and an arbitrary number of asymmetric players, where asymmetry is reflected in the strategy space of each player.

Most relevant to our work is the work by \citet{porat2017best}. The authors present a learning task where a newcomer agent is given a sequence of examples, and wishes to learn a best-response to the players already on the market. They assume that the agent can sample triplets composed of instance, label and current market prediction, and define the agent's payoff as the proportion of points (associated with users) she predicts better than the other players. Indeed, \citet{porat2017best} introduce a learning task incorporating economic interpretation into the objective function of the (single) optimizer, but in fact does not provide any game-theoretic analysis. In contrast, this paper considers game-theoretic interaction between players, and its main contribution lies in the analysis of such interactions. %
Since learning dynamics consists of steps of unilateral deviations that improve the deviating player's payoff, the Best Response Regression of \citet{porat2017best} can be thought of as an initial step to this work.

\subsection{Our contribution}
Our contribution is three-fold. First, we explicitly suggest that prediction algorithms, like other products on the market, are in competition.
This novel view emphasizes the need for stability in prediction-based competition similar to  \citeauthor{Hotelling}'s stability in spatial competition \cite{Hotelling}.

Second, we introduce an extension of the PAC-learning framework for dealing with strategy profiles, each of which is a sequence of functions.  To illustrate why this is not an immediate application of the PAC-learning framework, recall that in the PAC-learning framework we assume a loss function which quantifies how good or bad a hypothesis is, regardless of the hypothesis space. In our games, however, a strategy profile cannot be determined to be good (being a PNE) or bad (there exist beneficial deviations) by itself; we must as well consider all profiles and in particular the possible beneficial deviations.  

We show a reduction from payoff maximization to a single-player loss minimization setting, which is later used to achieve bounds on the sample complexity for uniform convergence over the set of profiles. We also show that when players have approximate better-response oracles, they can learn an approximate PNE of the empirical game. The main technical contribution of this paper is an algorithm which, given $\epsilon,\delta$, samples a polynomial number of points in the game instance parameters, runs any $\epsilon$-better-response dynamics, and returns an $\epsilon$-PNE with probability of at least $1-\delta$.

Third, we consider games with at least one player with an infinite pseudo-dimension. We show a game instance where each player can learn the best prediction function from her hypothesis class if she were alone in the game, but a PNE of the empirical game is not generalized. This inability to learn emphasizes that strategic behavior can introduce further challenges to the machine learning community.

\section{Problem definition}
In this section we formalize the model. We begin with an informal introduction to elementary concepts in both game theory and learning theory that are used throughout the paper.

\paragraph{Game theory}
A non-cooperative game is composed of a set of players $\mN=\{1,\dots,N\}$; a strategy space $\mH_i$ for every player $i$; and a payoff function $\pi_i:\mH_1 \times \cdots \times \mH_N \rightarrow \mathbb R$ for every player $i$. The set $\mH =\mH_1 \times \cdots \times \mH_N$ contains all possible strategy profiles, and a tuple of strategies $\bl h =(h_1,\dots h_N) \in \mH$ is called a \textit{strategy profile}, or simply a profile. We denote by $\bl h_{-i}$ the vector obtained by omitting the $i$-th component of $\bl h$.

A strategy $h_i'\in \mH_i $ is called a \textit{better response} of player $i$ with respect to a strategy profile $\bl h$ if $\pi_i(h_i', \bl h_{-i}) > \pi_i(\bl h)$. Similarly, $h_i'$ is said to be an \textit{$\epsilon$-better response} of player $i$, w.r.t. a strategy profile $\bl h$ if $\pi_i(h_i', \bl h_{-i}) \geq  \pi_i(\bl h) +\epsilon$, and a \textit{best response} to $\bl h_{-i}$ if $\pi_i(h_i', \bl h_{-i}) \geq \sup_{h_i\in \mH_i}\pi_i(h_i,\bl h_{-i})$ .

We say that a strategy profile $\bl h$ is a \textit{pure Nash equilibrium} (herein denoted PNE) if every player plays a best response under $\bl h$. We say that a strategy profile $\bl h$ is an $\epsilon$-PNE if no player has an $\epsilon$-better response under $\bl h$, i.e., for every player $i$, it holds that $\pi_i(\bl h) \geq \sup_{h_i'\in \mH_i}\pi_i(h_i',\bl h_{-i})-\epsilon$.

\paragraph{Learning theory}

Let $\mX$ be a set, and let $F$ be a class of binary-valued functions $F\subseteq {\{0,1\}}^\mX$. Given a sequence $\mS=(x_1,\dots x_m)\in \mX^m$, we denote the \textit{restriction} of $F$ to $\mS$ by $F\cap \mS= \left\{ \left(f(x_1),\dots,f(x_m)\right)\mid f\in F  \right\}$. The \textit{growth function} of $F$, denoted $\Pi_F:\mathbb N \rightarrow \mathbb N$, is defined as $\Pi_F(m) = \max_{\mS\in \mX^m}\abs{F\cap \mS}$. We say that $F$ \textit{shatters} $\mS$ if $\abs{F\cap \mS}=2^{\abs{\mS}}$. The Vapnik-Chervonenkis dimension of a binary function class is the cardinality of the largest set of points in $\mX$ that can be shattered by $F$, 
$\vc(F)= \max\left\{m\in \mathbb N :\Pi_F(m)=2^m    \right\}$.

Let $H$ be a class of real-valued functions $H\subseteq \mathbb R^\mX$. The restriction of $H$ to $\mS \in \mX^m$ is analogously defined, $H\cap \mS= \left\{ \left(h(x_1),\dots,h(x_m)\right)\mid h\in H  \right\}$. We say that $H$ \textit{pseudo-shatters} $\mS$ if there exists $\bl r=(r_1,\dots,r_m)\in \mathbb R^m$ such that for every binary vector $\bl b=(b_1,\dots b_m)\in \{-1,1  \}^m$ there exists $h_{\bl b}\in H$ and for every $i\in [m]$ it holds that $\sign(h_{\bl b}(x_i)-r_i)=b_i $. The \textit{pseudo-dimension} of $H$ is the cardinality of the largest set of points in $\mX$ that can be pseudo-shattered by $H$, 
\[
\pdim(H)=\max\left\{m\in \mathbb N :\exists \mS\in\mX^m \text{ such that } \mS \text{ is pseudo-shattered by }  H \right\}.
\]

\subsection{Model}
\label{subsec:model}
We consider a set of users who are interested in a prediction provided by a set of competing players. Each user is associated with a vector $(x,y,t)$, where $x$ is the instance; $y$ is the label; and $t$ is the threshold that the user is willing to accept.

The players offer predictions to the users. When a user associated with a vector $(x,y,t)$ approaches player $i$, she produces a prediction $h_i(x)$. If $\abs{h_i(x)-y}$ is at most $t$, the user associated with $(x,y,t)$ will grant one monetary unit to player $i$. Otherwise, that user will move on to another player. We assume that users approach players according to the uniform distribution, although our model and results support any distribution over player orderings. Player $i$ has a set of possible strategies (prediction algorithms) $\mH_i$, from which she has to decide which one to use. Each player aims to maximize her expected payoff, and will act strategically to do so.

Formally, the game is a tuple $\langle \mZ,\mD,\mN,\mH,\pi \rangle$ such that
\begin{enumerate}
\item $\mZ$ is the examples domain $\mZ = \mX \times \mY \times \mT$, where $\mX\subset \R^n$ is the instance domain;  $\mY \subset \R$ is the label domain; and $\mT \subset \R_{\geq 0}$ is the tolerance domain.
\item $\mD$ is a probability distribution over $\mZ = \mX \times \mY \times \mT$.
\item $\mN$ is the set of players, where $\mN =[N]\defeq \{1,\dots,N\}$. A strategy of player $i$ is an element from $\mH_i \subseteq \mY ^ \mX$. The space of all strategy profiles is denoted by $\mH = \times_{i=1}^N \mH_i  $.

\item For $z=(x,y,t)$ and a function $g:\mX\rightarrow \mY$, we define the indicator $\mI(z,g)$ to be 1 if the distance between the value $g$ predicted for $x$ and $y$ is at most $t$. Formally,
\[
\mI(z,g)=
\begin{cases}
1 & \abs{g(x)-y} \leq t \\
0 & \text{otherwise}
\end{cases}.
\]
\item Given a strategy profile $\bl h=(h_1,\dots h_N)$ with $h_i \in \mH_i$ for $i\in \mN$ and $z=(x,y,t)\in \mZ$, let
\[
w_i(z;\bl h)=
\begin{cases}
0 & \text{ if } \mI(z,h_i)=0\\
\frac{1}{\sum_{i'=1}^N \mI(z,h_{i'})} & \text{otherwise}
\end{cases}.
\]
Note that $w_i(z;\bl h)$ represents the expected payoff of player $i$, w.r.t. the user associated with $z$. The payoff of player $i$ under $\bl h$ is the average sum over all users, and is defined by
\[
\pi_i (\bl h)= \E_{z\sim \mD}\left[ w_i(z;\bl h) \right].
\]
\item $\mD$ is unknown to the players.
\end{enumerate}

We assume players have access to a sequence of examples $\mS$, drawn i.i.d. from $\mD$. Given a game instance $\langle \mZ,\mD,\mN,\mH, \pi\rangle$ and a sample $\mS=\{z_1,\dots z_m\}$, we denote by $\langle \mZ,\mS \sim \mD^m ,\mN,\mH,\pi \rangle$ the \textit{empirical game}: the game over the same $\mN,\mH,\mZ$ and uniform distribution over the known $\mS \in \mZ^m$. We denote the payoff of player $i$ in the empirical game by 
\begin{equation}\label{eq:empirical payoff}
\pi_i^\mS (\bl h)= \E_{z \in \mS}\left[ w_i(z;\bl h) \right]=\frac{1}{m}\sum_{j=1}^m w_i(z_j;\bl h). 
\end{equation}
When $\mS$ is known from the context, we occasionally use the term \textit{empirical} PNE to denote a PNE of the empirical game. Since the empirical game is a complete information game, players can use the sample in order to optimize their payoffs. 

The optimization problem of finding a best response in our model is intriguing in its own right. Throughout the paper, we assume that each player $i$ has a polynomial $\epsilon$-better-response oracle. Namely, given a real number $\epsilon>0$, a strategy profile $\bl h$ and sample $\mS$, we assume that each player $i$ has an oracle that returns an $\epsilon$-better response to $\bl h_{-i}$ if such exists or answers false otherwise, which runs in time $\text{poly}(\frac{1}{\epsilon},m,N)$. Despite that this assumption may be non-trivial in general cases, we show in Section \ref{sec:bestresponsregression} such a best response oracle for the case of $\mH_i$ being the class of linear functions with constant input dimension (denoted by $n$ in the model above).  We also discuss situations where a better response cannot be computed efficiently in Section \ref{sec:discussion}, and present the applicability of our model for these cases as well.

\section{Meta algorithm and analysis}
\label{sec:meta}
Throughout this section we assume that the pseudo-dimension of $\mH_i$ is finite,  and we denote it by $d_i$, i.e., $\pdim(\mH_i)=d_i<\infty$. Our goal is to propose a generic method for finding an $\epsilon$-PNE efficiently. 
The method is composed of two steps: first, it attains a sample of ``sufficient'' size. Afterwards, it runs an $\epsilon$-better-response dynamics until convergence, and returns the obtained profile. The underlying idea is straightforward, but its analysis is non-trivial. In particular, we need to show two main claims: 
\begin{itemize}
\item Given a sufficiently large sample $\mS$, the payoff of each player $i$ in the empirical game is not too far away from her payoff in the actual game, with high probability. This holds concurrently for all possible strategy profiles.
\item An $\epsilon$-PNE exists in every empirical game. Therefore, players can reach an $\epsilon$-PNE of the empirical game quickly, using their $\epsilon$-better-response oracles.
\end{itemize}
These claims will be made explicit in forthcoming Subsections \ref{subsec:sample} and \ref{subsec:dyn}. We formalize the above discussion via Algorithm \ref{algorithm:betterres} in Subsection \ref{subsec:alg}.

\subsection{Uniform convergence in probability}
\label{subsec:sample}
We now bound the probability (over all choices of $\mS$) of having player $i$'s payoff (for an arbitrary $i\in \mN$) greater or less than its empirical counterpart by more than $\epsilon$. Notice that the restriction of $\mH_i$ to any arbitrary sample $\mS$, i.e., $\mH_i \cap \mS$, may be of infinite size. %
Nevertheless, the payoff under a strategy profile $\bl h \in \mH$  concerns the functions $w_1,\dots,w_N$ only and not the actual real-valued predictions produced by the players under $\bl h$; hence, uniform convergence of another set of functions should be argued, the class of functions that compose the payoff with the predictions.

More formally, given $z\in \mZ$ and $\bl h \in \mH$, let $w(z;\bl h)=(w_1(z;\bl h),\dots ,w_N(z;\bl h))$, and denote by $\mW$ the class of functions from $\mZ$ to $\{1,\frac{1}{2},\dots,\frac{1}{N},0\}^N$ such that
\begin{equation}
\label{eq:w not h}
\mW \defeq = \{ w(z;\bl h) \mid \bl h \in \mH \}.
\end{equation}
Notice that the set $\mW \cap \mS$ is finite. Moreover, for some $z\in \mZ$, $w(z;\bl h)=w(z;\bl h')$ may occur for $\bl h \neq \bl h'$; thus, to argue that uniform convergence holds, one would typically show that the growth function of $\mW$ is polynomial. However, analyzing $\mW$ directly seems tricky. To circumvent a direct analysis of $\mW$, we define auxiliary classes of binary functions $(\mF_i)_{i\in \mN}$. Let $\mF_i$ be a class of functions from $\mZ$ to $\{0,1\}$ such that 
\begin{equation}
\label{eq:defoff}
\mF_i \defeq \left\{\mI(z , h) \mid h\in \mH_i  \right\}.
\end{equation}
In the rest of this subsection, we prove properties of $\mF_i$ and then associate those properties with the payoffs of the game.

Observe that $\mF_i$ is a binary function class, and thus its complexity can be quantified using the VC dimension. On the other hand, we already know that the pseudo-dimension of $\mH_i$, which is a real-valued function class, is $d_i$. The following combinatorial lemma bounds the VC dimension of $\mF_i$ as a function of the pseudo-dimension of $\mH_i$. By doing so, we present an interesting connection between these two celebrated notations of expressive power of a space of functions (the VC dimension and the pseudo-dimension). 
\begin{lemma}
\label{LEMMA:REGTOCLASS}
$\vc(\mF_i) \leq 10d_i$.
\end{lemma}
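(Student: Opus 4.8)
The plan is to express the indicator $\mI(z,h)$ as a conjunction of two one-sided threshold events and to bound the complexity of each side separately using the pseudo-dimension of $\mH_i$. Writing $z=(x,y,t)$, observe that $\mI(z,h)=1$ exactly when $y-t \leq h(x) \leq y+t$, so
\[
\mI(z,h)=\ind[\,h(x)\geq y-t\,]\cdot \ind[\,h(x)\leq y+t\,].
\]
Accordingly I would define two auxiliary binary classes on $\mZ$,
\[
\mF_i^{+}=\{\,z\mapsto \ind[h(x)\geq y-t]\mid h\in \mH_i\,\},\qquad
\mF_i^{-}=\{\,z\mapsto \ind[h(x)\leq y+t]\mid h\in \mH_i\,\},
\]
so that every function in $\mF_i$ is the pointwise product of a function in $\mF_i^{+}$ with the corresponding function in $\mF_i^{-}$; in particular $\mF_i$ is contained in the intersection class $\{A\cap B\mid A\in\mF_i^{+},\,B\in\mF_i^{-}\}$.

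First I would bound $\vc(\mF_i^{+})$ and $\vc(\mF_i^{-})$ by $d_i$. The key observation is that each auxiliary class is a reparametrization of the threshold (subgraph) class that defines the pseudo-dimension. Consider the map $\phi^{+}(x,y,t)=(x,\,y-t)\in \mX\times\R$; then $\ind[h(x)\geq y-t]=\ind[h(\phi^{+}_1(z))\geq \phi^{+}_2(z)]$, so $\mF_i^{+}$ is obtained by composing the threshold class $\{(x,r)\mapsto \ind[h(x)\geq r]\mid h\in\mH_i\}$ with the fixed map $\phi^{+}$. Since composing a class with a fixed map can only shrink the set of realizable labelings, the VC dimension cannot increase; and the threshold class has VC dimension equal to $\pdim(\mH_i)=d_i$ by the definition of pseudo-shattering. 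The same argument with $\phi^{-}(x,y,t)=(x,y+t)$ and the complementary (below-threshold) class handles $\mF_i^{-}$. Hence $\vc(\mF_i^{+}),\vc(\mF_i^{-})\leq d_i$.

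Finally I would combine the two bounds through Sauer--Shelah. Since $\mF_i\subseteq\{A\cap B\}$, its growth function satisfies $\Pi_{\mF_i}(m)\leq \Pi_{\mF_i^{+}}(m)\cdot\Pi_{\mF_i^{-}}(m)\leq (em/d_i)^{2d_i}$ for $m\geq d_i$. If a set of size $m$ is shattered by $\mF_i$ then $2^m\leq (em/d_i)^{2d_i}$; substituting $m=10d_i$ gives $2^{10d_i}=1024^{d_i}>(10e)^{2d_i}=(100e^2)^{d_i}\approx 739^{d_i}$, a contradiction, so no set of size $10d_i$ can be shattered and $\vc(\mF_i)\leq 10d_i$. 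I expect the main obstacle to be the second step: making the reduction to pseudo-dimension fully rigorous, namely verifying that the reparametrization map preserves the relevant shattering and carefully reconciling the closed-versus-strict ($\geq$ versus $>$) threshold conventions so that both $\mF_i^{\pm}$ are genuinely bounded by $d_i$ rather than a larger constant. Once those one-sided bounds are secured, the concluding numerical optimization that pins down the constant $10$ is routine.
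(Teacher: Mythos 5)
Your proposal is correct and follows essentially the same route as the paper: decompose $\mI(z,h)$ into the two one-sided threshold indicators at $y-t$ and $y+t$, bound each auxiliary class's VC dimension by $\pdim(\mH_i)$ (the paper proves this as a separate claim, including the careful witness-shifting needed to reconcile the strict/non-strict threshold conventions you flag as the main obstacle), bound $\Pi_{\mF_i}$ by the product of the two growth functions, and finish with Sauer--Shelah at $m=10d_i$. The only cosmetic difference is that the paper defines the threshold classes on $\mX\times\R$ and restricts them to the transformed sample points $(x_j,y_j\mp t_j)$, whereas you pull them back to $\mZ$ via a fixed reparametrization map; these are equivalent.
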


After discovering the connection between the growth rate of $\mH_i$ and $\mF_i$, we can progress to bounding the growth of the product function class $\mF$ (which we will define shortly). For ease of notation, denote $\mI(z,\bl h)=(\mI(z,h_1),\dots ,\mI(z,h_N))$. Let $\mF$ be a class of functions from $\mZ$ to $\{0,1 \}^N$ defined by
\[
\mF \defeq \prod_{i=1}^N \mF_i = \left\{ \mI\left(z,\bl h  \right) \mid \bl h\in \mH\right\}.
\]
Note that every element in $\mF$ is a function from $\mZ$ to $\{0,1\}^N$. The restriction of $\mF$ to a sample $\mS$ is naturally defined by
\[
\mF \cap \mS = \prod_{i=1}^N (\mF_i\cap \mS) =  \left\{ \left(\mI(z_1,\bl h),\dots,\mI(z_m,\bl h)   \right) \mid \bl h \in \mH   \right\}.
\]
As a result,
\begin{equation}
\label{eq:size of f and s}
\abs{\mF \cap \mS}=\prod_{i=1}^N \abs{\mF_i \cap \mS}.
\end{equation}
We use Equation (\ref{eq:size of f and s}) to characterize the growth function of $\mF$, defined by $\Pi_\mF (m) = \max_{\mS\in \mZ^m    } \abs{ \mF \cap \mS   }$. We bound $\Pi_\mF (m)$  using Lemma \ref{LEMMA:REGTOCLASS} and the Sauer-Shelah lemma.
\begin{lemma}
\label{lemma:growthsumvc}
$\Pi_\mF(m) \leq (em)^{10\sum_{i=1}^N d_i}$.
\end{lemma}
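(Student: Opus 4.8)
The plan is to reduce the bound on $\Pi_\mF(m)$ to a per-player bound on $\Pi_{\mF_i}(m)$ using the product structure of $\mF$, and then to control each factor with Lemma \ref{LEMMA:REGTOCLASS} together with the Sauer--Shelah lemma. First I would start from Equation (\ref{eq:size of f and s}), which expresses $\abs{\mF \cap \mS} = \prod_{i=1}^N \abs{\mF_i \cap \mS}$ for every fixed sample $\mS \in \mZ^m$. Since each factor is nonnegative and the maximum of a product of nonnegative quantities is at most the product of their individual maxima, taking the maximum over $\mS$ yields
\[
\Pi_\mF(m) = \max_{\mS \in \mZ^m} \prod_{i=1}^N \abs{\mF_i \cap \mS} \;\le\; \prod_{i=1}^N \max_{\mS \in \mZ^m} \abs{\mF_i \cap \mS} = \prod_{i=1}^N \Pi_{\mF_i}(m).
\]

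Next I would bound each factor $\Pi_{\mF_i}(m)$ separately. By Lemma \ref{LEMMA:REGTOCLASS}, $\mF_i$ is a binary function class with $\vc(\mF_i) \le 10 d_i$, so the Sauer--Shelah lemma gives $\Pi_{\mF_i}(m) \le \sum_{k=0}^{10 d_i} \binom{m}{k} \le \bigl(\tfrac{em}{10 d_i}\bigr)^{10 d_i} \le (em)^{10 d_i}$, where the last step uses $10 d_i \ge 1$. Substituting this into the displayed product bound then yields
\[
\Pi_\mF(m) \le \prod_{i=1}^N (em)^{10 d_i} = (em)^{10 \sum_{i=1}^N d_i},
\]
which is exactly the claim.

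I expect no serious obstacle, since the genuinely hard combinatorial content — the relation between $\pdim(\mH_i)$ and $\vc(\mF_i)$ — is already established in Lemma \ref{LEMMA:REGTOCLASS}. The only points needing minor care are (i) justifying the interchange of ``maximum of a product'' with ``product of maxima,'' which is legitimate precisely because every $\abs{\mF_i \cap \mS}$ is nonnegative (indeed at least $1$); and (ii) the boundary behavior of the Sauer--Shelah estimate when $m$ is small relative to some $d_i$, where the sharper form $(\tfrac{em}{d})^{d}$ is not directly available, but the stated coarser bound still holds because $(em)^{d} \ge 2^m \ge \Pi_{\mF_i}(m)$ for all $m \ge 1$. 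Consequently the argument remains valid for every $m$ without an additional hypothesis on sample size.
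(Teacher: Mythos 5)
Your proof is correct and follows essentially the same route as the paper's: bound $\Pi_\mF(m)$ by $\prod_{i=1}^N \Pi_{\mF_i}(m)$ via Equation (\ref{eq:size of f and s}), then apply Lemma \ref{LEMMA:REGTOCLASS} and Sauer--Shelah to each factor and relax $(em/(10d_i))^{10d_i}$ to $(em)^{10d_i}$. Your extra care about the small-$m$ regime (where $\Pi_{\mF_i}(m)\le 2^m\le (em)^{10d_i}$) is a minor refinement the paper glosses over, but it does not change the argument.
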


Before we claim for uniform convergence in probability, we must relate the number of distinct profiles (i.e., elements in $\mW \cap \mS$) in the empirical game under $\mS$ and the size of $\mF \cap \mS$. The following claim shows that the size of $\mF \cap \mS$ is an upper bound on the size of $\mW \cap \mS$.
\begin{claim}
\label{claim:size and f}
It holds that $\Pi_{\mW}(m) \leq \Pi_{\mF}(m)$.
\end{claim}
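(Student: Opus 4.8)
The plan is to exploit the structural fact that the payoff vector $w(z;\bl h)$ is a fixed, deterministic function of the indicator vector $\mI(z,\bl h)$, with no further dependence on $\bl h$ (in particular, none on the actual real-valued predictions $h_i(x)$). Concretely, I would define a map $\phi:\{0,1\}^N \to \{1,\frac12,\dots,\frac1N,0\}^N$ by
\[
\phi(\bl b)_i = \begin{cases} 0 & b_i = 0,\\[2pt] \frac{1}{\sum_{i'=1}^N b_{i'}} & b_i = 1,\end{cases}
\]
so that, directly from the definition of $w_i$ in the model, $w(z;\bl h) = \phi\bigl(\mI(z,\bl h)\bigr)$ for every $z\in\mZ$ and every profile $\bl h \in \mH$.

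Next, I would fix an arbitrary sample $\mS = (z_1,\dots,z_m)\in \mZ^m$ and lift $\phi$ to act coordinate-wise on $m$-tuples, writing $\Phi:\bigl(\{0,1\}^N\bigr)^m \to \bigl(\{1,\frac12,\dots,\frac1N,0\}^N\bigr)^m$ for the map $\Phi(\bl b_1,\dots,\bl b_m) = (\phi(\bl b_1),\dots,\phi(\bl b_m))$. Applying the identity $w(z;\bl h)=\phi(\mI(z,\bl h))$ in each coordinate gives, for every $\bl h \in \mH$,
\[
\bigl(w(z_1;\bl h),\dots,w(z_m;\bl h)\bigr) = \Phi\bigl(\mI(z_1,\bl h),\dots,\mI(z_m,\bl h)\bigr).
\]
In the language of restrictions this says precisely that $\mW\cap\mS = \Phi(\mF\cap\mS)$: the restriction of $\mW$ to $\mS$ is the image, under the single fixed map $\Phi$, of the restriction of $\mF$ to $\mS$.

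Since the image of a finite set under any function has cardinality at most that of the set, I would conclude $\abs{\mW\cap\mS} = \abs{\Phi(\mF\cap\mS)} \leq \abs{\mF\cap\mS} \leq \Pi_\mF(m)$. Because this bound holds for every $\mS\in\mZ^m$, taking the maximum over $\mS$ on the left yields $\Pi_{\mW}(m) = \max_{\mS}\abs{\mW\cap\mS} \leq \Pi_\mF(m)$, which is the claim.

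The argument is short, and the only point requiring care — the step I would single out as the crux — is the very first claim that $\phi$ is well defined, i.e., that $w_i(z;\bl h)$ depends on $\bl h$ solely through $\mI(z,\bl h)$. This is immediate from the piecewise definition of $w_i$, but it is exactly the structural feature that powers the whole reduction: distinct profiles inducing the same indicator pattern on $\mS$ collapse to the same element of $\mW\cap\mS$, so passing to the image can only merge points and never create new ones. This is what lets the combinatorial control of the binary class $\mF$ obtained in Lemmas \ref{LEMMA:REGTOCLASS} and \ref{lemma:growthsumvc} be transferred to $\mW$, whose own growth function is the quantity ultimately needed for uniform convergence.
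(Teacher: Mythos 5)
Your proposal is correct and is essentially the paper's own argument viewed from the opposite direction: the paper exhibits an injection $M(\bl v)=\norm{\bl v}_0\,\bl v$ from $\mW\cap\mS$ into $\mF\cap\mS$, whose inverse is exactly your map $\phi$, and you instead observe that $\mW\cap\mS$ is the image of $\mF\cap\mS$ under $\phi$ applied coordinate-wise. Both rest on the same key fact that $w(z;\bl h)$ depends on $\bl h$ only through $\mI(z,\bl h)$, so the proofs coincide in substance.
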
 
Next, we bound the probability of a player $i$'s payoff being ``too far'' from its empirical counterpart. The proof of Lemma \ref{lemma:uniconvergenceoneplayer} below goes along the path of \citet{vapnik2015uniform}. Since in our case $\mF$ is not a binary function class, a few modifications are needed. 
\begin{lemma} 
\label{lemma:uniconvergenceoneplayer}
Let $m$ be a positive integer, and let $\epsilon>0$. It holds that
\[
\Pr_{\mS\sim \mD^m}\left(\exists \bl{h} : \abs{\pi_i(\bl h)-\pi_i^{\mS}(\bl h)} \geq  \epsilon \right) \leq 4 \Pi_{\mW}(2m)e^{-\frac{\epsilon^2 m}{8}}.
\]
\end{lemma}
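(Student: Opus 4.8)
The plan is to adapt the classical Vapnik--Chervonenkis symmetrization argument to our vector-valued setting. Fix the player $i$ and recall that $\pi_i(\bl h) = \E_{z\sim\mD}[w_i(z;\bl h)]$ while $\pi_i^\mS(\bl h) = \frac1m\sum_{j=1}^m w_i(z_j;\bl h)$, so that we are bounding the probability that some empirical average deviates from its true mean by at least $\epsilon$, uniformly over all profiles $\bl h$. The crucial point that makes uniform convergence tractable is that, although $\mH$ itself is an infinite (real-valued) class, the relevant quantity $w_i(z;\bl h)\in\{0,\tfrac1N,\dots,1\}$ depends on $\bl h$ only through the indicator vector $\mI(z,\bl h)$; hence the number of distinct behaviors on a sample is controlled by $\Pi_{\mW}(m)$, which by Claim~\ref{claim:size and f} is at most $\Pi_{\mF}(m)$.

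The main steps, in order, are as follows. First I would introduce a \emph{ghost sample} $\mS' = (z_1',\dots,z_m')$ drawn independently from $\mD^m$ and show the standard first symmetrization inequality: for $m$ large enough relative to $\epsilon$ (specifically once $\tfrac{2}{m\epsilon^2}\le\tfrac12$, handled by a Chebyshev/variance bound on the bounded random variable $w_i$),
\[
\Pr_{\mS}\!\left(\sup_{\bl h}\abs{\pi_i(\bl h)-\pi_i^\mS(\bl h)}\ge\epsilon\right)
\le 2\,\Pr_{\mS,\mS'}\!\left(\sup_{\bl h}\abs{\pi_i^{\mS}(\bl h)-\pi_i^{\mS'}(\bl h)}\ge\tfrac{\epsilon}{2}\right).
\]
This replaces the unknown population mean by a second empirical average, so that the event now depends on the $2m$ sample points only through the finitely many distinct values of $(w_i(z_j;\bl h))_{j}$ and $(w_i(z_j';\bl h))_j$. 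Second, I would apply a permutation (swapping) argument: conditioning on the combined multiset $\mS\cup\mS'$ of $2m$ points and averaging over uniformly random sign swaps $\sigma_j\in\{+1,-1\}$ of the $j$-th pair, the deviation $\pi_i^\mS-\pi_i^{\mS'}$ becomes $\tfrac1m\sum_j \sigma_j\big(w_i(z_j;\bl h)-w_i(z_j';\bl h)\big)$. Third, on this conditioned space the number of distinct profile-induced value patterns is at most $\Pi_{\mW}(2m)\le\Pi_{\mF}(2m)$, so I take a union bound over these finitely many patterns and apply Hoeffding's inequality to each fixed pattern: each $\sigma_j$-term is a bounded ($[-1,1]$) independent mean-zero variable, giving a tail of $2e^{-2(\epsilon/2)^2 m^2/( \text{sum of squared ranges})}$, which simplifies to the stated $e^{-\epsilon^2 m/8}$ after the factor-of-$4$ bookkeeping from the two symmetrization constants.

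The step requiring the most care is the transition in the \emph{second} and \emph{third} steps, because $\mF$ (and hence $\mW$) is \emph{not} a binary-valued class but vector-valued in $\{0,\tfrac1N,\dots,1\}^N$; this is precisely the ``few modifications'' the excerpt flags relative to \citet{vapnik2015uniform}. The resolution is that we never need to shatter real values: for the single coordinate $w_i$, the set of vectors $(w_i(z_1;\bl h),\dots,w_i(z_{2m};\bl h))$ realizable over $\bl h\in\mH$ is exactly a subset of $\mW\cap(\mS\cup\mS')$, whose cardinality is bounded by $\Pi_{\mW}(2m)$. Thus the union bound is taken over at most $\Pi_{\mW}(2m)$ distinct behaviors, and Hoeffding applies coordinate-wise to $w_i$ with range $1$. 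I expect the accounting of the symmetrization constant and the boundedness range of $w_i$ (values in $[0,1]$, differences in $[-1,1]$) to be the fiddly part, but it is routine once the combinatorial cardinality $\Pi_{\mW}(2m)$ is substituted; the genuine conceptual work has already been front-loaded into Lemmas~\ref{LEMMA:REGTOCLASS}, \ref{lemma:growthsumvc} and Claim~\ref{claim:size and f}, which guarantee this cardinality is polynomial in $m$.
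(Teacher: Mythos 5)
Your proposal follows exactly the same route as the paper's proof: ghost-sample symmetrization (with the Chebyshev bound handling the factor of $2$ and the $m\ge 2/\epsilon^2$ regime, below which the stated bound is vacuous), swapping permutations, a union bound over the at most $\Pi_{\mW}(2m)$ distinct realizable patterns of $w_i$ on the doubled sample, and Hoeffding with range-$[-1,1]$ differences yielding $2e^{-m\epsilon^2/8}$, for a total of $4\,\Pi_{\mW}(2m)e^{-\epsilon^2 m/8}$. The constants and the key observation---that the union bound is over patterns of the bounded, finitely-valued $w_i$ rather than over a shattered binary class---match the paper's argument, so the proposal is correct and essentially identical to the paper's proof.
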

The following theorem bounds the probability that any player $i$ has a difference greater than $\epsilon$ between its payoff and its empirical payoff (over the selection of a sample $\mS$), uniformly over all possible strategy profiles. This is done by simply applying the union bound on the bound already obtained in Lemma \ref{lemma:uniconvergenceoneplayer}.
\begin{theorem}
\label{thm:unionbound}
Let $m$ be a positive integer, and let $\epsilon>0$. It holds that
\begin{equation}
\label{eq:inthmunionbound}
\Pr_{\mS\sim \mD^m}\left(\exists i\in \mN : \sup_{\bl h \in \mH}\abs{\pi_i(\bl h)-\pi_i^{\mS}(\bl h)} \geq  \epsilon \right) \leq 4N (2em)^{10\sum_{i=1}^N d_i}e^{-\frac{\epsilon^2 m}{8}}.
\end{equation}
\end{theorem}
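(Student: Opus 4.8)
The plan is to obtain the bound directly from the per-player estimate of Lemma \ref{lemma:uniconvergenceoneplayer}, combined with the growth-function inequalities of Claim \ref{claim:size and f} and Lemma \ref{lemma:growthsumvc}, using nothing more than a union bound over the $N$ players. First I would rewrite the event in \eqref{eq:inthmunionbound} as a union over players,
\[
\left\{\exists i\in \mN : \sup_{\bl h \in \mH}\abs{\pi_i(\bl h)-\pi_i^{\mS}(\bl h)} \geq \epsilon\right\} = \bigcup_{i\in \mN}\left\{\sup_{\bl h \in \mH}\abs{\pi_i(\bl h)-\pi_i^{\mS}(\bl h)} \geq \epsilon\right\},
\]
so that by countable subadditivity of the probability measure the left-hand side of \eqref{eq:inthmunionbound} is at most $\sum_{i\in \mN}\Pr_{\mS\sim \mD^m}\big(\sup_{\bl h \in \mH}\abs{\pi_i(\bl h)-\pi_i^{\mS}(\bl h)}\geq \epsilon\big)$.

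Next I would identify each summand with the quantity already controlled by Lemma \ref{lemma:uniconvergenceoneplayer}. The only point needing care is that Lemma \ref{lemma:uniconvergenceoneplayer} is phrased with an existential quantifier ($\exists \bl h$) rather than a supremum. Here I would appeal to the observation made just before \eqref{eq:w not h}, namely that $\mW \cap \mS$ is finite: since the empirical payoff $\pi_i^{\mS}(\bl h)$ depends on $\bl h$ only through $w(\cdot;\bl h)$ restricted to $\mS$, the supremum over the infinite profile space $\mH$ reduces to a supremum over the finitely many empirical behaviors in $\mW \cap \mS$, which is exactly what lets the symmetrization underlying Lemma \ref{lemma:uniconvergenceoneplayer} go through. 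Under the usual measurability conventions the sup-event and the $\exists$-event coincide (they can differ only on the boundary case where the supremum equals $\epsilon$ but is not attained, which is handled in the standard way by a limiting argument in $\epsilon$). Consequently each summand is bounded by $4\,\Pi_{\mW}(2m)\,e^{-\epsilon^2 m/8}$, and summing over the $N$ players produces the factor $N$.

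Finally I would replace $\Pi_{\mW}(2m)$ by an explicit polynomial. By Claim \ref{claim:size and f} we have $\Pi_{\mW}(2m)\leq \Pi_{\mF}(2m)$, and evaluating Lemma \ref{lemma:growthsumvc} at the argument $2m$ gives $\Pi_{\mF}(2m)\leq (e\cdot 2m)^{10\sum_{i=1}^N d_i}=(2em)^{10\sum_{i=1}^N d_i}$. Chaining these inequalities yields
\[
\Pr_{\mS\sim \mD^m}\left(\exists i\in \mN : \sup_{\bl h \in \mH}\abs{\pi_i(\bl h)-\pi_i^{\mS}(\bl h)} \geq \epsilon\right) \leq 4N\,(2em)^{10\sum_{i=1}^N d_i}\,e^{-\frac{\epsilon^2 m}{8}},
\]
which is precisely \eqref{eq:inthmunionbound}. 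Since every ingredient is already established, I expect no genuine obstacle; the argument is essentially bookkeeping. The two things to keep straight are (i) feeding the doubled sample size $2m$ into the growth-function bound (the $2m$ is inherited from the symmetrization step in Lemma \ref{lemma:uniconvergenceoneplayer}, not an artifact), and (ii) the passage from the supremum over the infinite space $\mH$ to a finite event, justified by the finiteness of $\mW \cap \mS$.
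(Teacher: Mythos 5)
Your proposal is correct and matches the paper's proof exactly: a union bound over the $N$ players applied to Lemma \ref{lemma:uniconvergenceoneplayer}, followed by substituting $\Pi_{\mW}(2m)\leq\Pi_{\mF}(2m)\leq(2em)^{10\sum_{i=1}^N d_i}$ via Claim \ref{claim:size and f} and Lemma \ref{lemma:growthsumvc}. The extra care you take with the supremum-versus-existential phrasing and the doubled sample size is sound but not something the paper dwells on.
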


\subsection{Existence of a PNE in empirical games}
\label{subsec:dyn}
In the previous subsection we bounded the probability of a payoff vector being too far from its counterpart in the empirical game. Notice, however, that this result implies nothing about the existence of a PNE or an approximate PNE: for a fixed $\mS$, even if $\sup_{\bl h \in \mH}\abs{\pi_i(\bl h)-\pi_i^{\mS}(\bl h)} < \epsilon$ holds for every $i$, a player may still have a beneficial deviation. Therefore, the results of the previous subsection are only meaningful if we show that there exists a PNE in the empirical game, which is the goal of this subsection.
We prove this existence using the notion of {\em potential games} \cite{monderer1996potential}.

A non-cooperative game is called a potential game if there exists a function $\Phi:\mH \rightarrow\mathbb R$ such that for every strategy profile $\bl h=(h_1,\dots,h_N) \in \mH$ and every $i\in \mN$, whenever player $i$ switches from $h_i$ to a strategy $h_i'\in \mH_i$, the change in her payoff function equals the change in the potential function, i.e.,
\[
\Phi (h'_{{i}},\bl h_{{-i}})-\Phi (h_{{i}},\bl h_{{-i}})=\pi_{{i}}(h'_{{i}},\bl h_{{-i}})-\pi_{{i}}(h_{{i}},\bl h_{{-i}}).
\]
\begin{theorem}[\cite{monderer1996potential,rosenthal1973class}] 
\label{thm:pot}
Every potential game with a finite strategy space possesses at least one PNE.
\end{theorem}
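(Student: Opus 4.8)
The plan is to exhibit a concrete profile that must be a PNE, namely any profile maximizing the potential function $\Phi$. First I would observe that since the strategy space $\mH = \times_{i=1}^N \mH_i$ is finite, the function $\Phi:\mH \to \R$ takes only finitely many values and therefore attains its maximum; fix $\bl h^\star \in \argmax_{\bl h \in \mH} \Phi(\bl h)$. The claim will be that $\bl h^\star$ is a PNE.

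I would establish this by contradiction. Suppose some player $i$ has a beneficial deviation, i.e.\ a strategy $h_i' \in \mH_i$ with $\pi_i(h_i', \bl h^\star_{-i}) > \pi_i(\bl h^\star)$. Applying the defining identity of the potential function to the unilateral switch from $h_i^\star$ to $h_i'$ with the remaining coordinates held fixed at $\bl h^\star_{-i}$, the change in player $i$'s payoff equals the change in the potential:
\[
\Phi(h_i', \bl h^\star_{-i}) - \Phi(\bl h^\star) = \pi_i(h_i', \bl h^\star_{-i}) - \pi_i(\bl h^\star) > 0.
\]
Hence $\Phi(h_i', \bl h^\star_{-i}) > \Phi(\bl h^\star)$, contradicting the maximality of $\bl h^\star$. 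Thus no player has a beneficial deviation at $\bl h^\star$.

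Finally I would reconcile ``no beneficial deviation'' with the paper's best-response formulation of PNE. Because each $\mH_i$ is finite, the supremum $\sup_{h_i \in \mH_i}\pi_i(h_i, \bl h^\star_{-i})$ is in fact attained by some element of $\mH_i$; the absence of any strictly improving deviation then yields $\pi_i(\bl h^\star) \geq \sup_{h_i \in \mH_i}\pi_i(h_i, \bl h^\star_{-i})$ for every $i$, so every player plays a best response under $\bl h^\star$, which is exactly the PNE condition. I do not anticipate a genuine technical obstacle: the entire argument rests on the elementary fact that finiteness forces $\Phi$ to attain a maximum, after which the potential's defining identity mechanically converts that maximizer into an equilibrium. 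The single point to keep honest is the finiteness hypothesis itself---without it the $\argmax$ need not exist---which is precisely why the theorem restricts attention to finite strategy spaces, and why its application later must go through the \emph{empirical} game rather than the original one.
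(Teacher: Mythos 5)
Your proof is correct and is exactly the classical argument of \citet{monderer1996potential} (and implicitly \citet{rosenthal1973class}): a finite strategy space forces the potential to attain a maximum, and the defining identity of the potential converts any maximizer into a PNE. The paper does not reprove this theorem---it cites it and uses it as a black box---so there is nothing further to compare; your reconciliation of ``no beneficial deviation'' with the supremum-based definition of PNE via finiteness of each $\mH_i$ is a correct and appropriately careful finishing touch.
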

Obviously, in our setting the strategy space of a game instance $\langle \mZ,\mD,\mN,\mH ,\pi \rangle$ is typically infinite.
Infinite potential games may also possess a PNE (as discussed in \cite{monderer1996potential}), but in our case the distribution $\mD$ is approximated from samples and the empirical game is finite, so no stronger claims are needed. Lemma \ref{lemma:PNEexistence} below shows that every empirical game is a potential game.
\begin{lemma}
\label{lemma:PNEexistence}
Every empirical game  $\langle \mZ,\mS \sim \mD^m ,\mN,\mH ,\pi \rangle$ has a potential function.
\end{lemma}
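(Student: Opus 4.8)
The plan is to exhibit an explicit \emph{exact} potential function by recognizing the empirical game as a reward-sharing congestion game in the spirit of \cite{rosenthal1973class}, in which the $m$ sample points play the role of resources. Each player $i$'s strategy $h_i$ determines the set of points on which she is accurate, $\{ j : \mI(z_j,h_i)=1 \}$, and on each such point the unit reward is split equally among the $k_j(\bl h)\defeq \sum_{i'=1}^N \mI(z_j,h_{i'})$ players who are accurate there. The crucial structural observation is that $w_i(z_j;\bl h)$ depends on $h_i$ only through whether $i$ is accurate on $z_j$, and on $\bl h_{-i}$ only through the number of \emph{other} accurate players; this is precisely the player-symmetry that makes a congestion-style potential available.

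Guided by the Rosenthal potential, I would define
\[
\Phi(\bl h) = \frac{1}{m}\sum_{j=1}^m \sum_{\ell=1}^{k_j(\bl h)} \frac{1}{\ell},
\]
i.e., the sample-average of the harmonic number of the number of players accurate on each point. The task is then to verify the defining identity $\Phi(h_i',\bl h_{-i})-\Phi(h_i,\bl h_{-i}) = \pi_i^\mS(h_i',\bl h_{-i})-\pi_i^\mS(h_i,\bl h_{-i})$ for every player $i$, every profile $\bl h$, and every deviation $h_i \mapsto h_i'$.

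Since both sides are sample-averages over the $m$ points, it suffices to check the identity point by point. Fix a point $z_j$ and let $\kappa = \sum_{i'\neq i}\mI(z_j,h_{i'})$ be the number of other players accurate on it, which is unaffected by $i$'s deviation. I would split into four cases according to whether $i$ is accurate on $z_j$ before and after the deviation. When $i$'s accuracy status is unchanged, both the per-point payoff $w_i(z_j;\cdot)$ and the count $k_j$ are unchanged, so both differences vanish. When $i$ gains accuracy on $z_j$, her per-point payoff changes by $\tfrac{1}{\kappa+1}-0$ while $k_j$ rises from $\kappa$ to $\kappa+1$, so the potential contribution changes by the telescoping difference $\tfrac{1}{\kappa+1}$; the symmetric computation handles the case where $i$ loses accuracy. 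In every case the per-point change in $w_i$ equals the per-point change in the harmonic term, and summing over $j$ and dividing by $m$ yields the claimed identity.

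The only genuine subtlety is guessing the correct form of the potential: the harmonic telescoping is exactly what matches the equal-sharing rule, after which the verification reduces to routine four-case bookkeeping. I do not expect any real obstacle beyond this, and in particular the finiteness of the empirical game is not used in establishing the potential identity itself — it is needed only to invoke Theorem \ref{thm:pot} afterwards to conclude existence of a PNE.
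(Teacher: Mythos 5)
Your proposal is correct and matches the paper's proof: the paper defines exactly the same Rosenthal-style potential $\Phi(\bl h)=\frac{1}{m}\sum_{j=1}^m\sum_{k=1}^{N(z_j;\bl h)}\frac{1}{k}$ and verifies the same telescoping harmonic identity. Your per-point four-case bookkeeping is just a slightly more explicit way of carrying out the paper's algebraic verification, and your remark that finiteness is only needed to invoke Theorem \ref{thm:pot} afterwards is also consistent with the paper.
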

As an immediate result of Theorem \ref{thm:pot} and Lemma \ref{lemma:PNEexistence}, 
\begin{corollary}
\label{corollary:PNEexistence}
Every empirical game  $\langle \mZ,\mS \sim \mD^m ,\mN,\mH,\pi \rangle$ possesses at least one PNE.
\end{corollary}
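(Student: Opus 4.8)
The corollary is the straightforward conjunction of Lemma \ref{lemma:PNEexistence} and Theorem \ref{thm:pot}: once the empirical game is known to admit a potential function $\Phi$, any profile maximizing $\Phi$ over the strategy set is a PNE, since a beneficial unilateral deviation would raise $\Phi$ above its maximum. The plan is therefore to (i) take the potential function guaranteed by Lemma \ref{lemma:PNEexistence} and (ii) invoke Theorem \ref{thm:pot}. The only genuine point requiring care is that Theorem \ref{thm:pot} is stated for games with a \emph{finite} strategy space, whereas each $\mH_i$ is typically infinite; this is the real obstacle, and everything else is immediate.

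To bridge this gap, I would observe that the empirical payoff $\pi_i^\mS(\bl h)=\frac{1}{m}\sum_{j=1}^m w_i(z_j;\bl h)$ depends on $\bl h$ only through the accuracy indicators on the sample: each $w_i(z_j;\bl h)$ is a function solely of $(\mI(z_j,h_{i'}))_{i'\in\mN}$. Hence $\pi_i^\mS(\bl h)$ is determined by the tuple of membership vectors $(\mI(z_1,h_{i'}),\dots,\mI(z_m,h_{i'}))_{i'\in\mN}$, i.e., by the element of $\prod_{i'\in\mN}(\mF_{i'}\cap\mS)$ that $\bl h$ induces. Each restriction $\mF_i\cap\mS$ is a finite subset of $\{0,1\}^m$ (its size is controlled via Lemma \ref{LEMMA:REGTOCLASS} and Sauer-Shelah). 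The empirical game is therefore strategically equivalent to the finite game in which player $i$'s strategy set is the finite collection $\mF_i\cap\mS$ of realizable coverage patterns, equipped with the same payoffs.

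This finite quotient game inherits the potential function of Lemma \ref{lemma:PNEexistence} (now read as a function of coverage patterns), so Theorem \ref{thm:pot} applies and yields a PNE $\bl b^\star\in\prod_{i\in\mN}(\mF_i\cap\mS)$ of the finite game. I would then lift $\bl b^\star$ back to $\mH$: for each $i$ pick any $h_i^\star\in\mH_i$ realizing $b_i^\star$ on $\mS$, giving a profile $\bl h^\star$. Any unilateral deviation $h_i'\in\mH_i$ induces some pattern $b_i'\in\mF_i\cap\mS$ with $\pi_i^\mS(h_i',\bl h^\star_{-i})=\pi_i^\mS(b_i',\bl b^\star_{-i})\le\pi_i^\mS(\bl b^\star)=\pi_i^\mS(\bl h^\star)$, where the inequality is exactly the equilibrium condition satisfied by $\bl b^\star$ in the finite game. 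Hence no player can strictly improve, and $\bl h^\star$ is a PNE of the empirical game.

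The substantive content sits in Lemma \ref{lemma:PNEexistence}, which I would establish by exhibiting the fair-sharing (congestion-game) potential $\Phi(\bl h)=\frac{1}{m}\sum_{j=1}^m\sum_{k=1}^{n_j(\bl h)}\frac{1}{k}$, where $n_j(\bl h)=\sum_{i=1}^N\mI(z_j,h_i)$ counts the players accurate on $z_j$. A short telescoping check verifies the potential property: when player $i$ toggles her accuracy on a point $z_j$, changing $n_j$ by $\pm1$, the induced change in $\Phi$ is $\pm\frac{1}{m}\frac{1}{n_j}$, which matches exactly the change in $\pi_i^\mS$ contributed by that point (a covered point split among $n_j$ players pays each $\frac{1}{m}\frac{1}{n_j}$), while points whose coverage by $i$ is unchanged contribute zero net change to both sides. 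Granting the lemma, the coverage-pattern reduction above is the only step needed to discharge the finite-strategy hypothesis of Theorem \ref{thm:pot} and conclude.
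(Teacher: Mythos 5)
Your proposal is correct and follows essentially the same route as the paper: the corollary is obtained by combining Lemma \ref{lemma:PNEexistence} (the fair-sharing potential $\Phi(\bl h)=\frac{1}{m}\sum_{j}\sum_{k=1}^{N(z_j;\bl h)}\frac{1}{k}$) with Theorem \ref{thm:pot}, the paper likewise discharging the finite-strategy hypothesis by noting that the empirical game is effectively finite because payoffs depend on $\bl h$ only through the restriction to $\mS$. Your explicit quotient-and-lift argument via the coverage patterns in $\prod_i(\mF_i\cap\mS)$ just makes rigorous what the paper asserts in one line, so there is nothing to correct.
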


After establishing the existence of a PNE in the empirical game, we are interested in the rate with which it can be ``learnt''. More formally, we are interested in the convergence rate of the dynamics between the players, where at every step one player deviates to one of her $\epsilon$-better responses. Such dynamics do not necessarily converge in general games, but do converge in potential games. By examining the specific potential function in our class of (empirical) games, we can also bound the number of steps until convergence.

\begin{lemma}
\label{lemma:betterconverge}
Let  $\langle \mZ,\mS \sim \mD^m ,\mN,\mH,\pi \rangle$ be any empirical game instance. After at most $O\left(\frac{\log N}{\epsilon}  \right)$ iterations of any $\epsilon$-better-response dynamics, an $\epsilon$-PNE of the empirical game is obtained.
\end{lemma}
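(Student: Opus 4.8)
The plan is to combine the potential-game structure from Lemma~\ref{lemma:PNEexistence} with an explicit bound on the \emph{range} of the potential. The driving observation is that an exact potential increases by at least $\epsilon$ at every step of an $\epsilon$-better-response dynamics, so a potential of small range forces termination after few steps. First I would pin down the specific potential used in the proof of Lemma~\ref{lemma:PNEexistence}. For a profile $\bl h$ and a point $z_j \in \mS$, let $n_j(\bl h) = \sum_{i=1}^N \mI(z_j, h_i)$ be the number of players accurate on $z_j$, and let $H_n = \sum_{k=1}^n \frac{1}{k}$ denote the $n$-th harmonic number (with $H_0 = 0$). The natural Rosenthal-type potential for this reward-sharing game is
\[
\Phi(\bl h) = \frac{1}{m} \sum_{j=1}^m H_{n_j(\bl h)}.
\]

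Next I would verify that $\Phi$ is an \emph{exact} potential for $\pi^\mS$. When player $i$ switches from $h_i$ to $h_i'$, consider the contribution of a single point $z_j$. If $i$'s accuracy status on $z_j$ is unchanged, then $n_j$ is unchanged and the point contributes nothing to either $\pi_i^\mS$ or $\Phi$. If $i$ goes from inaccurate to accurate on $z_j$, and $n$ players other than $i$ are accurate there, then her payoff from this point rises by $\frac{1}{m}\cdot\frac{1}{n+1}$ while $\Phi$ rises by $\frac{1}{m}(H_{n+1}-H_n)=\frac{1}{m}\cdot\frac{1}{n+1}$; the reverse move is symmetric. Summing over the points of $\mS$ yields $\Phi(h_i',\bl h_{-i}) - \Phi(h_i,\bl h_{-i}) = \pi_i^\mS(h_i',\bl h_{-i}) - \pi_i^\mS(h_i,\bl h_{-i})$, which is exactly the exact-potential condition.

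I would then apply this to the dynamics and bound its length. At each step the deviating player gains at least $\epsilon$ in empirical payoff, so by the exact-potential property $\Phi$ increases by at least $\epsilon$; in particular $\Phi$ is strictly increasing, so the dynamics cannot revisit a profile and must halt, and it halts precisely when no player has an $\epsilon$-better response, i.e., at an $\epsilon$-PNE. To count the steps, note that $0 \le n_j(\bl h) \le N$ for every $j$, hence $0 = H_0 \le H_{n_j(\bl h)} \le H_N$ and therefore $0 \le \Phi(\bl h) \le H_N$ for every profile. Since each step raises $\Phi$ by at least $\epsilon$ and the total rise is at most $H_N$, the number of steps is at most $H_N/\epsilon = O\!\left(\frac{\log N}{\epsilon}\right)$, using $H_N = \Theta(\log N)$.

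The argument is short, and its only real content is identifying the harmonic potential and bounding its range by $H_N$. The point to be careful about — and the only place the proof could go wrong — is that the potential must be \emph{exact} rather than merely ordinal: an $\epsilon$-better response guarantees a payoff gain of at least $\epsilon$, and this translates into a potential gain of at least $\epsilon$ only because the payoff and potential differences coincide exactly. Given that, the bound $H_N = \Theta(\log N)$ delivers the stated convergence rate immediately.
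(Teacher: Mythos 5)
Your proposal is correct and follows essentially the same route as the paper: the paper's proof likewise invokes the harmonic (Rosenthal-type) potential $\Phi(\bl h)=\frac{1}{m}\sum_{j=1}^m\sum_{k=1}^{N(z_j;\bl h)}\frac{1}{k}$ established as an exact potential in Lemma~\ref{lemma:PNEexistence}, notes that each $\epsilon$-better response increases $\Phi$ by at least $\epsilon$, and bounds $\Phi\leq \ln N+1$ to obtain at most $\frac{\ln N+1}{\epsilon}$ iterations. Your emphasis on exactness (rather than ordinality) of the potential is precisely the right point of care, and your verification of the exact-potential property mirrors the computation in the paper's proof of Lemma~\ref{lemma:PNEexistence}.
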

\textit{Remark:} Due to Equation (\ref{eq:empirical payoff}), we know that any beneficial deviation improves the payoff of the deviating player by at least $\frac{1}{mN}$; hence, Lemma \ref{lemma:betterconverge} suggests that an (exact) empirical PNE could be obtained using at most $O\left(mN\log N  \right)$ iterations of any $\frac{1}{mN}$-better-response dynamics.

\subsection{Learning $\epsilon$-PNE with high probability}
\label{subsec:alg}
In this subsection we leverage the results of the previous Subsections \ref{subsec:sample} and \ref{subsec:dyn} to devise Algorithm \ref{algorithm:betterres}, which runs in polynomial time and returns an approximate equilibrium with high probability. More precisely, we show that Algorithm \ref{algorithm:betterres} returns an $\epsilon$-PNE with probability of at least $1-\delta$, and has time complexity of $\text{poly}\left(\frac{1}{\epsilon},m,N,\log\left( \frac{1}{\delta}\right),d \right)$. % 
As in the previous subsections, we denote $d=\sum_{i=1}^N d_i $.

First, we bound the required sample size. Using standard algebraic manipulations on Equation (\ref{eq:inthmunionbound}), we obtain the following.
\begin{lemma}
\label{lemma:deltaandm}
Let $\epsilon,\delta \in (0,1)$, and let
\begin{equation}
\label{eq:lemma:deltaandm}
m \geq \frac{320d}{\epsilon^2} \log\left( \frac{160d}{\epsilon^2} \right) +\frac{160d\log(2e)}{\epsilon^2}+\frac{16}{\epsilon^2}\log\left( \frac{4N}{\delta} \right).
\end{equation}
With probability of at least $1-\delta$ over all possible samples $\mS$ of size $m$, it holds that
\[
\forall i\in \mN : \sup_{\bl h \in \mH}\abs{\pi_i(\bl h)-\pi_i^{\mS}(\bl h)} <  \epsilon.
\]
\end{lemma}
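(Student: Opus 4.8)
The plan is to start from the tail bound of Theorem \ref{thm:unionbound} and simply invert it: I want to choose $m$ large enough that the right-hand side $4N(2em)^{10d}e^{-\epsilon^2 m/8}$ is at most $\delta$, since then the complementary event, namely $\sup_{\bl h\in\mH}\abs{\pi_i(\bl h)-\pi_i^{\mS}(\bl h)}<\epsilon$ for every $i\in\mN$, has probability at least $1-\delta$, which is exactly the conclusion. Taking natural logarithms (here $\log$ denotes $\ln$, consistent with the $e$-based growth function and exponential), this requirement is equivalent to
\[
\frac{\epsilon^2 m}{8} \ge \log\left(\frac{4N}{\delta}\right) + 10d\log(2em).
\]
All the difficulty is concentrated in the term $10d\log(2em)$, in which the unknown $m$ appears both linearly on the left and inside a logarithm on the right; the rest is bookkeeping.

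To decouple the two contributions I would split the budget $\frac{\epsilon^2 m}{8}$ into two equal halves and demand that each half dominate one summand:
\[
\frac{\epsilon^2 m}{16}\ge \log\left(\frac{4N}{\delta}\right) \qquad\text{and}\qquad \frac{\epsilon^2 m}{16}\ge 10d\log(2em).
\]
Adding these recovers the displayed condition, so it suffices to make $m$ large enough for both. The first is immediate and holds whenever $m\ge \frac{16}{\epsilon^2}\log(4N/\delta)$, which is exactly the last term of the claimed bound.

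The crux is the second inequality, which after setting $A=\frac{160d}{\epsilon^2}$ reads $m\ge A\log(2em)$. I would handle this self-referential inequality by studying $g(m)=m-A\log(2em)=m-A\log(2e)-A\log m$. Since $g'(m)=1-A/m$, the function $g$ attains its unique minimum at $m=A$ and is strictly increasing for $m>A$; hence it is enough to exhibit a single threshold $m_0\ge A$ with $g(m_0)\ge 0$, because $g$ then remains nonnegative for all $m\ge m_0$. I would take $m_0=2A\log A+A\log(2e)$, which satisfies $m_0\ge A$ trivially once $A\ge 160$. Verifying $g(m_0)\ge 0$ reduces, after cancelling $A\log(2e)$ and dividing by $A$, to the elementary inequality $A\ge 2\log A+\log(2e)$, and this holds comfortably for $A\ge 160$; indeed $A=\frac{160d}{\epsilon^2}\ge 160$ since $d\ge 1$ and $\epsilon\in(0,1)$. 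This $m_0$ contributes precisely the first two terms $\frac{320d}{\epsilon^2}\log(\frac{160d}{\epsilon^2})+\frac{160d\log(2e)}{\epsilon^2}$ of the claimed bound.

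Finally, taking $m$ to be at least the sum of the two thresholds guarantees both half-inequalities at once, because the feasible set of the second is the interval $[m_0,\infty)$ by the monotonicity argument; this reproduces exactly the stated lower bound on $m$ and completes the proof. The only genuinely nontrivial step is the transcendental inequality of the previous paragraph, and the monotonicity of $g$ past its unique minimum is what lets me reduce it to checking a single elementary inequality at the threshold rather than for every $m$.
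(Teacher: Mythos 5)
Your proof is correct, and it follows the same high-level route as the paper: invert the tail bound of Theorem \ref{thm:unionbound} and resolve the resulting self-referential condition $\frac{\epsilon^2 m}{8}\ge \log(4N/\delta)+10d\log(2em)$. Where you diverge is in how that last step is handled. The paper rearranges to $m\ge \frac{80d}{\epsilon^2}\log m+\frac{80d\log(2e)}{\epsilon^2}+\frac{8}{\epsilon^2}\log\frac{4N}{\delta}$ and then invokes, as a black box, the standard claim from Shalev-Shwartz and Ben-David that $m\ge 4a\log(2a)+2b$ implies $m\ge a\log m+b$, with $a=\frac{80d}{\epsilon^2}$. You instead split the exponent budget into two equal halves, dispose of the $\log(4N/\delta)$ half immediately, and solve $m\ge A\log(2em)$ with $A=\frac{160d}{\epsilon^2}$ by hand: showing $g(m)=m-A\log(2em)$ is increasing past its unique minimum at $m=A$ and checking nonnegativity at the single threshold $m_0=2A\log A+A\log(2e)$, which reduces to $A\ge 2\log A+\log(2e)$. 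The two routes land on exactly the same numerical bound --- unsurprisingly, since the halving-plus-monotonicity argument is essentially the proof of the cited claim --- so what your version buys is self-containedness at the cost of a page of calculus; the paper's version buys brevity at the cost of an external reference. The only cosmetic caveat is that your verification implicitly uses $d\ge 1$ to get $A\ge 160$, an assumption the paper also makes tacitly (e.g., in its use of Sauer--Shelah), so this is not a gap.
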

Given $\epsilon,\delta$, we denote by $m_{\epsilon,\delta}$ the minimal integer $m$ satisfying Equation (\ref{eq:lemma:deltaandm}). Lemma \ref{lemma:deltaandm} shows that $m_{\epsilon,\delta}=O\left(\frac{d}{\epsilon^2}\log\left( \frac{d}{\epsilon^2} \right) + \frac{1}{\epsilon^2}\log\left( \frac{N}{\delta} \right) \right)$ are enough samples to have all empirical payoff vectors $\epsilon$-close to their theoretic counterpart coordinate-wise (i.e., in the $L^\infty$ norm), with a probability of at least $1-\delta$. 

Next, we bind an approximate PNE in the empirical game with an approximate PNE in the (actual) game.
\begin{lemma}
\label{lemma:empiseq}
Let  $m \geq m_{\frac{\epsilon}{4},\delta}$ and let $\bl h$ be an $\frac{\epsilon}{2}$-PNE in  $\langle \mZ,\mS \sim \mD^m ,\mN,\mH,\pi \rangle$. Then $\bl h$ is an $\epsilon$-PNE with probability of at least $1-\delta$. 
\end{lemma}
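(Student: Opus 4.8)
The plan is to deduce this from the uniform-convergence guarantee of Lemma \ref{lemma:deltaandm}, invoked at precision $\frac{\epsilon}{4}$. Since $m \geq m_{\frac{\epsilon}{4},\delta}$, that lemma tells us that with probability at least $1-\delta$ over the draw of $\mS$, the event
\[
E:\qquad \forall i\in \mN:\ \sup_{\bl h \in \mH}\abs{\pi_i(\bl h)-\pi_i^{\mS}(\bl h)} < \frac{\epsilon}{4}
\]
holds. The entire argument then takes place deterministically on $E$: I would show that, conditioned on $E$, any $\frac{\epsilon}{2}$-PNE of the empirical game is automatically an $\epsilon$-PNE of the actual game. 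Combining this implication with $\Pr(E)\geq 1-\delta$ yields the claim.

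The core is a three-term chaining (``sandwich'') argument. Fix a player $i\in\mN$ and an arbitrary deviation $h_i'\in \mH_i$. First, applying $E$ to the profile $(h_i',\bl h_{-i})$ gives $\pi_i(h_i',\bl h_{-i}) \leq \pi_i^{\mS}(h_i',\bl h_{-i}) + \frac{\epsilon}{4}$. Next, since $\bl h$ is an $\frac{\epsilon}{2}$-PNE of the empirical game, the definition of $\epsilon$-PNE gives $\pi_i^{\mS}(h_i',\bl h_{-i}) \leq \sup_{h_i''\in\mH_i}\pi_i^{\mS}(h_i'',\bl h_{-i}) \leq \pi_i^{\mS}(\bl h) + \frac{\epsilon}{2}$. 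Finally, applying $E$ once more, this time to the profile $\bl h$ itself, gives $\pi_i^{\mS}(\bl h) \leq \pi_i(\bl h) + \frac{\epsilon}{4}$. Stitching the three bounds together yields
\[
\pi_i(h_i',\bl h_{-i}) \leq \pi_i(\bl h) + \tfrac{\epsilon}{4} + \tfrac{\epsilon}{2} + \tfrac{\epsilon}{4} = \pi_i(\bl h) + \epsilon,
\]
and taking the supremum over $h_i'\in\mH_i$ gives exactly the $\epsilon$-PNE condition $\pi_i(\bl h)\geq \sup_{h_i'\in\mH_i}\pi_i(h_i',\bl h_{-i})-\epsilon$ for player $i$. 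Since $i$ was arbitrary, $\bl h$ is an $\epsilon$-PNE.

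The only real subtlety—and the point worth stating explicitly—is that the uniform-convergence event $E$ must be applied not merely to the candidate profile $\bl h$, but also to every deviation profile $(h_i',\bl h_{-i})$. This is legitimate precisely because the bound in Lemma \ref{lemma:deltaandm} is a supremum over \emph{all} of $\mH$, and each $(h_i',\bl h_{-i})$ is again a member of $\mH$; thus the same event $E$ controls the approximation error at both the candidate profile and all of its unilateral deviations simultaneously. The budget split $\frac{\epsilon}{4}+\frac{\epsilon}{2}+\frac{\epsilon}{4}=\epsilon$ is what dictates the choice of $m_{\frac{\epsilon}{4},\delta}$ in the hypothesis, and no further quantitative work beyond this accounting is needed.
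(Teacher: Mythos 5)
Your proof is correct and follows essentially the same route as the paper's: the same three-term decomposition with the $\frac{\epsilon}{4}+\frac{\epsilon}{2}+\frac{\epsilon}{4}$ budget, applying the uniform-convergence event (valid since it is a supremum over all of $\mH$, hence covers the sample-dependent profile $\bl h$ and every unilateral deviation) from Lemma \ref{lemma:deltaandm} at precision $\frac{\epsilon}{4}$. The only cosmetic difference is that you condition on the good event up front while the paper argues via the contrapositive inside a probability bound; the substance is identical.
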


Recall that Lemma \ref{lemma:betterconverge} ensures that every $O\left(\frac{\log N}{\epsilon} \right)$ iterations of any $\epsilon$-better-response dynamics must converge to an $\epsilon$-PNE of the empirical game. In each such iteration, a player calls her approximate better-response oracle, which is assumed to run in $\text{poly}(\frac{1}{\epsilon},m,N)$ time. Altogether, given $\epsilon$ and $\delta$, Algorithm \ref{algorithm:betterres} runs in $\text{poly}\left(\frac{1}{\epsilon},N,\log\left( \frac{1}{\delta}\right),d \right)$ time, and returns an $\epsilon$-PNE with probability of at least $1-\delta$.
\begin{corollary}
Let $\langle \mZ,\mD ,\mN,\mH,\pi \rangle$ be a game, $\epsilon,\delta\in (0,1)$, and let $\bl h$ be the output of Algorithm \ref{algorithm:betterres}. With probability of at least $1-\delta$, $\bl h$ is an $\epsilon$-PNE.
\end{corollary}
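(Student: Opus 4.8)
The plan is to chain together the three main ingredients established above — uniform convergence (Lemma~\ref{lemma:deltaandm}), convergence of better-response dynamics in the empirical game (Lemma~\ref{lemma:betterconverge}), and the transfer of an approximate empirical equilibrium to the population game (Lemma~\ref{lemma:empiseq}) — after fixing the parameters with which Algorithm~\ref{algorithm:betterres} instantiates them. Concretely, I would have the algorithm draw a sample $\mS$ of size $m = m_{\epsilon/4,\delta}$ and then run an $\frac{\epsilon}{2}$-better-response dynamics starting from an arbitrary profile.

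First I would argue that the dynamics terminates at an $\frac{\epsilon}{2}$-PNE of the empirical game. Since every empirical game is a potential game (Lemma~\ref{lemma:PNEexistence}) and hence possesses a PNE (Corollary~\ref{corollary:PNEexistence}), Lemma~\ref{lemma:betterconverge} applies: after $O\!\left(\frac{\log N}{\epsilon}\right)$ iterations any $\frac{\epsilon}{2}$-better-response dynamics reaches a profile $\bl h$ from which no player has an $\frac{\epsilon}{2}$-better response in the empirical game, i.e.\ an $\frac{\epsilon}{2}$-PNE of $\langle \mZ, \mS\sim\mD^m, \mN, \mH, \pi\rangle$. This step is purely deterministic given $\mS$, so it introduces no failure probability.

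Next I would invoke the transfer lemma. Because $m = m_{\epsilon/4,\delta}$ and $\bl h$ is an $\frac{\epsilon}{2}$-PNE of the empirical game, Lemma~\ref{lemma:empiseq} yields directly that $\bl h$ is an $\epsilon$-PNE of the population game with probability at least $1-\delta$. It is worth spelling out where the $\epsilon$-budget is spent, as this is the only place any care is needed: on the good event of Lemma~\ref{lemma:deltaandm} the empirical and true payoffs differ by less than $\frac{\epsilon}{4}$ uniformly over $\bl h\in\mH$, so for any player $i$ and any deviation $h_i'$ we get $\pi_i(\bl h) \geq \pi_i^{\mS}(\bl h) - \frac{\epsilon}{4} \geq \pi_i^{\mS}(h_i',\bl h_{-i}) - \frac{\epsilon}{2} - \frac{\epsilon}{4} \geq \pi_i(h_i',\bl h_{-i}) - \epsilon$, where the first and last inequalities are uniform convergence and the middle one is the empirical $\frac{\epsilon}{2}$-equilibrium condition. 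The three error terms $\frac{\epsilon}{4} + \frac{\epsilon}{2} + \frac{\epsilon}{4}$ sum to exactly $\epsilon$, which is precisely why the sampling accuracy is taken to be $\frac{\epsilon}{4}$ and the dynamics granularity $\frac{\epsilon}{2}$.

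Finally, I would note that the claimed $1-\delta$ confidence comes from a single source of randomness, the draw of $\mS$, and hence requires no additional union bound beyond the one already folded into Lemma~\ref{lemma:deltaandm}: the convergence of the dynamics is deterministic, and Lemma~\ref{lemma:empiseq} already carries the $1-\delta$ guarantee inherited from that same sampling event. The main (and really the only) obstacle is bookkeeping — verifying that the parameter choices (sampling accuracy $\frac{\epsilon}{4}$, dynamics granularity $\frac{\epsilon}{2}$) simultaneously meet the hypotheses of Lemmas~\ref{lemma:betterconverge} and~\ref{lemma:empiseq} — rather than any new technical difficulty, since all the heavy lifting has already been done in the preceding subsections.
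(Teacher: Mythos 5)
Your proposal is correct and follows essentially the same route as the paper, which derives the corollary by chaining Lemma~\ref{lemma:betterconverge} (deterministic convergence of the $\frac{\epsilon}{2}$-better-response dynamics to an empirical $\frac{\epsilon}{2}$-PNE) with Lemma~\ref{lemma:empiseq} (transfer to the population game under the sample size $m_{\epsilon/4,\delta}$). Your explicit $\frac{\epsilon}{4}+\frac{\epsilon}{2}+\frac{\epsilon}{4}$ accounting matches the paper's proof of Lemma~\ref{lemma:empiseq} exactly.
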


\begin{algorithm}[t]
\DontPrintSemicolon
\caption{ Approximate PNE w.h.p. via better-response dynamics\label{algorithm:betterres}}
\KwIn{$\delta, \epsilon \in (0,1)$
}
\KwOut{a strategy profile $\bl h$}
set $m = m_{\frac{\epsilon}{4},\delta}$ \tcp*{the minimal integer $m$ satisfying Equation (\ref{eq:lemma:deltaandm})}
sample $\mS$ from $\mD^m$\;
execute any $\frac{\epsilon}{2}$-better-response dynamics on the empirical game corresponding to $\mS$ until convergence, and obtain a strategy profile $\bl h$ that is an empirical $\frac{\epsilon}{2}$-PNE \label{algorithm:betterres:dynamics}\;
\Return{$\bl h$}
\end{algorithm}
\section{Linear Strategy Space}
\label{sec:bestresponsregression}
Linear regression is extensively studied in the statistics/machine learning literature, and hence we find it the most appropriate use-case to examine in our competitive setting. In this section, we shall assume $\mH_i$ is the linear strategy space for an arbitrary  $i\in \mN$. Formally, $\mH_i$ is the function class of all linear mappings from $\mX\subseteq\R^n$ to $\R$. For ease of notation, we shall treat every $h_i\in \mH_i$ as a vector in $\R^n$ that corresponds to the mapping $x \mapsto \inprod{h_i}{x}$ for $x\in \R^n$. \footnote{We shall keep using non-bold notation for single strategies and input vectors, to distinguish between those and strategy profiles.} Under this representation, the class $\mH_i$ can be referred to as $\mathbb R^{n}$. We shall further assume that the dimension of the input $n$ is constant.

We first present an oracle for the best linear response. We describe informally how a player $i$ can compute a best response against any $\bl h_{-i}$, and then devise an algorithm that is inspired by the one proposed in \citet{porat2017best}. Algorithm \ref{alg:blr} finds a best linear response against \textit{any} strategies selected by the other players, not necessarily linear strategies, and would apply even if the other players employ more sophisticated algorithms, e.g., neural networks. Algorithm \ref{alg:blr} runs in polynomial time in the size of the sample $m$. Afterwards, we apply Algorithm \ref{algorithm:betterres} along with the best response oracle implemented in Algorithm \ref{alg:blr} on several synthetic datasets in a two-player game with linear strategies, and examine the equilibrium structure and properties.

\subsection{A Polynomial Best-Response Oracle}

Before we present the formal algorithm, we give a high-level intuition of how it works. Let $\mS=(z_j)_{j=1}^m$ be an arbitrary sample, $i\in \mN$ be an arbitrary player index, and $\bl h_{-i}$ be arbitrary strategies of all players but $i$. Denote by $M$ the mapping $h_i \mapsto \bl w$ such that 
\[
M(h_i) = \left(  w_i(z_1,\bl h_{-i},h_i),\dots,w_i(z_m,\bl h_{-i},h_i)\right).
\]
Observe that $M$ need not be onto, as not all vectors in $\{0,\frac{1}{N},\frac{2}{N},\dots 1\}^m$ may be in the image of $M$, nor one-to-one, as $M(h_i)=M(h_i')$ for $h_i,h_i'\in \mH_i, h_i\neq h_i'$ may occur.
In addition, the payoff of player $i$ under $h_i$ is precisely $\norm{M(h_i)}_1$. 
While the size of the target set of $M$ is exponential in $m$, Lemma \ref{LEMMA:REGTOCLASS} implies that the size of the image of $M$ is only polynomial in $m$. The idea is therefore to do an exhaustive search over the image of $M$ in polynomial time, find a vector $\bl w^* \in \argmax_{\bl w \in \text{Image}(M)} \norm{\bl w}_1$, and then find $h_i^*\in \mH_i$ such that $M(h_i^*)=\bl w^*$. 

Next, we introduce an auxiliary decision problem that facilitates the search over the image of $M$,  the Partial Vector Feasibility problem (PVF). For every strategy $h_i\in H_i$ and a point $(x,y,t)$, we focus on the term $\abs{\inprod{h_i}{x}-y}$. In case it is less or equal to $t$, player $i$ will get (some fraction) of this point. Otherwise, she will not, and two cases can occur: either $\inprod{h_i}{x}-y$ is greater than $t$ (above $t$), or $\inprod{h_i}{x}-y$ is less than $-t$ (below $t$). A PVF problem is a decision problem that asks whether a partial set of the points can be classified into a particular assignment of these three cases; it is formally defined in Algorithm \ref{alg:pvf}.

\begin{algorithm}[t]
\caption{\textsc{Partial Vector Feasibility (PVF)} \label{alg:pvf}}
\KwIn{a sequence of examples $\mS=(x_j,y_j,t_j)_{j=1}^m$, and a vector $\bl v\in \{1,a,b,0\}^m $.	
}
\KwOut{a strategy $h_i \in \R^n$ satisfying\\
	\begin{itemize}
	\item if $v_j=1$, then $\abs{ \inprod{h_i}{x_j} -y_j}\leq t_j$ \tcp{$\mI(z_j,h_i)=1$}
	\item if $v_j=a$, then $\inprod{h_i}{x_j} -y_j>t_j$ \tcp{above}
	\item if $v_j=b$, then $\inprod{h_i}{x_j} -y_j<-t_j$ \tcp{below}
	\item if $v_j=0$, there is no constraint for the $j$'th point
	\end{itemize}
	if such exists, and $\phi$ otherwise.
}
\end{algorithm}
Note that PVF is solvable in polynomial time via linear programming. We are now ready to present the Best Linear Response (BLR) algorithm. BLR has three main steps:
\begin{enumerate}
\item Compute the possible gain from each point in the sample. The possible gain from each point is a function of $\bl h_{-i}$.
\item Find all feasible subsets of points player $i$ can satisfy concurrently (i.e., vectors in $\mF_i \cap \mS$, where $\mF_i$ is as defined in Equation (\ref{eq:defoff})).
\item Return a strategy that achieves the highest possible payoff.
\end{enumerate}

The first step consists of a straightforward computation. To motivate the second step, notice that if $\mI(z_j,h_i)=0$, then either $\inprod{h_i}{x_j} -y_j>t_j$ or $\inprod{h_i}{x_j} -y_j <-t_j$ holds. Therefore, we identify all vectors $\bl v=(v_1,\dots v_m)\in \{1,a,b\}^m$ such that there exists $h_i \in \mH_i$ and $v_j = 1$ if $\mI(z_j,h_i)=1$; $v_j=a$ if $\inprod{h_i}{x_j} -y_j>t_j$ (``above''); and $v_j = b$ if $\inprod{h_i}{x_j} -y_j<-t_j$ (``below''). This is done by recursively partitioning $\{1,a,b\}^m$, where in each iteration we consider only a prefix of the entries, while the suffix is masked with ``0'' (see $\textsc{PVF}$). 
At the end of this step, we have fully identified the set $\mF_i \cap \mS$, since every vector in $\{1,a,b\}^m$ can be mapped to a vector in $\mF_i \cap \mS$ by replacing  $a,b$ with $0$. Due to the bijection between $\mF_i \cap \mS$ and the image of $M$ (see Subsection \ref{subsec:sample}), we have essentially discovered the image of $M$; thus, we can pick a feasible vector that corresponds to the highest payoff. Finally, we find a strategy that attains the highest payoff by invoking \textsc{PVF} for the last time. The above discussion is formulated via Algorithm \ref{alg:blr}.

\begin{algorithm}[t]
\DontPrintSemicolon
\KwIn{$\mS=(x_j,y_j,t_j)_{j=1}^m$, $\bl h_{-i}$}
\KwOut{A best response to $\bl h_{-i}$}
for every $j\in[m]$, $w_j \gets \frac{1}{\sum_{i'\neq i} \mI(z_j,h_{i'})+1} $ \label{alg:rnwithoracle:linew} \tcp*{player $i$ can get up to $w_j$ for $z_j$}
$\boldsymbol {v} \gets \{0 \}^m$  \tcp*{ $\boldsymbol {v} =(v_1,v_2,\dots,v_m)$}
$\mR_0 \gets \left\{\boldsymbol {v} \right\}$  \;
\For {$j=1$ to $m$} { \label{alg:rnwithoracle:lineifor}
	$\mR_j \gets \emptyset$ \;
	\For {$\boldsymbol {v} \in \mR_{j-1}$} { \label{alg:rnwithoracle:linefor}
		\For {$\alpha \in \{1,a,b\}$} { \label{alg:rnwithoracle:lineforallv}
			\If {$\textsc{PVF} \left(\mS,(\boldsymbol {v}_{-j},\alpha )\right) \neq \phi$}{ \label{alg:rnwithoracle:lineoracle}
			add $(\boldsymbol {v}_{-j},\alpha )$ to $\mR_j$  \tcp*{$(\boldsymbol {v}_{-j},\alpha )=(v_1,\dots v_{j-1},\alpha,v_{j+1},\dots,v_m)$} 
			}
		}
	}
}
$\boldsymbol {v}^* \gets  \argmax_{\boldsymbol {v} \in \mR_m} \sum_{j=1}^m w_j\ind_{v_j=1}$ \label{alg:rnwithoracle:argmax}\;
\Return \textsc{PVF}$(\mS,\bl v^*)$ \; \label{alg:rnwithoracle:return}
\caption{\textsc{Best Linear Response} (BLR) \label{alg:blr}}
\end{algorithm}

Recall that we assume that the input space $n$ is constant. 
\citet{porat2017best} show that the second step (the for loop in line \ref{alg:rnwithoracle:lineifor}) is done in time $\text{poly}(m)$, and that $\mR_m$ is of polynomial size. The first step (line \ref{alg:rnwithoracle:linew}) and the last step (lines \ref{alg:rnwithoracle:argmax} and \ref{alg:rnwithoracle:return}) are clearly executed in polynomial time. Overall, Algorithm \ref{alg:blr} runs in polynomial time. In addition, since it considers all possible distinct strategies using $\mF_i \cap \mS$ and takes the one with the highest payoff, it indeed returns the best linear response with respect to $\bl h_{-i}$ in the empirical game. To sum,
\begin{theorem}
\label{thm:best linear response}
Let $\mH_i$ be the linear strategy space, $\mH_{-i}$ be the product of any strategy spaces, $\bl h_{-i}$ be an element from $\mH_{-i}$  and let $\mS$ be a sample of size $m$. Algorithm \ref{alg:blr} finds $h_i^*$ such that $h_i^* \in \argmax_{h_i\in \mH_i} \pi_i^\mS (\bl h)$ in time $\text{poly}(m)$.
\end{theorem}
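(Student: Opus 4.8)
The plan is to prove the two assertions of Theorem~\ref{thm:best linear response} separately: correctness (Algorithm~\ref{alg:blr} outputs a maximizer of $\pi_i^\mS(\bl h)$ over $h_i\in\mH_i$) and running time ($\text{poly}(m)$). First I would reduce payoff maximization to a purely combinatorial search over accuracy patterns. By Equation~(\ref{eq:empirical payoff}), $\pi_i^\mS(\bl h)=\frac1m\sum_{j=1}^m w_i(z_j;\bl h)$, and since $w_i(z_j;\bl h)=w_j\,\mI(z_j,h_i)$ with $w_j=\frac{1}{\sum_{i'\neq i}\mI(z_j,h_{i'})+1}$ determined entirely by the fixed $\bl h_{-i}$ (this is exactly line~\ref{alg:rnwithoracle:linew}, and it is all the algorithm ever needs to know about $\mH_{-i}$, which is why the theorem allows $\mH_{-i}$ to be a product of arbitrary spaces), maximizing $\pi_i^\mS$ over $h_i$ is equivalent to finding $h_i$ that is accurate on a set of points of maximum total weight, i.e. maximizing $\sum_j w_j\,\mI(z_j,h_i)$. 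The achievable accuracy patterns are precisely $\mF_i\cap\mS$ for $\mF_i$ as in Equation~(\ref{eq:defoff}), so it suffices to optimize the weighted $1$-count over $\mF_i\cap\mS$ and then recover a witness $h_i$.

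Next I would verify that the main loop computes exactly the feasible three-way labellings. By induction on $j$ I would show $\mR_j=\{\bl v\in\{1,a,b\}^j\times\{0\}^{m-j}:\textsc{PVF}(\mS,\bl v)\neq\phi\}$: the base $\mR_0=\{\bl 0\}$ carries no active constraint, and for the step the key monotonicity is that setting a coordinate to $0$ merely deletes its constraint from the linear program solved by \textsc{PVF}, so a feasible prefix-$j$ labelling always has a feasible prefix-$(j{-}1)$ restriction; hence every feasible prefix-$j$ labelling arises from some $\bl v\in\mR_{j-1}$ by fixing coordinate $j$ to a value in $\{1,a,b\}$, which the inner loop tests and \textsc{PVF} certifies. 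Thus $\mR_m$ is the set of all feasible labellings. The observation tying this to payoffs is that for any $h_i\in\mH_i$ the labelling $\bl v(h_i)$ — with $v_j=1$ when $\mI(z_j,h_i)=1$ (the closed-slab case, which \textsc{PVF} encodes with $\leq$) and $v_j\in\{a,b\}$ for the two strictly violated cases — is feasible, hence lies in $\mR_m$, and collapsing $a,b\mapsto0$ sends it to the accuracy pattern of $h_i$. Therefore the collapse map is a surjection from $\mR_m$ onto $\mF_i\cap\mS$ that preserves $\sum_j w_j\ind_{v_j=1}$, so $\max_{\bl v\in\mR_m}\sum_j w_j\ind_{v_j=1}=m\cdot\max_{h_i\in\mH_i}\pi_i^\mS(\bl h)$. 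It follows that the maximizer $\bl v^*$ of line~\ref{alg:rnwithoracle:argmax} is optimal and the final call $\textsc{PVF}(\mS,\bl v^*)$ on line~\ref{alg:rnwithoracle:return} returns a witness $h_i^*\in\argmax_{h_i\in\mH_i}\pi_i^\mS(\bl h)$.

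Finally I would bound the running time. Line~\ref{alg:rnwithoracle:linew} computes the $w_j$ with $O(mN)$ predictions, and each invocation of \textsc{PVF} is a linear program in $n$ variables with $O(m)$ constraints, solvable in $\text{poly}(m)$ time since $n$ is constant. The crux is that $\mR_m$ stays polynomially small: each feasible $\{1,a,b\}$ labelling corresponds to a distinct face of the arrangement of the $2m$ hyperplanes $\inprod{h}{x_j}=y_j\pm t_j$ in $\R^n$, and an arrangement of $O(m)$ hyperplanes in constant dimension has $O(m^n)=\text{poly}(m)$ faces — this is precisely the polynomial bound on $\abs{\mR_m}$ established by \citet{porat2017best}, and it is consistent with bounding the coarser object $\mF_i\cap\mS$ via Lemma~\ref{LEMMA:REGTOCLASS} and Sauer--Shelah, which gives $\abs{\mF_i\cap\mS}\leq(em)^{10d_i}$ with $d_i=\pdim(\mH_i)$ constant. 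Since $\abs{\mR_j}\leq\abs{\mR_m}$, each of the $m$ outer iterations makes $O(\abs{\mR_m})$ calls to \textsc{PVF}, so the loop, the argmax on line~\ref{alg:rnwithoracle:argmax}, and the closing \textsc{PVF} all cost $\text{poly}(m)$, giving the claim. I expect the main obstacle to be exactly this polynomial bound on $\abs{\mR_m}$, since it simultaneously guarantees that the enumeration terminates in polynomial time and that every accuracy pattern is reachable; given Lemma~\ref{LEMMA:REGTOCLASS} and the enumeration analysis of \citet{porat2017best}, the remaining work is the routine inductive verification of the loop and the surjectivity of the collapse map argued above.
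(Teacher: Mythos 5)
Your proposal is correct and follows exactly the route the paper intends: the paper itself omits the proof, stating it is identical to Theorem 2 of \citet{porat2017best}, and the informal discussion preceding the theorem (reduction to the weights $w_j$, enumeration of $\mF_i\cap\mS$ via the feasible $\{1,a,b\}$ labellings, polynomial size of $\mR_m$ in constant dimension) is precisely what you have fleshed out, including the loop invariant and the surjective collapse map onto $\mF_i\cap\mS$. The only cosmetic imprecision is that a feasible labelling's region may consist of several faces of the hyperplane arrangement rather than exactly one, but since distinct labellings occupy disjoint regions the injection into the set of faces still gives $\abs{\mR_m}=O(m^n)=\text{poly}(m)$, so the argument stands.
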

We note that the best response need not be unique, as Algorithm \ref{alg:blr} employs two tie breakers. The first tie breaker is the selection of a feasible vector $\bl v^*$ with the highest payoff in Line \ref{alg:rnwithoracle:argmax}, since several such vectors may exist. Second, the call $\textsc{PVF}(\mS,\bl v^*)$ in Line \ref{alg:rnwithoracle:return} returns a single strategy, albeit the linear program it solves probably possesses infinitely many optimal solutions.

\subsection{Simulations with Synthetic Data in $\R^2$}
\label{SUBSEC:SIM}
The goal of this subsection is to visualize equilibrium profiles in several scenarios, and analyze the implications of different tolerance levels on the equilibrium structure. In service of that, we focus on two-player games with $\mX \times \mY=\R^2$, and the linear strategy space for both players. Recall that an equilibrium profile can be obtained by executing Algorithm \ref{algorithm:betterres}, where Algorithm \ref{alg:blr} plays the role of a best response (which is a special case of $\epsilon$-better response). Since our main purpose is to discuss the structure of equilibria, we focus solely on the empirical games and completely neglect the generalization analysis. Note, however, that the results obtained in the previous sections suggest that the payoffs in equilibrium will be similar to their empirical counterparts.

We now describe the high-level details of the synthetic datasets we employ. We let the instance $x$ be uniformly distributed in a close segment. In addition, we consider four conditions for the relations between the instance $x$ and the value $y$:
\begin{enumerate}
\item Linear: for every instance $x$, its value $y$ is a linear function of $x$, plus an additive normal noise. Namely, $y=ax+b+\epsilon$ for $a,b\in \R$ and $\epsilon\sim Normal(0,1)$. \footnote{$a$ and $b$ are not related to the arbitrary numbers used in Algorithm \ref{alg:pvf}.}
\item V-shape: the value of every $x$ is determined by $y=a\abs{x-x_0}+b+\epsilon$ for $x_0,a,b\in \R$ and $\epsilon\sim Normal(0,1)$.
\item X-shape: for every instance $x$, the value $y$ is distributed as follows:
\[
y=
\begin{cases}
a_1x+b_1+\epsilon & \text{w.p. }\frac{1}{2}\\
a_2x+b_2+\epsilon & \text{w.p. }\frac{1}{2}
\end{cases},
\]
where $a_1,b_1,a_2,b_2 \in \R$ and $\epsilon\sim Normal(0,1)$.
\item Piecewise: for every $x$, the value $y$ is given by
\[
y=
\begin{cases}
a_1x+b_1+\epsilon & \text{if }x\leq x_0\\
a_2x+b_2+\epsilon & \text{else}
\end{cases},
\]
where $x_0,a_1,b_1,a_2,b_2 \in \R$ and $\epsilon\sim Normal(0,1)$.
\end{enumerate}
\begin{figure*}[t]
\centering
\includegraphics[scale=0.6]{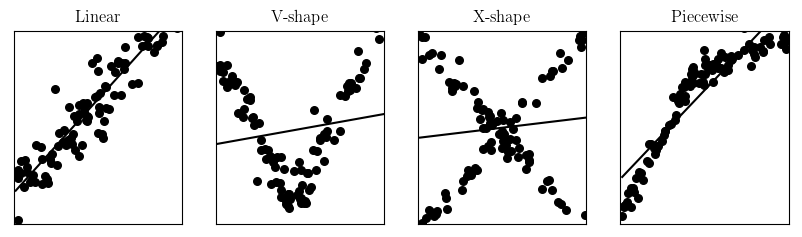}
\caption{Visualization of the synthetic datasets we employ. Each sub-figure represents one condition, and the points are realized values drawn from the distribution defined on $\mX \times \mY$ per condition. The line is the linear least squares, a canonical prediction algorithm in statistics/machine learning.
\label{fig:mse-two-dim}}
\end{figure*}
We have so far described the distribution over $\mX \times \mY$ for all four conditions. A visualization of generated datasets according to these conditions appears in Figure \ref{fig:mse-two-dim}. To complete a game description, we need to specify the tolerance levels. Different tolerance levels alter the equilibrium behavior, and thus are interesting to explore. In these simulations, we use a constant tolerance level, i.e., given a set $(x_j,y_j)_{j=1}^m$ associated with one of the four conditions above and a value $t\in \R$, the dataset we analyze is $(x_j,y_j,t)_{j=1}^m$. By doing so, we do not exploit the full generality of the model and algorithms. Nevertheless, since the tolerance levels are non-obvious to simulate, we feel that this is justified. To examine the extent to which the payoffs are sensitive to the tolerance levels, we consider three levels of tolerance, which vary across the four conditions above.

We generated\footnote{The specifics are elaborated in {\ifnum\Includeappendix=1{Section \ref{sec:simulation specifics}}\else{in the appendix}\fi}. Our code is available at:  \url{https://github.com/omerbp/Regression-Equilibrium}.} 12 datasets, composed of the four conditions and three tolerance levels for each condition. For each dataset, we run Algorithm \ref{algorithm:betterres}, with Algorithm \ref{alg:blr} playing the role of a best response oracle, until convergence and observed the obtained equilibrium profile.

Our findings are reported in Figure \ref{fig:two-dim}.
Each column of sub-figures corresponds to one of the four conditions above. The rows correspond to tolerance levels in a decreasing order. For instance, the top leftmost sub-figure represents the linear condition with high tolerance level. The colored lines are the strategies of the players in a PNE profile, as obtained by Algorithm \ref{algorithm:betterres}. A point is circled-red (square-blue) if only the dashed-red (solid-blue) player produces an accurate prediction for it. Alternatively, a point is star-green if both players produce an accurate prediction for it, and x-black if both predictions are inaccurate (i.e., point $j$ is black $\abs{h_i(x_j)-y_j}>t$ for $i\in \{1,2\}$ and the corresponding tolerance level $t$). 

Observe that the players' strategies are distinguishable, even for the linear condition. 
Noticeably, the number of star-green points decreases with the tolerance level. It symbolizes that the number of points obtaining an accurate prediction from both players decreases with the tolerance. In contrast, the number of x-black points increases with the tolerance, as less points get an accurate prediction when the tolerance is low. Moreover, notice that a high tolerance allows the players to gain almost all points, but when the tolerance is low roughly half of the points obtain an inaccurate prediction. As for player payoffs, the two players get almost the same payoff in all datasets, albeit theoretically the payoffs may greatly differ; this is due to the symmetric nature of the datasets.
 
\begin{figure*}[t]
\centering
\includegraphics[scale=0.65]{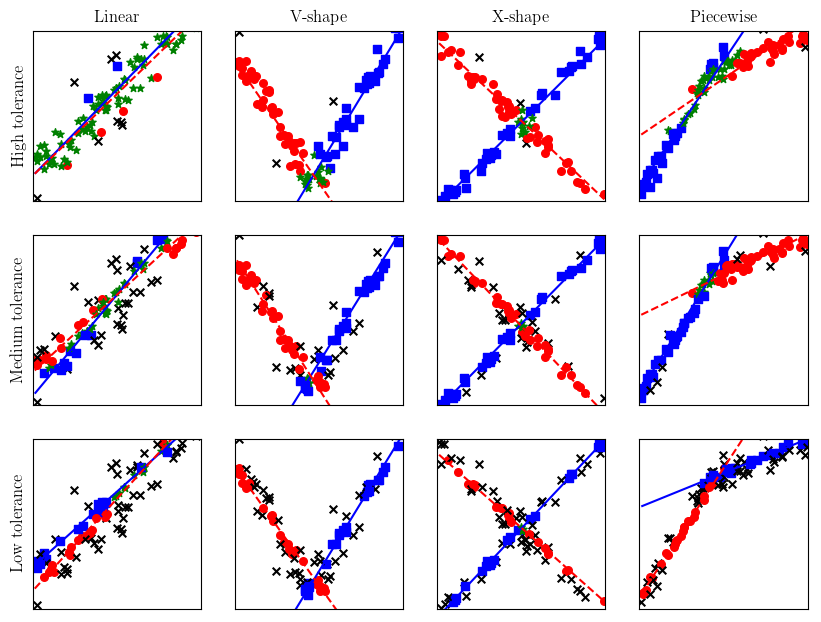}
\caption{Visualization of equilibrium profiles per condition and tolerance level. Each column of sub-figures is associated with one of the four conditions, and each row with one tolerance level. For example, the bottom rightmost sub-figure corresponds to the piecewise condition with low tolerance level. The players' strategies are the lines, and each point in the dataset is colored and patterned according to the player who obtains it. Namely, a point is circle-red (square-blue) if only the dashed-red (solid-blue) player predicts it accurately, in terms of the tolerance. A point is star-green if both players predicted it accurately, and x-black if both predictions are inaccurate. 
\label{fig:two-dim}}
\end{figure*}

\section{Extensions}
\label{sec:extensions}
In this section we answer two interesting questions. First, in Subsection \ref{subsec:exampleandinf}, we deal with an infinite pseudo-dimension. While we show learning may not occur, similarly to other machine learning applications (see, e.g., \cite{shalev2014understanding}), we obtain this result even if the players fully observe $\mD$. This conclusion is in sheer contrast to classical results, since when $\mD$ is observed there is essentially nothing to learn. In our setting, however, equilibrium strategy is another object that has to be learned. Second, we analyze a variant of the model that is aligned with the payoff function of \citet{porat2017best} and \citet{immorlica2011dueling}, Namely, where each user (associated with an example) grants one monetary unit to the player who provides the most accurate prediction. Unlike the existence we proved in Subsection \ref{subsec:dyn} for the main model, under this variant an empirical PNE may not exist.

\subsection{Learnability in games with infinite dimension}
\label{subsec:exampleandinf}

While Lemma \ref{LEMMA:REGTOCLASS} upper bounds the VC dimension of  $\mF_i$, the following Claim \ref{claim:flowerbound} puts a lower bound on it. Claim \ref{claim:flowerbound} implies that if $\pdim(\mH_i)$ is infinite, so is $\vc(\mF_i)$.
\begin{claim} 
\label{claim:flowerbound}
$\vc(\mF_i) \geq \pdim(\mH_i)$.
\end{claim}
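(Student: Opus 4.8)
The plan is to prove the bound directly, by exhibiting a set of $d \defeq \pdim(\mH_i)$ examples in $\mZ$ that is shattered by $\mF_i$. By definition of the pseudo-dimension there is a tuple $\mS = (x_1, \dots, x_d) \in \mX^d$ and a threshold vector $\bl r = (r_1, \dots, r_d) \in \R^d$ such that for every sign pattern $\bl b = (b_1, \dots, b_d) \in \{-1, 1\}^d$ there is a witness $h_{\bl b} \in \mH_i$ with $\sign(h_{\bl b}(x_j) - r_j) = b_j$ for all $j \in [d]$. I would fix one such witness for each of the $2^d$ patterns. The goal is then to turn the pseudo-shattering of the $x_j$'s (a family of one-sided threshold conditions) into a genuine shattering of suitably chosen points $z_j = (x_j, y_j, t_j)$ by $\mF_i$, recalling that $\mI(z_j, h) = 1$ exactly when $h(x_j) \in [y_j - t_j, y_j + t_j]$.

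The central step is to reconcile the one-sided nature of pseudo-shattering with the two-sided nature of $\mI$. The key observation is that, since there are only finitely many ($2^d$) witnesses, the numbers $\{h_{\bl b}(x_j)\}_{\bl b}$ form a finite set for each fixed $j$; in particular $M_j \defeq \max\{h_{\bl b}(x_j) : b_j = +1\}$ is a well-defined real number, and $M_j > r_j$ since every such witness satisfies $h_{\bl b}(x_j) > r_j$. I would then define the example $z_j = (x_j, y_j, t_j)$ so that the acceptance interval $[y_j - t_j, y_j + t_j]$ equals $[r_j, M_j]$, i.e. $y_j = \tfrac{1}{2}(r_j + M_j)$ and $t_j = \tfrac{1}{2}(M_j - r_j) \geq 0$. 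This presumes the required label and tolerance values lie in $\mY$ and $\mT$; as is standard for such lower bounds, I take the domains rich enough to contain them.

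It then remains to verify that $\{z_1, \dots, z_d\}$ is shattered. Fix an arbitrary target subset $A \subseteq [d]$ and take the pattern $\bl b$ with $b_j = +1$ iff $j \in A$, together with its witness $h_{\bl b}$. If $j \in A$ then $b_j = +1$, so $r_j < h_{\bl b}(x_j) \leq M_j$ by the definition of $M_j$, whence $h_{\bl b}(x_j) \in [r_j, M_j]$ and $\mI(z_j, h_{\bl b}) = 1$; if $j \notin A$ then $b_j = -1$, so $h_{\bl b}(x_j) < r_j$ and thus $h_{\bl b}(x_j) \notin [r_j, M_j]$, giving $\mI(z_j, h_{\bl b}) = 0$. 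Hence $h_{\bl b}$ realizes exactly the subset $A$ on $\{z_1, \dots, z_d\}$, and since $A$ was arbitrary this shows $\mF_i$ shatters the $d$ points, giving $\vc(\mF_i) \geq d = \pdim(\mH_i)$.

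I expect the only genuine obstacle to be this asymmetry between the half-line condition governing the pseudo-dimension and the bounded-interval condition governing $\mI$: a naive attempt to encode $h(x_j) > r_j$ directly as interval membership fails because $\mH_i$-values above $r_j$ need not be bounded, so no fixed finite upper endpoint would contain all of them. Restricting attention to the finitely many witnesses $h_{\bl b}$ — rather than to all of $\mH_i$ — is precisely what supplies the finite upper endpoint $M_j$ and resolves the difficulty; the remaining steps are routine bookkeeping.
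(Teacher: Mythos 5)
Your proof is correct and follows essentially the same route as the paper's: fix a pseudo-shattered set with witness $\bl r$, and use the finiteness of the $2^d$ witness functions to build, for each $x_j$, a bounded acceptance interval that converts the one-sided threshold condition into membership under $\mI$. The only difference is cosmetic and in your favor: the paper takes $r_j$ as the \emph{upper} endpoint and pushes $y_j$ below all witness values (flipping the labels, splitting into cases $r_j\geq 0$ and $r_j<0$, and perturbing $r_j$ to break boundary ties), whereas your interval $[r_j,M_j]$ is label-preserving, needs no case split, and avoids the tie-breaking perturbation.
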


Classical results in learning theory suggest that if $\vc(\mF_i)=\infty$, a best response on the sample may not generalize to an approximate best response w.h.p. To see this, imagine a ``game'' with one player, who seeks to maximize her payoff function. No Free Lunch Theorems (see, e.g., \cite{wolpert1997no}) imply that with a constant probability, the player cannot get her payoff within a constant distance from the optimal payoff. We conclude that in general games, if a player has a strategy space with an infinite pseudo-dimension, she may not be able to learn. However, in the presence of such a player, can other players with a finite pseudo-dimension learn an approximate best-response? In our setting, players are interacting with each other, and player payoffs are a function of the whole strategy profile; thus, different challenges may arise.

The strategy space of a player $i$ in an empirical game induced by a sample $\mS$ is $\mH_i \cap \mS$, namely the restriction of her strategy space to the sample. Due to the structure of the payoff function, player $i$ can compute an empirical best response to $\bl h_{-i}\cap \mS$ without knowing $\bl h_{-i}$ nor $\bl \mH_{-i}$. As the sample size increases $\mS$ becomes a representative of $\mD$, and we  employed uniform convergence to conclude that w.h.p. $\mH \cap \mS$ is ``similar'' to $\mH $; hence, players use the sample to learn both the underline distribution and a best response to the other players. In the rest of this subsection we give an example where knowledge about the strategy spaces of the other players is crucial. In particular, we show that if player $i$ (for $i=1$) has a strategy space with infinite dimension, she can trick the other players and drive them to her preferable outcome. 
\begin{example}
\label{example:unlearnability} Let $\mD$ be a density function over $\mZ=[0,2] \times [0,1] \times \left\{  \frac{1}{2}\right\}$ as follows: 
\[
\mD(x,y,t) = 
\begin{cases}
\frac{1}{2} & 0\leq x<1,y=0, t=\frac{1}{2}\\
\frac{1}{2} & 1\leq x\leq 2,y=1, t=\frac{1}{2}\\
0 & \text{otherwise}
\end{cases}.
\]

We now define several strategies that will constitute the strategy spaces. Let $h^0(x)\equiv 0,h^1(x) \equiv 1$ be constant functions mapping each instance $x$ to 0, 1 correspondingly. Moreover, for any $\mS\subset \mZ$, denote
\[
h^{\mS\rightarrow 0}(x) = 
\begin{cases}
0 & \exists y,t: (x,y,t)\in \mS \\
\ind_{1\leq x\leq 2} & \forall y,t: (x,y,t)\notin   \mS 
\end{cases}, \quad
h^{\mS\rightarrow 1}(x) = 
\begin{cases}
1 & \exists y,t: (x,y,t)\in \mS \\
\ind_{1\leq x\leq 2} & \forall y,t: (x,y,t)\notin  \mS 
\end{cases}.
\]
In other words, $h^{\mS\rightarrow 0}$ labels 0  (and $h^{\mS\rightarrow 1}$ labels 1)  every instance $x$ that appears in the projection of $\mS$ on its first entry; otherwise, it labels it correctly according to $\mD$, i.e., $h^{\mS\rightarrow 0}(x)=\ind_{1\leq x\leq 2}$ (and similarly, $h^{\mS\rightarrow 1}(x)=\ind_{1\leq x\leq 2}$). Let $\mN=\{1,2,3\}$, $\mH_2=\mH_3=\{h^0,h^1\}$, and denote
\[
\mH_1 = \{h^{\mS\rightarrow 0} \mid \mS \subset \mZ   \}   \cup \{h^{\mS\rightarrow 1} \mid \mS \subset \mZ   \} \cup \{h^0,h^1\}.
\]
Notice that $\mH_1$ is ``complex'', i.e., $\pdim(\mH_1)=\infty$. Next, we define a game $\mG$ such that $\mG=\langle \mZ,\mD ,\mN,\mH,\pi \rangle$ for $\mH=\mH_1\times \mH_2 \times \mH_3$. Denote by $\bl h$ the strategy profile $\bl h=(h^{\mS\rightarrow 0},h^1,h^1)$, and by $\bl h'$ the strategy profile $\bl h'=(h^0,h^1,h^1)$. Observe that $\pi_1(\bl h)=\frac{2}{3}, \pi_2(\bl h)=\pi_3(\bl h)=\frac{1}{6}$, while $\pi_1(\bl h')=\frac{1}{2}, \pi_2(\bl h')=\pi_3(\bl h')=\frac{1}{4}$. Moreover, $h^{\mS\rightarrow 0}$ is a best response of player 1 to $\bl h_{-1}$, while $h^1$ is a best response of player 2 to $\bl h'_{-2}$ (similarly for player 3). As a result, player 1 would like to play $h^{\mS\rightarrow 0}$ against $\bl h_{-1}$, but let player 2 and 3 think she plays $h^0$ and therefore mislead them to assume that playing $h^1$ (as they do under $\bl h'$) is a best response.

Since $h^0$ and $h^{\mS\rightarrow 0}$ coincide on every sample $\mS$, players 2 and 3 cannot distinguish between $\bl h$ and $\bl h'$ by observing the empirical payoffs. Moreover, with probability of at least $\frac{1}{4}$ over all choices of $\mS$ for $\abs{\mS} \geq 15$ we have $\frac{1}{2}<\frac{1}{m}\sum_{j=1}^m  \ind_{y_j=1} < \frac{3}{4}$ (see{\ifnum\Includeappendix=1{ Claim \ref{claim:auxhoeffbin} in}\fi} the appendix). Under such a sample $\mS$, player 2 and 3 indeed play a best response under $\bl h'$ in the empirical game induced by $\mS$ (while player 1 is not). Consequently, their lack of knowledge about $\mH_1$ forbids them to better-respond to player 1 and leads them to sub-optimal payoffs (each gets $\frac{1}{6}$ compared to $\frac{1}{4}$ if she deviates) with constant probability. By playing $h^{\mS\rightarrow 0}$, which is sub-optimal in the empirical game of $\mG$ on such $\mS$ (since $\pi_1^\mS(h^{\emptyset \rightarrow 0}, \bl h_{-1})-\pi_1^\mS(\bl h)>\frac{1}{3}$), player 1 can manipulate the others to think that she plays $h^0$.
\end{example}

The intuition behind the example is as follows. Despite $\mD$ being relatively simple, the strategy selected by player $1$ is ``complex'', in the sense that its behavior on the sample and on the population is substantially different; thus, players $2$ and $3$ are only playing a best response to the observable, simple way player $1$ is playing on the sample, yet misinterpret the extension of player $1$'s strategy to the population. The results of the previous sections imply that this phenomena cannot occur when the pseudo dimensions are finite, as long as the sample is large enough. Another interesting point is that in Example \ref{example:unlearnability} each player can find a strategy that maximizes her payoff if she were alone using a small number of samples. Indeed, this inability to generalize from samples follows solely from strategic behavior. Notice that if player 2 has knowledge of $\mH_1$, she can infer that her strategy under $\bl h'$ is sub-optimal. However, knowledge of the strategy spaces of other players is a heavy assumption: the better-response dynamics we discussed in Subsection \ref{subsec:dyn} only assumed that each player can compute a better response.

\subsection{Non-existence of PNE in the Direct Attraction Model}
\label{subsec:direct-attraction}
In this subsection we consider a variant of the model, where a player payoff is the proportion of users for whom she provides the \textit{most accurate} prediction. Notice that this payoff scheme is aligned with the one considered by \citet{immorlica2011dueling}. We show a negative result for this payoff scheme -- a PNE may not exist without further assumptions on $\mD$. Surprisingly, this is true even when players are restricted to linear strategies. Finally, we highlight a special case where a PNE does exist. To facilitate understanding, we denote the elements we reconsider in this subsection by a tilde.

First, we formalized the payoff scheme considered in this subsection. Let $\tmZ = \mX \times \mY$, $\tmD$ be the marginal distribution of $\mD$ over $\tmZ$ and let $\tmS$ denote a sample from $\tmZ$. Given $\tz = (x,y)\in \tmZ$ and  a strategy profile $\bl h$, Denote by $B(\tz ; \bl h)$ the number of players who provide the most accurate prediction. In other words,
\begin{equation}
\label{eq:direct-B}
B(\tz ; \bl h) = \left\{ i\in \mN : \abs{h_i(x)-y} \leq \min_{i' \in \mN} \abs{h_{i'}(x)-y} \right\}.
\end{equation}
In addition, let
\[
\tw_i (\tz ; \bl h) = 
\begin{cases}
\frac{1}{\abs{B(\tz ; \bl h)}} & i\in B(\tz ; \bl h) \\
0 & \text{otherwise}
\end{cases}.
\]
Finally, the payoffs and the empirical payoffs are defined by
\begin{equation}
\label{eq-new payoffs}
\tpi_i (\bl h)= \E_{\tz\sim \tmD}\left[ \tw_i(\tz;\bl h) \right], 
\tpi_i^\mS (\bl h)= \E_{z \in \tmS}\left[ \tw_i(\tz;\bl h) \right].
\end{equation}
Indeed, this payoff depicts the case where each user is attracted to the player (or some, if there is more than one) that offered her the most accurate prediction. The induced game is a constant sum game, as now every user (or a point in $\tmZ$) grants a monetary unit to one of the players w.p. 1. Moreover, notice that the third entry of $\mZ$, the tolerance, is completely disregarded in this payoff scheme.

Recall that in Section \ref{sec:meta} we have demonstrated that players can use a sample drawn i.i.d. to learn an approximate PNE with high probability. However, as we show next, under the revised payoff scheme a PNE may not exist.
\begin{proposition}
\label{prop:no-eq-direct}
A game $\langle \tmZ,\tmD ,\mN,\mH,\tpi \rangle$ may not possess an approximate PNE for a constant approximation factor.
\end{proposition}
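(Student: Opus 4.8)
The plan is to exhibit a single game instance $\langle \tmZ,\tmD,\mN,\mH,\tpi\rangle$ — cleanest with $N=2$ players, both equipped with the linear strategy space, over a concrete and simple distribution $\tmD$ on $\tmZ=\mX\times\mY\subset\R^2$ — and to prove that \emph{no} profile is an $\epsilon_0$-PNE for some absolute constant $\epsilon_0>0$. The structural fact I would exploit is that the direct-attraction game is constant-sum: every point is awarded to at least one player, so $\tpi_1(\bl h)+\tpi_2(\bl h)=1$ for every $\bl h$ (the tie sets are $\tmD$-null for a generic $\tmD$). This is precisely the regime in which pure equilibria are fragile — the matching-pennies phenomenon — in sharp contrast to the potential-game structure of Subsection \ref{subsec:dyn}.

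Given the constant-sum identity, it suffices to establish a \emph{uniform exploitability} claim: there is a constant $\epsilon_0>0$ such that against any single opponent line $\ell$, a responder can choose a line capturing strictly more than $\frac12+\epsilon_0$ of the mass of $\tmD$. Indeed, in any profile $(h_1,h_2)$ the player whose payoff is at most $\frac12$ (at least one exists, as the payoffs sum to $1$) could then deviate to gain at least $\epsilon_0$, ruling out an $\epsilon_0$-PNE. Under vertical distance the responder has two geometric moves: (i) a \emph{parallel shift} of $\ell$ by an infinitesimal amount captures, in the limit, all mass strictly on the heavier side of $\ell$, namely $\max\{\tmD(y>\ell(x)),\tmD(y<\ell(x))\}\ge\frac12$; and (ii) a \emph{tilt} crossing $\ell$ at a point $(x_0,\ell(x_0))$ captures a two-quadrant region (above the new midline for $x$ on one side of $x_0$, below it on the other). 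The construction of $\tmD$ must force one of these two moves to beat $\frac12$ by the fixed margin $\epsilon_0$, for every $\ell$.

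The main obstacle is the \emph{balanced line}: by an intermediate-value argument there always exists a line $\ell^\star$ splitting the mass evenly above and below, and against it the parallel-shift move (i) yields only $\to\frac12$ with no margin. Hence the entire difficulty concentrates on showing that even balanced lines are exploitable by a constant amount via the tilt move (ii) — equivalently, that $\tmD$ admits \emph{no centerpoint line} that is simultaneously balanced and robust to two-quadrant captures. This is the delicate geometric heart of the result and the reason it is surprising: for symmetric or convex-position distributions a balanced line is typically (approximately) robust and yields an approximate equilibrium, so the counterexample must use an asymmetric, clustered support that defeats every candidate line at once. I would therefore place $\tmD$ as a carefully chosen finite (or piecewise-uniform) configuration and verify $\epsilon_0$ by inspection.

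Finally, to upgrade ``a strict improvement exists at every profile'' into ``an improvement by a fixed constant $\epsilon_0$,'' I would invoke finiteness in the spirit of Lemma \ref{LEMMA:REGTOCLASS}: a line induces only finitely many distinct closest/capture patterns on a finite support, so best responses may be taken over a finite set of representative lines, turning the problem into a finite constant-sum game. Exhibiting that this finite game has a cyclic best-response graph (no pure equilibrium) produces a strictly positive gap that serves as $\epsilon_0$. The remaining work — and the step I expect to be genuinely hard — is the explicit placement of points guaranteeing this cyclic structure, i.e. ruling out every balanced line as an equilibrium candidate.
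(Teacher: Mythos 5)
Your high-level strategy coincides with the paper's own: the proof there (Example \ref{example:no-eq-direct} together with Claim \ref{claim:no-eq-direct}) likewise exploits the constant-sum identity $\sum_i\tpi_i(\bl h)=1$ to reduce the proposition to a uniform exploitability statement --- against \emph{every} line $h$ there is a response $h'$ with $\tpi_2(h,h')\geq\frac{2}{3}-\epsilon$ --- and the two deviation moves it uses are exactly your parallel shift and your tilt. The genuine gap is that the proposition is an existence claim, so its proof \emph{is} the explicit counterexample, and you never produce one. You correctly diagnose that the whole difficulty concentrates on defeating balanced lines via the tilt and that the support must be asymmetric and clustered, but you then defer precisely that step as ``the step I expect to be genuinely hard.'' The paper's witness is the uniform distribution on the three unit squares $\left([0,1]\times[1,2]\right)\cup\left([1,2]\times[0,1]\right)\cup\left([2,3]\times[1,2]\right)$ --- a high--low--high zig-zag --- and the verification is a case analysis on which of the three squares the opponent's line crosses: if at most one, a nearby parallel line suffices; if exactly two, a tilt captures the untouched square plus the better of the two ``two-quadrant'' combinations, whose areas sum to the whole of the two crossed squares, giving a limit payoff of $\frac{2}{3}$; if all three, the line must be $y=1$ and a small upward shift captures the two outer squares. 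This yields the concrete non-approximability constant $\frac{1}{6}$ in Claim \ref{claim:no-eq-direct}. Without an explicit $\tmD$ and this kind of exhaustive verification, the proposal establishes only that a counterexample is plausible, not that one exists.

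A secondary concern is your closing paragraph. The paper does not obtain the gap $\epsilon_0$ by passing to a finite game with a cyclic best-response graph, and that route is shakier than the direct one: Lemma \ref{LEMMA:REGTOCLASS} controls capture patterns of the form $\abs{h(x)-y}\leq t$, not the ``closest prediction'' patterns of the direct-attraction payoff (which depend on pairs of hypotheses rather than on a single hypothesis against a fixed threshold), and the supremum over deviations need not be attained by any single line --- which is exactly why the paper states the exploitability bound as $\frac{2}{3}-\epsilon$ for every $\epsilon>0$ rather than as an exact best response. Once you commit to an explicit $\tmD$, the cleaner path is the one in your second paragraph: prove the uniform exploitability bound directly by case analysis over the opponent's line, and read off the non-approximability constant as the exploitability minus $\frac{1}{2}$.
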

Proposition \ref{prop:no-eq-direct} implies that under the payoff scheme defined in Equation (\ref{eq-new payoffs}), players may not learn an $\epsilon$-PNE from a sample, regardless of the sample size. We prove Proposition \ref{prop:no-eq-direct} by providing a special case, which is given in Example \ref{example:no-eq-direct}.
\begin{example}
\label{example:no-eq-direct}
Let $\tmZ \subset \R^2$, and let $U\subset \tmZ$ be a union of squares as follows:
\[
U \defeq  \left( [0,1]\times [1,2]\right) \cup\left( [1,2]\times [0,1] \right) \cup \left( [2,3] \times [1,2] \right).
\]
See Figure \ref{fig:non-existence} for illustration. Let $\tmD$ be the uniform distribution over $U$, let $N=2$ (i.e., a two-player game) and let $\mH_1 = \mH_2$ be the linear strategy space. Observe that
\begin{claim}
\label{claim:no-eq-direct}
There exists no $\left(\frac{1}{6}-\epsilon\right)$-PNE in $\langle \tmZ,\tmD ,\mN,\mH,\tpi \rangle$, for any arbitrarily small constant $\epsilon>0$.
\end{claim}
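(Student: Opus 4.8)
The plan is to prove non-existence of a $(\frac16-\epsilon)$-PNE by analyzing the geometry of the three squares and the constant-sum structure of the game. Since the game is constant-sum with total payoff $1$, we have $\tpi_1(\bl h)+\tpi_2(\bl h)=1$ for every profile; hence any $\epsilon$-PNE condition is symmetric, and I would reduce the whole argument to tracking a single player's payoff. The key structural observation is that $U$ consists of three unit squares whose \emph{centers} are not collinear: the two outer squares sit at height $[1,2]$ while the middle square sits at height $[0,1]$, so they form a ``V'' (or zig-zag) arrangement. A single line (player's strategy) cannot pass close to all three square-centers simultaneously, and this is exactly what forces instability.

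\medskip
\noindent\textbf{Step 1 (Payoff of a line on $U$).} First I would quantify, for a fixed profile $(h_1,h_2)$, how the mass of each square splits between the two players. For a point $(x,y)\in U$, player $i$ wins it (fully, or half in case of a tie) iff $\abs{h_i(x)-y}\le \abs{h_{i'}(x)-y}$. On any vertical slice $x=x_0$, the player whose line value $h_i(x_0)$ is closer to $y$ wins; the boundary between the two players' ``won'' regions is the horizontal midline $y=\tfrac12(h_1(x_0)+h_2(x_0))$. So within each square, each player's captured area is determined by where the \emph{average line} $\bar h(x)=\tfrac12(h_1(x)+h_2(x))$ cuts through that square, while the \emph{difference line} $h_1-h_2$ only determines \emph{which} side each player gets.

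\medskip
\noindent\textbf{Step 2 (Best responses and the zig-zag obstruction).} The plan is to show that from \emph{any} profile, some player has a $\tfrac16$-improving deviation, by a case analysis on how much mass each player currently holds in each of the three squares. The crucial quantitative claim is: given any opponent line $h_{i'}$, a player can always deviate to a line capturing strictly more than $\tfrac12+\tfrac16=\tfrac23$ of the total mass — equivalently, capturing nearly all of at least two of the three squares — because she can choose a line that threads through any two of the three squares while the opponent's single line can be ``close'' to at most... here is the heart of it: the opponent's line, being a single line, can dominate the interior of at most one square by a wide margin and must concede a full square's worth of mass on the configuration's extremes. I would make this precise by computing, for the worst-case opponent line, the best-response payoff and showing it always exceeds $\tfrac23-\epsilon$; by the constant-sum relation this leaves the opponent below $\tfrac13+\epsilon$, and since in equilibrium both players must be within $\tfrac16-\epsilon$ of their best response, I derive a contradiction.

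\medskip
\noindent\textbf{The main obstacle} I expect is Step 2: turning the intuitive ``one line cannot be near three non-collinear square-centers'' into a clean, tight constant. The difficulty is that capturing a square does not require passing through its center — it only requires being closer than the opponent over that square's $x$-range — so I must reason about the \emph{relative} position of two lines over three disjoint $x$-intervals, not about absolute distances. I would handle this by parametrizing the two lines by their values over the three vertical strips $x\in[0,1],[1,2],[2,3]$ and arguing that the vector of ``who-wins'' outcomes over the three strips cannot be balanced: because the middle square is vertically displaced from the two outer ones, any line that wins both outer squares is forced to pass \emph{below} the middle square (conceding it), and conversely. This yields a finite case split on the win-pattern $(s_1,s_2,s_3)\in\{1,2\}^3$ over the three strips, and in each case I exhibit an explicit deviation gaining at least a full square's mass (i.e.\ $\tfrac13$ of the total, comfortably more than the required $\tfrac16$). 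Taking $\epsilon\to 0$ at the end gives the stated bound for every arbitrarily small constant $\epsilon>0$.
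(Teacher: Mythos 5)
Your architecture is the same as the paper's: establish that against \emph{any} single line $h$ the opponent has responses with payoff arbitrarily close to $\frac{2}{3}$ (so her best-response value is at least $\frac{2}{3}$), then invoke the constant-sum structure --- in any profile some player holds at most $\frac{1}{2}$, hence sits at least $\frac{1}{6}$ below her best-response value, which rules out a $\left(\frac{1}{6}-\epsilon\right)$-PNE. The paper organizes the geometric step as a case analysis on how many of the three squares the line $h$ crosses ($\leq 1$, exactly $2$, or all $3$, the last forcing $h$ to be $y=1$), exhibiting in each case an $h'$ that takes an entire uncrossed square plus at least half of the combined mass of the crossed ones; your proposed split on win-patterns over the three vertical strips is a cosmetic variant of this. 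Two statements in your sketch need repair before the write-up is sound. First, the closing claim that in every case some player can \emph{gain} a full $\frac{1}{3}$ is false: if both players play $y=\frac{3}{2}$, each gets $\frac{1}{2}$, and on every vertical column of an outer square a deviator can win at most half the column, so her best-response value is at most $\frac{2}{3}$ and no deviation gains more than $\frac{1}{6}$. Were your claim true it would prove the stronger, false statement that no $\left(\frac{1}{3}-\epsilon\right)$-PNE exists; the constant $\frac{1}{6}$ is tight precisely because the available \emph{gain}, unlike the best-response \emph{value}, can be as small as $\frac{1}{6}$. Second, ``any line that wins both outer squares is forced to pass below the middle square'' conflates absolute position with the relative comparison that defines winning --- a line at height $\frac{3}{2}$ wins all three squares against an opponent at height $100$. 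The statement you actually need, and do state correctly earlier in Step 2, is that $\sup_{h'}\tpi_2(h,h')\geq \frac{2}{3}$ for every fixed $h$; that is what your finite case analysis must establish, and it is exactly what the paper's proof does.
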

The above Claim \ref{claim:no-eq-direct} proves Proposition \ref{prop:no-eq-direct}.
\end{example}
\begin{figure}[t]
\centering
\includegraphics[scale=1.]{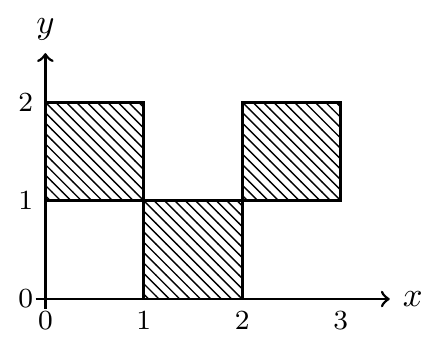}
\caption{The distribution $\tmD$ is the uniform distribution on the filled squares. As suggested by Claim \ref{claim:no-eq-direct},
for any selection of two linear functions, one of the players will have a beneficial deviation granting her at least $\frac{2}{3}-\epsilon$, for any arbitrarily small $\epsilon>0$. \label{fig:non-existence}}
\end{figure}

In fact, it is not immediate that the game proposed in Example \ref{example:no-eq-direct} even possesses a mixed Nash equilibrium, despite being a symmetric zero-sum two-player game. Observe that not only the strategy space of the players is infinite, but also player payoffs are discontinuous, so Glicksberg's theorem \cite{glicksberg1952further} cannot be applied. Since this work focuses on pure strategy profiles, we leave the full analysis of this extension for future work.

Before the end of this section, we highlight a class of games that do possess a PNE.
The setting depicted in Example \ref{example:no-eq-direct} is carefully crafted to demonstrate the negative effects of $\tpi$ as defined in Equation (\ref{eq-new payoffs}). Arguably, linear strategies are bad predictors for the structure of the data in this example, which is highly non-linear. One may ask whether a PNE must exist in the \textit{realizable} case.
\begin{definition}
A game $\langle \tmZ,\tmD ,\mN,\mH,\tpi \rangle$ is \textit{realizable} if for every $i\in \mN$ and  $(x,y)\in\tmZ$ there exists $h_i^*$ such that $h_i^*(x)=y$.
\end{definition}

In a realizable game, every player has a strategy that predicts the label of every instance correctly. Clearly, $(h_1^*,\dots,h_N^*)$ is a PNE of the game, and so at least one PNE exists. We now show that an approximate PNE of a realizable game can be found using the technique we developed in Section \ref{sec:meta}. Given $\epsilon,\delta$, let $m=m_{\frac{\epsilon}{2},\delta}$ that satisfies Equation (\ref{eq:lemma:deltaandm}), and let $\tmS$ be a sample of size $m$ from $\tmD$. Due to the realizability assumption, every player $i$ has a strategy that accurately predicts the value of every instance in the sample; let $h_i$ be one such strategy (this does not imply that $h_i=h_i^*$). Importantly, $h_i$ can be found using the same best response oracle employed earlier in Algorithm \ref{algorithm:betterres}, when augmenting every point in $\tmS$ to have 0 in its third entry (i.e., zero tolerance). 
The above discussion is formalized via Algorithm \ref{algorithm:betterres-direct}. We show that it outputs an $\epsilon$-PNE with probability of at least $1-\delta$.
\begin{lemma}
\label{lemma:realizable}
Let $\langle \tmZ,\tmD^m ,\mN,\mH,\tpi \rangle$ be a realizable game, $\epsilon,\delta\in (0,1)$, and let $\bl h$ be the output of Algorithm \ref{algorithm:betterres-direct}. With probability of at least $1-\delta$, $\bl h$ is an $\epsilon$-PNE.
\end{lemma}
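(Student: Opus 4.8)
The plan is to show that the profile $\bl h=(h_1,\dots,h_N)$ produced by Algorithm \ref{algorithm:betterres-direct} --- where each $h_i$ accurately predicts every instance in the sample $\tmS$ --- is an $\frac{\epsilon}{2}$-PNE of the empirical game with certainty, and then invoke a uniform-convergence argument to lift this to an $\epsilon$-PNE of the population game with probability at least $1-\delta$. First I would observe that under $\bl h$, every point $\tz=(x,y)\in\tmS$ is predicted exactly by \emph{all} players, so $\abs{h_i(x)-y}=0=\min_{i'}\abs{h_{i'}(x)-y}$ for every $i$; hence $B(\tz;\bl h)=\mN$ and $\tw_i(\tz;\bl h)=\frac{1}{N}$ for each player. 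Therefore $\tpi_i^{\tmS}(\bl h)=\frac{1}{N}$ for every $i$. Since the payoffs $\tpi_i$ are constant-sum and sum to $1$ pointwise, no player can ever achieve an empirical payoff exceeding $1$; more to the point, because each player already captures the maximal per-point share at every sample point when all tie, \emph{no deviation can strictly improve} a player's empirical payoff --- any deviating $h_i'$ can at best remain in $B(\tz;\bl h')$ for each point and can only shrink its own share or that of others, so $\tpi_i^{\tmS}(h_i',\bl h_{-i})\le \frac{1}{N}=\tpi_i^{\tmS}(\bl h)$. Thus $\bl h$ is in fact an exact PNE (and a fortiori an $\frac{\epsilon}{2}$-PNE) of the empirical game.

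Next I would carry over the generalization machinery from Section \ref{sec:meta}. The key point is that the uniform-convergence results --- Theorem \ref{thm:unionbound} and its consequence Lemma \ref{lemma:deltaandm} --- depend only on the pseudo-dimensions $d_i=\pdim(\mH_i)$ and on the combinatorial structure of the indicator classes $\mF_i$, \emph{not} on the particular payoff rule. Since the direct-attraction payoff $\tw_i$ is again built from the same membership indicators $\mI(z,h_i)$ (here with tolerance effectively determined by the minimum distance), the associated class of payoff-composed functions has the same growth bound, and an analogue of Lemma \ref{lemma:deltaandm} holds: for $m\ge m_{\frac{\epsilon}{2},\delta}$, with probability at least $1-\delta$ we have $\sup_{\bl h'\in\mH}\abs{\tpi_i(\bl h')-\tpi_i^{\tmS}(\bl h')}<\frac{\epsilon}{2}$ for all $i\in\mN$ simultaneously.

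Finally I would combine the two ingredients exactly as in Lemma \ref{lemma:empiseq}. Conditioned on the good event above, for every player $i$ and every deviation $h_i'$,
\[
\tpi_i(h_i',\bl h_{-i})
< \tpi_i^{\tmS}(h_i',\bl h_{-i})+\tfrac{\epsilon}{2}
\le \tpi_i^{\tmS}(\bl h)+\tfrac{\epsilon}{2}
< \tpi_i(\bl h)+\tfrac{\epsilon}{2}+\tfrac{\epsilon}{2}
= \tpi_i(\bl h)+\epsilon,
\]
where the middle inequality uses that $\bl h$ is an exact (hence $0$-better-response-free) empirical PNE. This shows no player has an $\epsilon$-better response to $\bl h$ in the population game, i.e.\ $\bl h$ is an $\epsilon$-PNE, with probability at least $1-\delta$.

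The main obstacle I anticipate is the second step: I must verify that the uniform-convergence chain of Section \ref{sec:meta} genuinely transfers to the direct-attraction payoff. The reduction in Subsection \ref{subsec:sample} was stated for $w_i$ defined via the tolerance-based indicator $\mI$, whereas $\tw_i$ is governed by the \emph{argmin} over players of $\abs{h_{i'}(x)-y}$, which is a pairwise-comparison event rather than a single threshold test. I would address this by noting that the relevant events are still finite Boolean combinations of the sign patterns $\sign(\abs{h_{i'}(x)-y}-\abs{h_i(x)-y})$, each controlled by the pseudo-dimension of the $\mH_i$'s, so the growth function remains polynomial with the same exponent order; only the constants in the Sauer--Shelah / VC bookkeeping change, which does not affect the asymptotic sample size $m_{\frac{\epsilon}{2},\delta}$. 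Granting the finite-pseudo-dimension hypothesis (implicit since Algorithm \ref{algorithm:betterres-direct} reuses the oracle of Algorithm \ref{algorithm:betterres}), this bookkeeping is routine and the argument goes through.
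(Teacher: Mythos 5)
Your step 1 is correct and matches the paper in spirit: since every $h_i$ fits the sample exactly, all players tie at distance zero on every sample point, each gets $\frac{1}{N}$, and no unilateral deviation can exceed $\frac{1}{N}$ empirically, so $\bl h$ is an exact PNE of the empirical direct-attraction game. The gap is in your step 2, and you correctly identified it as the main obstacle but did not actually close it. The uniform-convergence machinery of Section \ref{sec:meta} bounds the growth of the classes $\mF_i=\{\mI(z,h):h\in\mH_i\}$, i.e.\ indicators of the form $\ind_{\abs{h(x)-y}\leq t}$, by reducing them to one-sided threshold classes $\ind_{h(x)\geq r}$ whose VC dimension is exactly $\pdim(\mH_i)$ (Claim \ref{claim:auxiliaryg}). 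The direct-attraction payoff $\tw_i$ is governed by events of the form $\abs{h_i(x)-y}\leq\abs{h_{i'}(x)-y}$, which compare two functions drawn from \emph{different} classes rather than a function against a fixed witness. The pseudo-dimension does not control such comparison classes: the VC dimension of $\{(x,y):\abs{h(x)-y}\leq\abs{h'(x)-y}\}$ as $h,h'$ range over $\mH_i,\mH_{i'}$ is not bounded by $d_i+d_{i'}$ in general, so your assertion that ``the growth function remains polynomial with the same exponent order'' and that only ``constants in the Sauer--Shelah bookkeeping change'' is unsupported. Since $\bl h$ is chosen as a function of $\tmS$, you genuinely need uniform convergence of $\tpi_i$ over all profiles (not just over deviations from a fixed $\bl h_{-i}$), and that is precisely what neither the paper nor your sketch establishes.

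The paper avoids this issue entirely by exploiting realizability. It introduces the auxiliary game $G$ with the original tolerance-based payoff $\pi$ and zero tolerance, for which Theorem \ref{thm:unionbound} and Lemma \ref{lemma:empiseq} do apply. It then observes two facts: (i) in the direct-attraction game the perfect predictor $h_i^*$ is a \emph{pointwise} best response to any $\bl h_{-i}$ (it always lies in $B(\tz;\cdot)$ and minimizes $\abs{B}$ among profiles containing $i$), so $\sup_{h_i'}\tpi_i(h_i',\bl h_{-i})=\tpi_i(h_i^*,\bl h_{-i})$; and (ii) the gain $\tpi_i(h_i^*,\bl h_{-i})-\tpi_i(\bl h)$ equals $\pi_i(h_i^*,\bl h_{-i})-\pi_i(\bl h)$ in $G$, because on points where $h_i(x)=y$ nothing changes and on points where $h_i(x)\neq y$ both quantities reduce to $\frac{1}{1+\sum_{i'\neq i}\mI((x,y,0),h_{i'})}$. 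Thus only the single deviation to $h_i^*$ ever needs to be controlled, and it is controlled by the already-proved uniform convergence for $\pi$. If you want to keep your outline, you must either prove a new uniform-convergence theorem for argmin-based payoffs (a nontrivial and possibly false claim at this level of generality) or replace step 2 with the paper's reduction through $h_i^*$ and the zero-tolerance game.
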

\begin{algorithm}[t]
\DontPrintSemicolon
\caption{Approximate PNE w.h.p. for the realizable case\label{algorithm:betterres-direct}}
\KwIn{$\delta, \epsilon \in (0,1)$
}
\KwOut{a strategy profile $\bl h$}
set $m = m_{\frac{\epsilon}{2},\delta}$  \tcp*{the minimal integer $m$ satisfying Equation (\ref{eq:lemma:deltaandm})}
sample $\tmS$ from $\tmD^m$\;
for every $i\in \mN$, find $h_i$ such that $\sum_{(x,y)\in \tmS}\ind_{h_i(x)\neq y}=0$ \;
\Return{$\bl h=\left(h_1,\dots,h_N\right)$}
\end{algorithm}
\section{Discussion}
\label{sec:discussion}
We have presented a novel setting that deepens the intersection between learning theory and game theory. We used the seminal framework of 
\citeauthor{valiant1984theory} to model an environment where users seek predictions and players provide them, where players' payoffs are defined in the spirit of Dueling Algorithms \cite{immorlica2011dueling}. We proved the existence of a PNE, bounded the number of steps until convergence and formalized an algorithm that computes a PNE assuming better-response oracles. We provided an oracle for linear strategy space, and demonstrated the equilibrium structure for linear two-player games in several settings. We have also considered cases where players cannot learn a PNE, despite that user distribution is fully known, and provided preliminary results on an interesting variant of the model, which will appear in future work.

As mentioned in Section \ref{subsec:model}, our analysis assumes players have better-response oracles. In fact, our model and results are valid for a much more general scenario, as described next. Consider the case where players only have heuristics for finding a better response. After running heuristic better-response dynamics and obtaining a strategy profile, the uniform convergence property implies that the payoffs with respect to the whole population are guaranteed to be close to their empirical counterparts, w.h.p.; therefore, our analysis is still meaningful even if players cannot maximize their empirical payoff efficiently. The bounds on the required sample size we obtained in Section \ref{sec:meta} and the rate of convergence are relevant for this case as well. 

Ultimately, we remark that regression equilibrium can be employed to develop piece-wise linear models, or other mixture models in general, as illustrated in Subsection \ref{SUBSEC:SIM} and Figure \ref{fig:two-dim}. This is a future direction of independent interest.

\section*{Acknowledgments}\label{sec:Acknowledgments}
We thank the anonymous reviewers for their helpful comments. Example \ref{example:no-eq-direct} is a simplified version of the one we had originally, and is due to Yakov Babichenko. The work of O. Ben-Porat was supported by a JPMorgan Chase \& Co. PhD Fellowship. This project has received funding from the European Research Council (ERC) under the European Union's Horizon 2020 research and innovation programme (grant agreement n$\degree$  740435).

%\bibliography{bibdb}

\begin{thebibliography}{00}

%%% ====================================================================
%%% NOTE TO THE USER: you can override these defaults by providing
%%% customized versions of any of these macros before the \bibliography
%%% command.  Each of them MUST provide its own final punctuation,
%%% except for \shownote{}, \showDOI{}, and \showURL{}.  The latter two
%%% do not use final punctuation, in order to avoid confusing it with
%%% the Web address.
%%%
%%% To suppress output of a particular field, define its macro to expand
%%% to an empty string, or better, \unskip, like this:
%%%
%%% \newcommand{\showDOI}[1]{\unskip}   % LaTeX syntax
%%%
%%% \def \showDOI #1{\unskip}           % plain TeX syntax
%%%
%%% ====================================================================

\ifx \showCODEN    \undefined \def \showCODEN     #1{\unskip}     \fi
\ifx \showDOI      \undefined \def \showDOI       #1{{\tt DOI:}\penalty0{#1}\ }
  \fi
\ifx \showISBNx    \undefined \def \showISBNx     #1{\unskip}     \fi
\ifx \showISBNxiii \undefined \def \showISBNxiii  #1{\unskip}     \fi
\ifx \showISSN     \undefined \def \showISSN      #1{\unskip}     \fi
\ifx \showLCCN     \undefined \def \showLCCN      #1{\unskip}     \fi
\ifx \shownote     \undefined \def \shownote      #1{#1}          \fi
\ifx \showarticletitle \undefined \def \showarticletitle #1{#1}   \fi
\ifx \showURL      \undefined \def \showURL       #1{#1}          \fi
% The following commands are used for tagged output and should be
% invisible to TeX
\providecommand\bibfield[2]{#2}
\providecommand\bibinfo[2]{#2}
\providecommand\natexlab[1]{#1}
\providecommand\showeprint[2][]{arXiv:#2}

\bibitem[\protect\citeauthoryear{Alth{\"o}fer}{Alth{\"o}fer}{1994}]%
        {althofer1994sparse}
\bibfield{author}{\bibinfo{person}{Ingo Alth{\"o}fer}.}
  \bibinfo{year}{1994}\natexlab{}.
\newblock \showarticletitle{On sparse approximations to randomized strategies
  and convex combinations}.
\newblock \bibinfo{journal}{{\it Linear Algebra Appl.}}  \bibinfo{volume}{199}
  (\bibinfo{year}{1994}), \bibinfo{pages}{339--355}.
\newblock


\bibitem[\protect\citeauthoryear{Anthony and Bartlett}{Anthony and
  Bartlett}{2009}]%
        {anthony2009neural}
\bibfield{author}{\bibinfo{person}{Martin Anthony} {and}
  \bibinfo{person}{Peter~L Bartlett}.} \bibinfo{year}{2009}\natexlab{}.
\newblock \bibinfo{booktitle}{{\em Neural network learning: Theoretical
  foundations}}.
\newblock


\bibitem[\protect\citeauthoryear{Babichenko, Barman, and Peretz}{Babichenko
  et~al\mbox{.}}{2016}]%
        {babichenko2016empirical}
\bibfield{author}{\bibinfo{person}{Yakov Babichenko},
  \bibinfo{person}{Siddharth Barman}, {and} \bibinfo{person}{Ron Peretz}.}
  \bibinfo{year}{2016}\natexlab{}.
\newblock \showarticletitle{Empirical distribution of equilibrium play and its
  testing application}.
\newblock \bibinfo{journal}{{\em Mathematics of Operations Research\/}}
  \bibinfo{volume}{42}, \bibinfo{number}{1} (\bibinfo{year}{2016}),
  \bibinfo{pages}{15--29}.
\newblock


\bibitem[\protect\citeauthoryear{Barron and Erev}{Barron and Erev}{2003}]%
        {barron2003small}
\bibfield{author}{\bibinfo{person}{Greg Barron} {and} \bibinfo{person}{Ido
  Erev}.} \bibinfo{year}{2003}\natexlab{}.
\newblock \showarticletitle{Small feedback-based decisions and their limited
  correspondence to description-based decisions}.
\newblock \bibinfo{journal}{{\em Journal of Behavioral Decision Making\/}}
  \bibinfo{volume}{16}, \bibinfo{number}{3} (\bibinfo{year}{2003}),
  \bibinfo{pages}{215--233}.
\newblock


\bibitem[\protect\citeauthoryear{Ben-Porat and Tennenholtz}{Ben-Porat and
  Tennenholtz}{2017}]%
        {porat2017best}
\bibfield{author}{\bibinfo{person}{Omer Ben-Porat} {and} \bibinfo{person}{Moshe
  Tennenholtz}.} \bibinfo{year}{2017}\natexlab{}.
\newblock \showarticletitle{Best Response Regression}. In
  \bibinfo{booktitle}{{\em Advances in Neural Information Processing Systems}}.
  \bibinfo{pages}{1498--1507}.
\newblock


\bibitem[\protect\citeauthoryear{Blum, Haghtalab, Procaccia, and Qiao}{Blum
  et~al\mbox{.}}{2017}]%
        {blum2017collaborative}
\bibfield{author}{\bibinfo{person}{Avrim Blum}, \bibinfo{person}{Nika
  Haghtalab}, \bibinfo{person}{Ariel~D Procaccia}, {and}
  \bibinfo{person}{Mingda Qiao}.} \bibinfo{year}{2017}\natexlab{}.
\newblock \showarticletitle{Collaborative PAC Learning}. In
  \bibinfo{booktitle}{{\em Advances in Neural Information Processing Systems}}.
  \bibinfo{pages}{2389--2398}.
\newblock


\bibitem[\protect\citeauthoryear{Chen, Podimata, Procaccia, and Shah}{Chen
  et~al\mbox{.}}{2018}]%
        {charapodimata}
\bibfield{author}{\bibinfo{person}{Yiling Chen}, \bibinfo{person}{Chara
  Podimata}, \bibinfo{person}{Ariel~D. Procaccia}, {and}
  \bibinfo{person}{Nisarg Shah}.} \bibinfo{year}{2018}\natexlab{}.
\newblock \showarticletitle{Strategyproof Linear Regression in High
  Dimensions}. In \bibinfo{booktitle}{{\em Proceedings of the 2018 ACM
  Conference on Economics and Computation}} {\em (\bibinfo{series}{EC '18})}.
\newblock


\bibitem[\protect\citeauthoryear{Cole and Roughgarden}{Cole and
  Roughgarden}{2014}]%
        {cole2014sample}
\bibfield{author}{\bibinfo{person}{Richard Cole} {and} \bibinfo{person}{Tim
  Roughgarden}.} \bibinfo{year}{2014}\natexlab{}.
\newblock \showarticletitle{The sample complexity of revenue maximization}. In
  \bibinfo{booktitle}{{\em Proceedings of the 46th Annual ACM Symposium on
  Theory of Computing}}. ACM, \bibinfo{pages}{243--252}.
\newblock


\bibitem[\protect\citeauthoryear{Dekel, Fischer, and Procaccia}{Dekel
  et~al\mbox{.}}{2008}]%
        {dekel2008incentive}
\bibfield{author}{\bibinfo{person}{Ofer Dekel}, \bibinfo{person}{Felix
  Fischer}, {and} \bibinfo{person}{Ariel~D Procaccia}.}
  \bibinfo{year}{2008}\natexlab{}.
\newblock \showarticletitle{Incentive compatible regression learning}. In
  \bibinfo{booktitle}{{\em Proceedings of the nineteenth annual ACM-SIAM
  symposium on Discrete algorithms}}. Society for Industrial and Applied
  Mathematics, \bibinfo{pages}{884--893}.
\newblock


\bibitem[\protect\citeauthoryear{Erev, Ert, Roth, Haruvy, Herzog, Hau, Hertwig,
  Stewart, West, and Lebiere}{Erev et~al\mbox{.}}{2010}]%
        {erev2010choice}
\bibfield{author}{\bibinfo{person}{Ido Erev}, \bibinfo{person}{Eyal Ert},
  \bibinfo{person}{Alvin~E Roth}, \bibinfo{person}{Ernan Haruvy},
  \bibinfo{person}{Stefan~M Herzog}, \bibinfo{person}{Robin Hau},
  \bibinfo{person}{Ralph Hertwig}, \bibinfo{person}{Terrence Stewart},
  \bibinfo{person}{Robert West}, {and} \bibinfo{person}{Christian Lebiere}.}
  \bibinfo{year}{2010}\natexlab{}.
\newblock \showarticletitle{A choice prediction competition: Choices from
  experience and from description}.
\newblock \bibinfo{journal}{{\em Journal of Behavioral Decision Making\/}}
  \bibinfo{volume}{23}, \bibinfo{number}{1} (\bibinfo{year}{2010}),
  \bibinfo{pages}{15--47}.
\newblock


\bibitem[\protect\citeauthoryear{Glicksberg}{Glicksberg}{1952}]%
        {glicksberg1952further}
\bibfield{author}{\bibinfo{person}{Irving~L Glicksberg}.}
  \bibinfo{year}{1952}\natexlab{}.
\newblock \showarticletitle{A further generalization of the Kakutani fixed
  point theorem, with application to Nash equilibrium points}.
\newblock \bibinfo{journal}{{\it Proc. Amer. Math. Soc.}} \bibinfo{volume}{3},
  \bibinfo{number}{1} (\bibinfo{year}{1952}), \bibinfo{pages}{170--174}.
\newblock


\bibitem[\protect\citeauthoryear{Gonczarowski and Nisan}{Gonczarowski and
  Nisan}{}]%
        {GonczarowskiN17}
\bibfield{author}{\bibinfo{person}{Yannai~A. Gonczarowski} {and}
  \bibinfo{person}{Noam Nisan}.}
\newblock \showarticletitle{Efficient empirical revenue maximization in
  single-parameter auction environments}. In \bibinfo{booktitle}{{\em
  Proceedings of the 49th Annual {ACM} {SIGACT} Symposium on Theory of
  Computing, {STOC} 2017}}.
\newblock


\bibitem[\protect\citeauthoryear{Hardt, Megiddo, Papadimitriou, and
  Wootters}{Hardt et~al\mbox{.}}{2016}]%
        {Hardt2016}
\bibfield{author}{\bibinfo{person}{Moritz Hardt}, \bibinfo{person}{Nimrod
  Megiddo}, \bibinfo{person}{Christos Papadimitriou}, {and}
  \bibinfo{person}{Mary Wootters}.} \bibinfo{year}{2016}\natexlab{}.
\newblock \showarticletitle{Strategic Classification}. In
  \bibinfo{booktitle}{{\em Proceedings of the 2016 ACM Conference on
  Innovations in Theoretical Computer Science}} {\em (\bibinfo{series}{ITCS
  '16})}.
\newblock


\bibitem[\protect\citeauthoryear{Hotelling}{Hotelling}{1929}]%
        {Hotelling}
\bibfield{author}{\bibinfo{person}{H. Hotelling}.}
  \bibinfo{year}{1929}\natexlab{}.
\newblock \showarticletitle{Stability in Competition}.
\newblock  (\bibinfo{year}{1929}).
\newblock
\newblock
\shownote{In the Economic Journal 39(153): 41--57, 1929.}


\bibitem[\protect\citeauthoryear{Immorlica, Kalai, Lucier, Moitra, Postlewaite,
  and Tennenholtz}{Immorlica et~al\mbox{.}}{2011}]%
        {immorlica2011dueling}
\bibfield{author}{\bibinfo{person}{Nicole Immorlica},
  \bibinfo{person}{Adam~Tauman Kalai}, \bibinfo{person}{Brendan Lucier},
  \bibinfo{person}{Ankur Moitra}, \bibinfo{person}{Andrew Postlewaite}, {and}
  \bibinfo{person}{Moshe Tennenholtz}.} \bibinfo{year}{2011}\natexlab{}.
\newblock \showarticletitle{Dueling algorithms}. In \bibinfo{booktitle}{{\em
  Proceedings of the forty-third annual ACM symposium on Theory of computing}}.
  ACM, \bibinfo{pages}{215--224}.
\newblock


\bibitem[\protect\citeauthoryear{Lipton, Markakis, and Mehta}{Lipton
  et~al\mbox{.}}{2003}]%
        {lipton2003playing}
\bibfield{author}{\bibinfo{person}{Richard~J Lipton},
  \bibinfo{person}{Evangelos Markakis}, {and} \bibinfo{person}{Aranyak Mehta}.}
  \bibinfo{year}{2003}\natexlab{}.
\newblock \showarticletitle{Playing large games using simple strategies}. In
  \bibinfo{booktitle}{{\em Proceedings of the 4th ACM conference on Electronic
  commerce}}. ACM, \bibinfo{pages}{36--41}.
\newblock


\bibitem[\protect\citeauthoryear{Mansour, Slivkins, and Wu}{Mansour
  et~al\mbox{.}}{2018}]%
        {MansourSW18}
\bibfield{author}{\bibinfo{person}{Yishay Mansour}, \bibinfo{person}{Aleksandrs
  Slivkins}, {and} \bibinfo{person}{Zhiwei~Steven Wu}.}
  \bibinfo{year}{2018}\natexlab{}.
\newblock \showarticletitle{Competing Bandits: Learning Under Competition}. In
  \bibinfo{booktitle}{{\em 9th Innovations in Theoretical Computer Science
  Conference, {ITCS} 2018, January 11-14, 2018, Cambridge, MA, {USA}}}.
  \bibinfo{pages}{48:1--48:27}.
\newblock


\bibitem[\protect\citeauthoryear{Meir, Procaccia, and Rosenschein}{Meir
  et~al\mbox{.}}{2012}]%
        {meir2012algorithms}
\bibfield{author}{\bibinfo{person}{Reshef Meir}, \bibinfo{person}{Ariel~D
  Procaccia}, {and} \bibinfo{person}{Jeffrey~S Rosenschein}.}
  \bibinfo{year}{2012}\natexlab{}.
\newblock \showarticletitle{Algorithms for strategyproof classification}.
\newblock \bibinfo{journal}{{\em Artificial Intelligence\/}}
  \bibinfo{volume}{186} (\bibinfo{year}{2012}), \bibinfo{pages}{123--156}.
\newblock


\bibitem[\protect\citeauthoryear{Monderer and Shapley}{Monderer and
  Shapley}{1996}]%
        {monderer1996potential}
\bibfield{author}{\bibinfo{person}{Dov Monderer} {and} \bibinfo{person}{Lloyd~S
  Shapley}.} \bibinfo{year}{1996}\natexlab{}.
\newblock \showarticletitle{Potential games}.
\newblock \bibinfo{journal}{{\em Games and economic behavior\/}}
  \bibinfo{volume}{14}, \bibinfo{number}{1} (\bibinfo{year}{1996}),
  \bibinfo{pages}{124--143}.
\newblock


\bibitem[\protect\citeauthoryear{Morgenstern and Roughgarden}{Morgenstern and
  Roughgarden}{2015}]%
        {morgenstern2015pseudo}
\bibfield{author}{\bibinfo{person}{Jamie~H Morgenstern} {and}
  \bibinfo{person}{Tim Roughgarden}.} \bibinfo{year}{2015}\natexlab{}.
\newblock \showarticletitle{On the pseudo-dimension of nearly optimal
  auctions}. In \bibinfo{booktitle}{{\em Advances in Neural Information
  Processing Systems}}. \bibinfo{pages}{136--144}.
\newblock


\bibitem[\protect\citeauthoryear{Myerson}{Myerson}{1981}]%
        {myerson1981optimal}
\bibfield{author}{\bibinfo{person}{Roger~B Myerson}.}
  \bibinfo{year}{1981}\natexlab{}.
\newblock \showarticletitle{Optimal auction design}.
\newblock \bibinfo{journal}{{\em Mathematics of operations research\/}}
  \bibinfo{volume}{6}, \bibinfo{number}{1} (\bibinfo{year}{1981}),
  \bibinfo{pages}{58--73}.
\newblock


\bibitem[\protect\citeauthoryear{Rosenthal}{Rosenthal}{1973}]%
        {rosenthal1973class}
\bibfield{author}{\bibinfo{person}{Robert~W Rosenthal}.}
  \bibinfo{year}{1973}\natexlab{}.
\newblock \showarticletitle{A class of games possessing pure-strategy Nash
  equilibria}.
\newblock \bibinfo{journal}{{\em International Journal of Game Theory\/}}
  \bibinfo{volume}{2}, \bibinfo{number}{1} (\bibinfo{year}{1973}),
  \bibinfo{pages}{65--67}.
\newblock


\bibitem[\protect\citeauthoryear{Sauer}{Sauer}{1972}]%
        {sauer1972density}
\bibfield{author}{\bibinfo{person}{Norbert Sauer}.}
  \bibinfo{year}{1972}\natexlab{}.
\newblock \showarticletitle{On the density of families of sets}.
\newblock \bibinfo{journal}{{\em Journal of Combinatorial Theory, Series A\/}}
  \bibinfo{volume}{13}, \bibinfo{number}{1} (\bibinfo{year}{1972}),
  \bibinfo{pages}{145--147}.
\newblock


\bibitem[\protect\citeauthoryear{Shalev-Shwartz and Ben-David}{Shalev-Shwartz
  and Ben-David}{2014}]%
        {shalev2014understanding}
\bibfield{author}{\bibinfo{person}{Shai Shalev-Shwartz} {and}
  \bibinfo{person}{Shai Ben-David}.} \bibinfo{year}{2014}\natexlab{}.
\newblock \bibinfo{booktitle}{{\em Understanding machine learning: From theory
  to algorithms}}.
\newblock \bibinfo{publisher}{Cambridge University Press}.
\newblock


\bibitem[\protect\citeauthoryear{Simon}{Simon}{1956}]%
        {simon1956rational}
\bibfield{author}{\bibinfo{person}{Herbert~A Simon}.}
  \bibinfo{year}{1956}\natexlab{}.
\newblock \showarticletitle{Rational choice and the structure of the
  environment.}
\newblock \bibinfo{journal}{{\em Psychological Review\/}} \bibinfo{volume}{63},
  \bibinfo{number}{2} (\bibinfo{year}{1956}), \bibinfo{pages}{129}.
\newblock


\bibitem[\protect\citeauthoryear{Valiant}{Valiant}{1984}]%
        {valiant1984theory}
\bibfield{author}{\bibinfo{person}{Leslie~G Valiant}.}
  \bibinfo{year}{1984}\natexlab{}.
\newblock \showarticletitle{A theory of the learnable}.
\newblock \bibinfo{journal}{{\it Commun. ACM}} \bibinfo{volume}{27},
  \bibinfo{number}{11} (\bibinfo{year}{1984}), \bibinfo{pages}{1134--1142}.
\newblock


\bibitem[\protect\citeauthoryear{Vapnik and Chervonenkis}{Vapnik and
  Chervonenkis}{1971}]%
        {vapnik1971uniform}
\bibfield{author}{\bibinfo{person}{Vladimir Vapnik} {and}
  \bibinfo{person}{Alexey Chervonenkis}.} \bibinfo{year}{1971}\natexlab{}.
\newblock \showarticletitle{On the Uniform Convergence of Relative Frequencies
  of Events to Their Probabilities}.
\newblock \bibinfo{journal}{{\em Theory of Probability and its Applications\/}}
  \bibinfo{volume}{16}, \bibinfo{number}{2} (\bibinfo{year}{1971}),
  \bibinfo{pages}{264}.
\newblock


\bibitem[\protect\citeauthoryear{Vapnik and Chervonenkis}{Vapnik and
  Chervonenkis}{2015}]%
        {vapnik2015uniform}
\bibfield{author}{\bibinfo{person}{Vladimir~N Vapnik} {and}
  \bibinfo{person}{A~Ya Chervonenkis}.} \bibinfo{year}{2015}\natexlab{}.
\newblock \showarticletitle{On the uniform convergence of relative frequencies
  of events to their probabilities}.
\newblock In \bibinfo{booktitle}{{\em Measures of complexity}}.
  \bibinfo{publisher}{Springer}, \bibinfo{pages}{11--30}.
\newblock


\bibitem[\protect\citeauthoryear{Wolpert and Macready}{Wolpert and
  Macready}{1997}]%
        {wolpert1997no}
\bibfield{author}{\bibinfo{person}{David~H Wolpert} {and}
  \bibinfo{person}{William~G Macready}.} \bibinfo{year}{1997}\natexlab{}.
\newblock \showarticletitle{No free lunch theorems for optimization}.
\newblock \bibinfo{journal}{{\em IEEE transactions on evolutionary
  computation\/}} \bibinfo{volume}{1}, \bibinfo{number}{1}
  (\bibinfo{year}{1997}), \bibinfo{pages}{67--82}.
\newblock


\end{thebibliography}
%%% -*-BibTeX-*-
%%% Do NOT edit. File created by BibTeX with style
%%% ACM-Reference-Format-Journals [18-Jan-2012].

{\ifnum\Includeappendix=1{ %statrting appendices

\appendix
\section{Proof of Lemma \ref{LEMMA:REGTOCLASS}}

\begin{proof}[\textbf{Proof of Lemma \ref{LEMMA:REGTOCLASS}}]
\omer{read it all and verify}
First, we define two auxiliary classes of binary functions $\mG^{\geq},\mG^{\leq}$ such that
\begin{align}
\label{eq:ggeqeqdef}
&\mG^{\geq} =\{g_h^\geq (x,r)= \ind_{ h(x) \geq r} \mid h\in \mH_i ,(x,r)\in \mX\times \mathbb{R}  \},\nonumber \\
&\mG^{\leq} =\{g_h^\leq (x,r)=\ind_{ h(x) \leq r} \mid h\in \mH_i ,(x,r)\in \mX\times \mathbb{R}  \}.
\end{align}
\begin{claim}
\label{claim:auxiliaryg}
$\vc(\mG^{\geq})=\vc(\mG^{\leq}) = d_i$. 
\end{claim}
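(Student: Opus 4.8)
The plan is to establish the two inequalities $\vc(\mG^{\geq}) \ge d_i$ and $\vc(\mG^{\geq}) \le d_i$ (and identically for $\mG^{\leq}$), exploiting the fact that $\mG^{\geq}$ and $\mG^{\leq}$ are exactly the ``subgraph'' and ``supergraph'' indicator classes attached to $\mH_i$, whose VC dimension is known to coincide with the pseudo-dimension of $\mH_i$. The two classes are handled by mirror-image arguments, so I would write the $\mG^{\geq}$ case in full and then indicate the symmetric changes for $\mG^{\leq}$.

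For the lower bound $\vc(\mG^{\geq}) \ge d_i$: since $\pdim(\mH_i) = d_i$, there are distinct points $x_1, \dots, x_{d_i} \in \mX$ and reals $r_1, \dots, r_{d_i}$ that are pseudo-shattered, i.e. for every $\bl b \in \{-1,1\}^{d_i}$ there is $h_{\bl b} \in \mH_i$ with $\sign(h_{\bl b}(x_j) - r_j) = b_j$. I would feed the $d_i$ (distinct) pairs $(x_1, r_1), \dots, (x_{d_i}, r_{d_i}) \in \mX \times \R$ to $\mG^{\geq}$ and verify shattering directly: because $b_j \in \{-1,1\}$ forces a strict inequality, $b_j = 1$ gives $h_{\bl b}(x_j) > r_j$ hence $g^{\geq}_{h_{\bl b}}(x_j, r_j) = 1$, while $b_j = -1$ gives $h_{\bl b}(x_j) < r_j$ hence $g^{\geq}_{h_{\bl b}}(x_j,r_j) = 0$. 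Every pattern in $\{0,1\}^{d_i}$ is thus realized, so $\vc(\mG^{\geq}) \ge d_i$.

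For the upper bound $\vc(\mG^{\geq}) \le d_i$, I would assume $\mG^{\geq}$ shatters pairs $(x_1, r_1), \dots, (x_m, r_m)$ and produce a pseudo-shattered $m$-set for $\mH_i$, which forces $m \le d_i$. Shattering yields, for each target $\bl c \in \{0,1\}^m$, a witness $h_{\bl c}$ with $h_{\bl c}(x_j) \ge r_j \iff c_j = 1$. The main obstacle is precisely the boundary case: the $\ge$ in $\mG^{\geq}$ permits the equality $h_{\bl c}(x_j) = r_j$, whereas pseudo-shattering demands the strict sign $\sign(h_{\bl c}(x_j) - r_j) = +1$. To repair this I would use finiteness: only $2^m$ witnesses are needed, so for each coordinate $j$ the finite nonempty set $\{h_{\bl c}(x_j) : c_j = 0\}$ lies strictly below $r_j$ and attains a maximum $M_j < r_j$. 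Choosing a shifted threshold $r_j' \in (M_j, r_j)$ gives strict separation on both sides, namely $c_j = 1 \Rightarrow h_{\bl c}(x_j) \ge r_j > r_j'$ and $c_j = 0 \Rightarrow h_{\bl c}(x_j) \le M_j < r_j'$, so $\{x_1, \dots, x_m\}$ is pseudo-shattered with thresholds $(r_j')$, whence $m \le d_i$.

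Finally, for $\mG^{\leq}$ the argument is the reflection: the lower bound is identical after relabeling $b_j \mapsto -b_j$, and in the upper bound I would shift the threshold slightly \emph{upward}, taking $r_j' \in (r_j, m_j)$ with $m_j = \min\{h_{\bl c}(x_j) : c_j = 0\} > r_j$, to turn the non-strict $\le$ into a strict sign. Combining the inequalities yields $\vc(\mG^{\geq}) = \vc(\mG^{\leq}) = d_i$. I expect the perturbation step in the upper bound to be the only genuinely delicate point; the remainder is a routine translation between the VC-shattering of the indicator pairs and the pseudo-shattering of $\mH_i$.
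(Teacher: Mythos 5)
Your proof is correct and follows essentially the same route as the paper: the lower bound by feeding the pseudo-shattered pairs $(x_j,r_j)$ directly to $\mG^{\geq}$, and the upper bound by perturbing each threshold into the gap between $r_j$ and the finite maximum of the witnesses on the "below" side (the paper's choice $r_j^*=(a_j+r_j)/2$ is just one instance of your interval $(M_j,r_j)$), with the $\mG^{\leq}$ case handled by the mirror argument. No gaps.
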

The proof of Claim \ref{claim:auxiliaryg} appears after the proof of this lemma. Next, we wish to bound the growth function of $\mF_i$ using the growth function of $\mG^{\geq}$ and $\mG^{\leq}$.
\begin{claim}
\label{claim:fgrowth}
$\Pi_{\mF_i}(m) \leq \Pi_{\mG^{\geq}}(m) \cdot \Pi_{\mG^{\leq}}(m)$.
\end{claim}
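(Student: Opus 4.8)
The engine of the proof is an exact product factorization of the indicator $\mI$ into the two one‑sided threshold functions defining $\mG^{\geq}$ and $\mG^{\leq}$. For $z=(x,y,t)$ and $h\in\mH_i$ we have $\mI(z,h)=1$ iff $\abs{h(x)-y}\leq t$, which is equivalent to the conjunction $h(x)\geq y-t$ \emph{and} $h(x)\leq y+t$. Since all three inequalities here are non-strict, this gives the pointwise identity $\mI\big((x,y,t),h\big)=g_h^{\geq}(x,\,y-t)\cdot g_h^{\leq}(x,\,y+t)$ as a product of two $\{0,1\}$ values, with \emph{no} boundary discrepancy (e.g.\ when $h(x)=y+t$ both $\mI$ and $g_h^{\leq}$ evaluate to $1$). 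This exact matching at the boundary is the one point I would verify carefully; everything else is a counting argument.

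First I would fix an arbitrary sample $\mS=(z_1,\dots,z_m)\in\mZ^m$, with $z_j=(x_j,y_j,t_j)$, and build from it two auxiliary samples in $\mX\times\R$, namely $\mS^{\geq}=\big((x_j,\,y_j-t_j)\big)_{j=1}^m$ and $\mS^{\leq}=\big((x_j,\,y_j+t_j)\big)_{j=1}^m$. By the factorization above, for every $h\in\mH_i$ the restriction vector $\big(\mI(z_1,h),\dots,\mI(z_m,h)\big)$ is exactly the coordinate-wise product of the vector $a_h=\big(g_h^{\geq}(x_j,y_j-t_j)\big)_{j=1}^m\in\mG^{\geq}\cap\mS^{\geq}$ and the vector $b_h=\big(g_h^{\leq}(x_j,y_j+t_j)\big)_{j=1}^m\in\mG^{\leq}\cap\mS^{\leq}$.

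The counting step is then immediate. Consider the assignment $h\mapsto(a_h,b_h)$, whose image is a subset of $(\mG^{\geq}\cap\mS^{\geq})\times(\mG^{\leq}\cap\mS^{\leq})$. Because the $\mF_i$-dichotomy of $h$ is recovered as the coordinate-wise product $a_h\odot b_h$, it is a function of the pair $(a_h,b_h)$; hence two functions producing distinct $\mF_i$-dichotomies must produce distinct pairs. This yields $\abs{\mF_i\cap\mS}\leq\abs{\mG^{\geq}\cap\mS^{\geq}}\cdot\abs{\mG^{\leq}\cap\mS^{\leq}}\leq\Pi_{\mG^{\geq}}(m)\cdot\Pi_{\mG^{\leq}}(m)$, where the last inequality uses that $\mS^{\geq}$ and $\mS^{\leq}$ are samples of size $m$ in the domain $\mX\times\R$ of $\mG^{\geq}$ and $\mG^{\leq}$. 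Taking the maximum over all $\mS\in\mZ^m$ on the left-hand side gives $\Pi_{\mF_i}(m)\leq\Pi_{\mG^{\geq}}(m)\cdot\Pi_{\mG^{\leq}}(m)$, as claimed. I do not anticipate a genuine obstacle beyond the boundary-case check in the factorization; the remainder is a routine ``a dichotomy determined by a pair is counted by the product'' bound.
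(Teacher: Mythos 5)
Your proof is correct and follows essentially the same route as the paper's: both decompose $\mI(z,h)$ into the conjunction $g_h^{\geq}(x,y-t)\wedge g_h^{\leq}(x,y+t)$, restrict $\mG^{\geq}$ and $\mG^{\leq}$ to the shifted samples $(x_j,y_j-t_j)_j$ and $(x_j,y_j+t_j)_j$, and count dichotomies of $\mF_i$ by injecting them into pairs of dichotomies. Your phrasing via the exact product factorization (so the dichotomy factors through the pair $(a_h,b_h)$) is if anything a slightly cleaner way to state the injectivity than the paper's explicit case analysis of which pairs $(0,0),(0,1),(1,0),(1,1)$ can arise.
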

The proof of Claim \ref{claim:fgrowth} appears after the proof of this lemma. We are now ready for the final argument. By the Sauer-Shelah lemma we know that every $m$ satisfying $2^m > \Pi_{\mF_i}(m)$ is an upper bound on $\vc(\mF_i)$ \cite{sauer1972density}. In particular, for $m=10d_i$ we have
\begin{align}
2^{5d_i}&=(31+1)^{d_i}=\sum_{j=0}^{d_i}31^j 1^{d_i-j} {d_i \choose j} \stackrel {\text{Claim \ref{claim:binomialfactors}}}{\geq }
\sum_{j=0}^{d_i} \left( \frac{31 d_i}{j}\right)^j  \\
&> \sum_{j=0}^{d_i} \left( \frac{10 e d_i}{j}\right)^j \stackrel {\text{Claim \ref{claim:binomialfactors}}}{\geq } \sum_{j=0}^{d_i} {10d_i \choose j} \nonumber \\ 
&\geq \Pi_{\mG^{\geq}}(10d_i)=\Pi_{\mG^{\leq}}(10d_i);
\end{align}
therefore,
\[
\Pi_{\mF_i}(10d_i) \leq \Pi_{\mG^{\geq}}(10d_i)\Pi_{\mG^{\leq}}(10d_i) < 2^{10d_i} .
\]

\end{proof}

\begin{proof}[\textbf{Proof of Claim \ref{claim:auxiliaryg}}]

We prove the claim for $\mG^{\geq}$, and by symmetric arguments one can show it holds for $\mG^{\leq }$ as well.

Since $\pdim(\mH_i)=d_i$, for every $m\leq d_i$ there is a sample $\mS=(x_1,\dots  x_m)\in \mX^m$ and a witness $\bl r=(r_1,\dots r_m) \in \R^m$ such that for every binary vector $\bl b\in \{-1,1\}^m$ there is a function $h_{\bl b}$ for which $\sign(h_{\bl b}(x_j)-r_j)=b_j$ for all $j\in [m]$. Denote 
\[
\mS' = \left( (x_1,r_1),\dots,(x_m,r_m) \right),
\]
and focus on a particular $\bl b\in \{-1,1\}^m$. For every $j\in [m]$ such that $b_j=1$ we have
\[
\sign(h_{\bl b}(x_j)-r_j)=1 \Rightarrow  h_{\bl b}(x_j)-r_j >0 \Rightarrow\ind_{h_{\bl b}(x)\geq r}(x_j,r_j)=1.
\]
In addition, if $b_j=-1$ then
\[
\sign(h_{\bl b}(x_j)-r_j)=-1 \Rightarrow  h_{\bl b}(x_j)-r_j <0 \Rightarrow\ind_{h_{\bl b}(x)\geq r}(x_j,r_j)=0.
\]
This is true for every $\bl b$; therefore, we showed that $\mG^{\geq}$ shatters $\mS'$.

In the opposite direction, assume by contradiction that $\mG^{\geq}$ shatters $\mS'=\left( (x_1,r_1),\dots,(x_m,r_m) \right)$ for $m\geq d_i+1$. Let $H=\{ h_{\bl b} \}_{\bl b\in \{-1,1\}} \subset \mH_i$ be a set of functions such that for every $\bl b$ there exists exactly one function $h_{\bl b}\in H$ satisfying $\ind_{h_{\bl b}(x)\geq r}(x_j,r_j)=1$ if $b_j=1$ and $\ind_{h_{\bl b}(x)\geq r}(x_j,r_j)=0$ if $b_j=0$. Notice that by definition of the VC dimension, such an $H$ must exist, and that $\abs{H}=2^m$.

One cannot claim directly that $\mH_i$ pseudo-shatters $\mS=(x_1,\dots,x_m)$ with witness $\bl r=(r_1,\dots r_m)$, since $h_{\bl b}(x_j)=r_j$ may hold for $b_j=1$, but we need $h_{\bl b}(x_j)$ to be strictly greater than $r_j$; therefore, we construct a new witness: let $a_j$ be such that
\[
a_j= \max_{h_{\bl b}\in H, b_j=-1} h_{\bl b}(x_j).
\]
Notice that $\abs{H}$ is finite so the maximum is well defined. In addition, $a_j < r_j$ since $h_{\bl b}(x_j) <r_j$ for every $\bl b$ such that $b_j=-1$ (recall that if $b_j=0$ then  $\ind_{h_{\bl b}(x)\geq r}(x_j,r_j)=0$ ).

Denote $\bl r^* = \frac{\bl a +\bl r}{2}$. Next, we claim that $\mH_i$ pseudo-shatters $\mS$ with the witness $\bl r^*$. Fix $\bl b\in\{-1,1\}^m$. If $b_j=-1$,
\[
\ind_{h_{\bl b}(x)\geq r}(x_j,r_j)=0 \Rightarrow h_{\bl b}(x_j) \leq  a_j \Rightarrow h_{\bl b}(x_j) < r_j^* \Rightarrow \ind_{h_{\bl b}(x)\geq r}(x_j,r^*_j)=0.
\]

On the other hand, if $b_j=1$, we have
\[
\ind_{h_{\bl b}(x)\geq r}(x_j,r_j)=1 \Rightarrow h_{\bl b}(x_j) \geq  r_j \Rightarrow h_{\bl b}(x_j) > r_j^* \Rightarrow \ind_{h_{\bl b}(x)\geq r}(x_j,r^*_j)=1.
\]
Combining these two equations, we get that $\sign(h_{\bl b}(x_j)-r_j^*)=b_j$ for all $j\in [m]$. Consequently, $\mH_i$ pseudo-shatters $\mS$ with witness $\bl r^*$; hence we obtained a contradiction.

Overall, we showed that $\vc(\mG^{\geq}) \geq d_i$ and $\vc(\mG^{\geq}) \leq d_i$; hence $\vc(\mG^{\geq})=d_i$.
\end{proof}

\begin{proof}[\textbf{Proof of Claim \ref{claim:fgrowth}}]
Denote by $\mS=(x_j,y_j,t_j)_{j=1}^m \in \mZ^m$ an arbitrary sample, and let $\mF_i \cap \mS$ be the restriction of $\mF_i$ to $\mS$. Formally,
\[
\mF_i \cap \mS = \left\{(f_h(z_1),\dots,f_h(z_m))\mid f_h \in \mF_i  \right\}.
\] 

In addition, denote by $G^\geq $ the restriction of $\mG^\geq$ to $(x_j,y_j-t_j)_{j=1}^m$, and similarly let $G^\leq $ be the restriction of $\mG^\leq$ to $(x_j,y_j+t_j)_{j=1}^m$. We now show a one-to-one mapping $M: {\mF_i} \cap {\mS} \rightarrow G^\geq \times G^\leq$, implying that 
\begin{equation}
\label{eq:mnkasd}
\abs{{\mF_i} \cap {\mS} } \leq \abs{G^\geq \times G^\leq }
\end{equation}
holds, thereby proving the assertion. Notice that for every $f_h \in \mF_i$ such that $f_h(z)=1$ we have $\mI(z,h)=1$ for the corresponding $h\in \mH_i$; thus 
\begin{equation}
\label{eq:uioet}
-t_j\leq h(x_j)-y_j \leq t_j  
%\Rightarrow\begin{cases} h(x_j) \leq y_j+t_j  \\ h(x_j) \geq y_j-t_j  \\ \end{cases}
\Rightarrow
\begin{cases}
\ind_{ h(x_j) \leq y_j+t_j} = 1 \\
\ind_{ h(x_j) \geq y_j-t_j} = 1
\end{cases}
\Rightarrow
\begin{cases}
g_h^\leq (x_j,y_j+t_j)=1 \\
g_h^\geq (x_j,y_j-t_j)=1
\end{cases}. \footnote{Recall the definition of $g_h^\leq,g_h^\geq$ in Equation (\ref{eq:ggeqeqdef}).}
\end{equation}
Alternatively, if $f_h(z_j)=0$, we have $\mI(z_j,h)=0$ and
\begin{align}
\label{eq:uigfgfg}
&(h(x_j)-y_j < -t_j)  \lor ( h(x_j)-y_j > t_j) \Rightarrow 
(\ind_{ h(x_j) < y_j-t_j} = 1) \lor  (\ind_{ h(x_j)> y_j+t_j} =1) \nonumber \\
& \Rightarrow (\ind_{ h(x_j) \geq y_j-t_j} = 0) \lor  (\ind_{ h(x_j) \leq y_j+t_j} = 0) \Rightarrow (g_h^\geq (x_j,y_j-t_j)=0) \lor  (g_h^\leq (x_j,y_j+t_j) = 0).
\end{align}
By Equations (\ref{eq:uioet}) and (\ref{eq:uigfgfg}) we have
\begin{equation}
\label{eq:adfjnaf}
\begin{cases}
f_h(z_j)=1 \Rightarrow (g_h^\leq (x_j, y_j+t_j),g_h^\geq(x_j, y_j-t_j))=(1,1)\\
f_h(z_j)=0 \Rightarrow (g_h^\leq(x_j, y_j+t_j),g_h^\geq (x_j, y_j-t_j))\in \{(0,0),(0,1),(1,0)\}
\end{cases}.
\end{equation}
We define the mapping $M$ such that every vector $\left(\mI(z_1,h_1),\dots,\mI(z_m,h_1) \right)\in \mF_i \cap \mS$ is mapped to 
\[
\left(g_h^\geq (x_1, y_1-t_1),\dots, g_h^\geq (x_m, y_m-t_m),g_h^\leq (x_1, y_1+t_1),\dots ,g_h^\leq (x_m, y_m+t_m) \right)
\in G^\geq \times G^\leq.
\]
Namely, every vector obtained by applying $f_h$ on the sample $\mS$ is mapped to the vector formed by concatenating the two corresponding (same $h$) vectors from $G^\leq $ and $G^\geq $. Let $\bl b^1,\bl b^2 \in {\mF_i} \cap \mS$ such that  $b^1_j \neq b^2_j$ for at least one index $j\in[m]$, and w.l.o.g. let $b^1_j=1$. Since $b^1_j=f_h(z_j)=\mI(z_j,h)$, Equation (\ref{eq:adfjnaf}) implies that $M(\bl b^1)_j = M(\bl b^1)_{j+m} = 1$, while at least one of $\{  M(\bl b^2)_j, M(\bl b^2)_{j+m}\}$ equals zero; thus $M(\bl b^1) \neq M(\bl b^2)$. Hence $M$ is an injection. 

Ultimately, notice that $\mS$ is arbitrary; thus
\[\Pi_{\mF_i}(m) =\max_{\mS\in \mZ^m} \abs{{\mF_i}\cap \mS }\leq \abs{G^\geq \times G^\leq } = \abs{G^\geq } \cdot \abs{ G^\leq } \leq \Pi_{\mG^{\geq}}(m) \cdot \Pi_{\mG^{\leq}}(m).\]

\end{proof}

\begin{claim}
\label{claim:binomialfactors}
$\left(\frac{n}{k}  \right)^k \leq {n \choose k} \leq \left(\frac{e n}{k}  \right)^k   $. 
\end{claim}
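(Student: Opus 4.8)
The plan is to establish the two inequalities separately, both by elementary manipulation of the product representation of the binomial coefficient. Writing
\[
{n \choose k} = \prod_{i=0}^{k-1} \frac{n-i}{k-i},
\]
which one verifies by expanding numerator and denominator into $n(n-1)\cdots(n-k+1)$ and $k!$ respectively, puts both bounds within reach of short arguments. Throughout I assume $1 \le k \le n$, so that all the factors are positive.

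For the lower bound I would argue factor-by-factor. For each $i$ with $0 \le i \le k-1$, the inequality $\frac{n-i}{k-i} \ge \frac{n}{k}$ is equivalent (after clearing the positive denominators) to $k(n-i) \ge n(k-i)$, i.e. to $i(n-k) \ge 0$, which holds since $n \ge k$ and $i \ge 0$. Multiplying these $k$ inequalities together yields ${n \choose k} = \prod_{i=0}^{k-1} \frac{n-i}{k-i} \ge \left(\frac{n}{k}\right)^k$, as desired.

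For the upper bound I would first discard the shrinking terms in the numerator, using $n-i \le n$, to get ${n \choose k} \le \frac{n^k}{k!}$. It then suffices to show $k! \ge \left(\frac{k}{e}\right)^k$. This follows from the series expansion $e^k = \sum_{j\ge 0} \frac{k^j}{j!} \ge \frac{k^k}{k!}$, since every term is nonnegative; rearranging gives $k! \ge \frac{k^k}{e^k} = \left(\frac{k}{e}\right)^k$. Substituting back yields ${n \choose k} \le \frac{n^k}{(k/e)^k} = \left(\frac{en}{k}\right)^k$, completing the proof.

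There is no real obstacle here, as both estimates are standard; the only point requiring a moment's care is the justification of $k! \ge (k/e)^k$, which I would handle by the single-term lower bound on the series for $e^k$ rather than by invoking Stirling's formula, so as to keep the argument self-contained and free of asymptotic error terms.
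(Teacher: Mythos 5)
Your proof is correct and follows essentially the same route as the paper's: the lower bound via the factor-by-factor comparison $\frac{n-i}{k-i}\geq\frac{n}{k}$ (equivalent to $i(n-k)\geq 0$), and the upper bound via ${n\choose k}\leq \frac{n^k}{k!}$ combined with $k!\geq (k/e)^k$ extracted from the Taylor series of $e^k$. No gaps.
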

\begin{proof}[\textbf{Proof of Claim \ref{claim:binomialfactors}}] We prove the two claims separately.

$\bullet$ $\left(\frac{n}{k}  \right)^k \leq {n \choose k}$: fix $n$. We prove by induction for $k\leq n$. The assertion holds for $k=1$. For $k\geq 2$ and every $m$ such that $0<m<k\leq $ we have
\begin{equation*}
\label{eq:gkkmpdf}
k \leq n \Rightarrow \frac{m}{n} \leq \frac{m}{k} \Rightarrow 1 - \frac{m}{k} \leq 1- \frac{m}{n} \Rightarrow \frac{k-m}{k} \leq \frac{n-m}{n} \Rightarrow \frac{n}{k} \leq \frac{n-m}{k-m};
\end{equation*}
thus
\[
\left(\frac{n}{k}\right)^k = \frac{n}{k} \cdots \frac{n}{k}\leq \frac{n}{k}\frac{n-1}{k-1}\cdots\frac{n-k+1}{k-k+1} = \binom{n}{k}.
\]

$\bullet$ $ {n \choose k} \leq \left(\frac{e n}{k}  \right)^k  $:  since $e^k=\sum_{i=0}^\infty \frac{k^i}{i !}$ (the Taylor expansion of $e^k$), we have $e^k > \frac{k^k}{k!}$; thus, $\frac{1}{k!} < \left(\frac{e}{k}\right)^k$. As a result,
\[
\binom{n}{k}=\frac{n\cdot (n-1) \cdots (n-k+1)}{k!} \leq \frac{n^k}{k!}<\left(\frac{en}{k}   \right)^k.
\]
\end{proof}

\section{Omitted proofs}

\begin{proof}[\textbf{Proof of Lemma \ref{lemma:growthsumvc}}]
Recall that the Sauer-Shelah lemma implies that $\Pi_{\mF_i}(m) \leq (em/d_i)^{d_i}$ for $m>d_i+1$. Since $\abs{\mF \cap \mS}=\prod_{i=1}^N \abs{\mF_i \cap \mS}$, we have
\[
\Pi_{\mF}(m) \leq \prod_{i=1}^N \Pi_{\mF_i}(m) \leq \prod_{i=1}^N (em/d_i)^{10d_i}\leq  \prod_{i=1}^N (em)^{10d_i}=(em)^{10\sum_{i=1}^N d_i}.
\]
\end{proof}

\begin{proof}[\textbf{Proof of Claim \ref{claim:size and f}}]
To prove the claim, we show a one-to-one function from $\mW \cap \mS$ to $\mF\cap \mS$ for any arbitrary $\mS$. In particular, it is sufficient to show a mapping $ w(z;\bl h) \mapsto \mI(z,\bl h)$. Let $M$ be a mapping from $\{1,\frac{1}{2},\dots,\frac{1}{N},0\}^N$ to $\{0,1\}^N$ defined by
\[
M(\bl v) = \begin{cases}
\bl 0 & \text{if } \norm{\bl v}_1=0\\
\norm{\bl v}_0 \bl v  &\text{otherwise}
\end{cases}.
\]
For instance, if $N=4$ and $\bl v = (\frac{1}{2},\frac{1}{2},0,0)$, then $M(\bl v) = (1,1,0,0)$. Clearly, $M$ is a one-to-one mapping from $\text{Image}(w(z;\bl h))$ to $\text{Image}(\mI(z,\bl h))$. For every vector $\bl w\in \mW \cap \mS$, there exists $\bl h_{\bl w} \in \mH$ such that
\[
\bl w = (w(z_1;\bl h_{\bl w}),\dots,w(z_m;\bl h_{\bl w})).
\]
By applying $M$ on every entry of $\bl w$, we obtain $\bl v$ such that
\[
\bl v = (\mI(z_1;\bl h_{\bl w}),\dots,\mI(z_m;\bl h_{\bl w})).
\]
Ultimately, since $\bl h_{\bl w}\in \mH$, we conclude that $\bl v \in  \mF \cap \mS$.

In fact, by considering $M^{-1}$, $M^{-1}:\{0,1\}^N \rightarrow \{1,\frac{1}{2},\dots,\frac{1}{N},0\}^N$ such that for every $\bl u \in \{0,1\}^N$,
\[
M^{-1}(\bl u) = 
\begin{cases}
\bl 0 & \text{if } \norm{\bl u}=0\\
\frac{\bl u }{\norm{\bl u}_1} &\text{otherwise}
\end{cases},
\]
one can show that $\Pi_{\mF}(m) \leq \Pi_{\mW}(m)$ holds as well; therefore, $\Pi_{\mW}(m) = \Pi_{\mF}(m)$.
\end{proof}

\begin{proof}[\textbf{Proof of Lemma \ref{lemma:uniconvergenceoneplayer}}]
\newcommand{\hbad}{\tilde{ \bl h}(\mS)}
The proof follows closely the four steps in the proof of the classical uniform convergence theorem for binary functions (see, e.g., \cite{anthony2009neural,vapnik1971uniform}). The only steps that need modification are Steps 3 and 4, but we present the full proof for completeness.

\textit{Step 1 -- Symmetrization:}
First, we want to show that
\begin{equation}
\Pr_{\mS\sim \mD^m}\left(\exists \bl{h} : \abs{\pi_i(\bl h)-\pi_i^{\mS}(\bl h)} \geq  \epsilon \right) \leq 
2 \Pr_{(\mS,\mS')\sim \mD^m}\left(\exists \bl{h} : \abs{\pi_i^{\mS}(\bl h)-\pi_i^{\mS'}(\bl h)} \geq  \frac{\epsilon}{2} \right) \label{eq:steponeaa}.
\end{equation}

 For each $\mS$, let $\hbad$ be a function for which $\abs{\pi_i(\hbad)-\pi_i^{\mS}(\hbad)} \geq \epsilon$ if such a function exists, and any other fixed function in $\mH$ otherwise. Notice that if $\abs{\pi_i(\hbad)-\pi_i^{\mS}(\hbad)} \geq \epsilon$ and  $\abs{\pi_i(\hbad)-\pi_i^{\mS'}(\hbad)} \leq  \frac{\epsilon}{2}$, then $\abs{\pi_i^{\mS}(\hbad)-\pi_i^{\mS'}(\hbad)} \geq   \frac{\epsilon}{2}$ (triangle inequality); thus,
\begin{align*}
&\Pr_{(\mS,\mS')\sim \mD^m}\left(\exists \bl{h} : \abs{\pi_i^{\mS}(\bl h)-\pi_i^{\mS'}(\bl h)} \geq  \frac{\epsilon}{2} \right) \\
&\geq \Pr_{(\mS,\mS')\sim \mD^m}\left(  \abs{\pi_i^{\mS}(\hbad)-\pi_i^{\mS'}(\hbad)} \geq  \frac{\epsilon}{2} \right) \\
&\geq \Pr_{(\mS,\mS')\sim \mD^m}\left( \abs{\pi_i(\hbad)-\pi_i^{\mS}(\hbad)} \geq \epsilon \cap \abs{\pi_i(\hbad)-\pi_i^{\mS'}(\hbad)}  \leq \frac{\epsilon}{2} \right) \\
&=\E_{\mS\sim \mD^m} \left[ \ind\left( \abs{\pi_i(\hbad)-\pi_i^{\mS}(\hbad)} \geq \epsilon  \right)\Pr_{\mS'\mid \mS}\left( \abs{\pi_i(\hbad)-\pi_i^{\mS'}(\hbad)}  \leq \frac{\epsilon}{2}    \right) \right] \\ 
&\geq \frac{1}{2}\Pr_{\mS \sim \mD^m}\left(  \abs{\pi_i(\hbad)-\pi_i^{\mS}(\hbad)} \geq \epsilon \right)\\
&= \frac{1}{2}\Pr_{\mS \sim \mD^m}\left(  \exists \bl{h} : \abs{\pi_i(\bl h)-\pi_i^{\mS}(\bl h)} \geq \epsilon \right),
\end{align*}

since $\mS,\mS'$ are independent and due to Claim \ref{claim:rqep}.

\textit{Step 2 -- Permutations:} We denote $\Gamma_{2m}$ as the set of all permutations of $[2m]$ that swap $i$ and $m+i$ in some subset of $[m]$. Namely,
\[
\Gamma_{2m} = \{\sigma\in  \Pi([2m]) \mid \forall i\in[m] :\sigma(i) = i \lor \sigma(i)= m+i; \forall i,j \in [2m]: \sigma(i) = j \Leftrightarrow \sigma(j)=i   \},
\]
where $\Pi([2m])$ denotes the set of permutations over $[2m]$. In addition, for $\mS = (z_1,\dots,z_{2m})$, let $\sigma(\mS)=(z_{\sigma(1)},\dots,z_{\sigma(2m)})$. Notice that for every $\sigma \in \Gamma_{2m}$ it holds that
\[
\Pr_{(\mS,\mS')\sim \mD^m}\left(\exists \bl{h} : \abs{\pi_i^{\mS}(\bl h)-\pi_i^{\mS'}(\bl h)} \geq  \frac{\epsilon}{2} \right) = \Pr_{(\mS,\mS')\sim \mD^m}\left(\exists \bl{h} : \abs{\pi_i^{\sigma(\mS)}(\bl h)-\pi_i^{\sigma(\mS')}(\bl h)} \geq  \frac{\epsilon}{2} \right);
\]
hence,
\begin{align}
&\Pr_{(\mS,\mS')\sim \mD^{2m}}\left(\exists \bl{h} : \abs{\pi_i^{\mS}(\bl h)-\pi_i^{\mS'}(\bl h)} \geq  \frac{\epsilon}{2} \right) \nonumber\\
& =\frac{1}{2^m}\sum_{\sigma\in \Gamma_{2m}} \Pr_{(\mS,\mS')\sim \mD^{2m}}\left(\exists \bl{h} : \abs{\pi_i^{\sigma(\mS)}(\bl h)-\pi_i^{\sigma(\mS')}(\bl h)} \geq  \frac{\epsilon}{2} \right)\nonumber\\
& =\frac{1}{2^m}\sum_{\sigma\in \Gamma_{2m}} \E_{(\mS,\mS')\sim \mD^{2m}}\left[\ind_{\exists \bl{h} : \abs{\pi_i^{\sigma(\mS)}(\bl h)-\pi_i^{\sigma(\mS')}(\bl h)} \geq  \frac{\epsilon}{2}} \right]\nonumber\\
& =\E_{(\mS,\mS')\sim \mD^{2m}}\left[\frac{1}{2^m}\sum_{\sigma\in \Gamma_{2m}} \ind_{\exists \bl{h} : \abs{\pi_i^{\sigma(\mS)}(\bl h)-\pi_i^{\sigma(\mS')}(\bl h)} \geq  \frac{\epsilon}{2}} \right]\nonumber\\
& =\E_{(\mS,\mS')\sim \mD^{2m}}\left[\Pr_{\sigma\in \Gamma_{2m}} \left(\exists \bl{h} : \abs{\pi_i^{\sigma(\mS)}(\bl h)-\pi_i^{\sigma(\mS')}(\bl h)} \geq  \frac{\epsilon}{2}\right) \right]\nonumber\\
&\leq \sup_{(\mS,\mS')\sim \mD^{2m}}\left[\Pr_{\sigma\in \Gamma_{2m}} \left(\exists \bl{h} : \abs{\pi_i^{\sigma(\mS)}(\bl h)-\pi_i^{\sigma(\mS')}(\bl h)} \geq  \frac{\epsilon}{2}\right) \right] \label{eq:steptwoperm}.
\end{align}

\textit{Step 3 -- Reduction to finite class:} Fix $(\mS,\mS')$ and consider a random draw of $\sigma \in \Gamma_{2m}$. For each strategy profile $\bl h$, the quantity $\abs{\pi_i^{\sigma(\mS)}(\bl h)-\pi_i^{\sigma(\mS')}}$ is a random variable. Since $\mW\cap \mS$ represents the number of distinct strategy profiles in the empirical game over $\mS$ (see Subsection \ref{subsec:sample}), there are at most $\Pi_{\mW}(2m)$ such random variables.

\begin{align}
&\Pr_{\sigma\in \Gamma_{2m}} \left(\exists \bl{h} : \abs{\pi_i^{\sigma(\mS)}(\bl h)-\pi_i^{\sigma(\mS')}(\bl h)} \geq  \frac{\epsilon}{2}\right)  \nonumber\\
&\leq \Pi_{\mW}(2m) \sup_{\bl h \in \mH} \Pr_{\sigma\in \Gamma_{2m}} \left(\abs{\pi_i^{\sigma(\mS)}(\bl h)-\pi_i^{\sigma(\mS')}(\bl h)} \geq  \frac{\epsilon}{2}\right) \label{hehehehe}
\end{align}

\textit{Step 4 -- Hoeffding's inequality:} By viewing the previous equation as Rademacher random variables, we have
\begin{align*}
&\Pr_{\sigma\in \Gamma_{2m}} \left(\abs{\pi_i^{\sigma(\mS)}(\bl h)-\pi_i^{\sigma(\mS')}(\bl h)} \geq  \frac{\epsilon}{2}\right) \\
&= \Pr_{\sigma\in \Gamma_{2m}} \left(\frac{1}{m}\abs{\sum_{j=1}^m w_i(z_{\sigma(j)},\bl h)-w_i(z_{\sigma(j+m)},\bl h)} \geq  \frac{\epsilon}{2}\right)\\
& =\Pr_{r\in\{-1,1\}^m}\left(\frac{1}{m}\abs{\sum_{j=1}^m r_j\left(w_i(z_j,\bl h)-w_i(z_{j+m},\bl h)\right)} \geq  \frac{\epsilon}{2}\right).
\end{align*}
Observe that for every $j$ it holds that $r_j\left(w_i(z_j,\bl h)-w_i(z_{j+m},\bl h)\right) \in [-1,1]$, and
\[
\E_{r_j\in\{-1,1\}}\left[ r_j\left(w_i(z_j,\bl h)-w_i(z_{j+m},\bl h)\right) \right]=0
\]
holds due to symmetry. By applying Hoeffding's inequality we obtain
\begin{align}
\label{eq:sfourfin}
 \Pr_{r\in\{-1,1\}^m}\left(\frac{1}{m}\abs{\sum_{j=1}^m r_j\left(w_i(z_j,\bl h)-w_i(z_{j+m},\bl h)\right)} \geq  \frac{\epsilon}{2}\right) \leq 2e^{-\frac{m\epsilon^2}{8}}.
\end{align}
Finally, by combining Equations (\ref{eq:steponeaa}),(\ref{eq:steptwoperm}),(\ref{hehehehe}) and (\ref{eq:sfourfin}) we derive the desired result.
\end{proof}

\begin{proof}[\textbf{Proof of Theorem \ref{thm:unionbound}}]
The theorem follows immediately by applying the union bound on the inequality obtained in Lemma \ref{lemma:uniconvergenceoneplayer} and by substituting $\Pi_\mW(2m)$ according to Claim \ref{claim:size and f} and Lemma \ref{lemma:growthsumvc}.
\end{proof}

\begin{proof}[\textbf{Proof of Lemma \ref{lemma:PNEexistence}}]
The lemma is proven by showing that an induced game has a potential function $\Phi : \mH\cap \mS \rightarrow \mathbb R$, and relying on the results of \citet{monderer1996potential}. Namely, we show a function $\Phi$ such that for every $i,\bl h$ and $h_i'$ it holds that \[ \pi_i(\bl h)-\pi_i(h_i', \bl h_{-i}) = \Phi(\bl h)-\Phi(h_i', \bl h_{-i}). \]
Let $N(z_j;\bl h)$ denote the number of players satisfying the point $z_j$ under $\bl h$, namely $N(z_j;\bl h)=\sum_{i=1}^N \mI(z_j,h_{i})$. Next, let $\Phi(\bl h) \defeq \frac{1}{m}\sum_{j=1}^m \sum_{k=1}^{N(z_j;\bl h)} \frac{1}{k}$, and observe that
\begin{align*}
& \pi_i(\bl h)-\pi_i(h_i', \bl h_{-i})=\frac{1}{m}\sum_{j=1}^m w_i(z_j;\bl h) -   \frac{1}{m}\sum_{j=1}^m w_i(z_j;h_i', \bl h_{-i}) \\
& =\frac{1}{m}\sum_{j=1}^m \frac{\mI_i(z;h_i)}{\abs{N(z;\bl h)}}  -   \frac{1}{m}\sum_{j=1}^m  \frac{\mI_i(z;h_i')}{\abs{N(z;h_i', \bl h_{-i})} }+\frac{1}{m}\sum_{j=1}^m \sum_{k=1}^{N(z_j;\bl h_{-i})} \frac{1}{k}-\frac{1}{m}\sum_{j=1}^m \sum_{k=1}^{N(z_j;\bl h_{-i})} \frac{1}{k} \\
& = \frac{1}{m}\sum_{j=1}^m \sum_{k=1}^{N(z_j;\bl h)} \frac{1}{k} - \frac{1}{m}\sum_{j=1}^m \sum_{k=1}^{N(z_j;h_i, \bl h_{-i})} \frac{1}{k}=\Phi(\bl h)-\Phi(h_i', \bl h_{-i}).
\end{align*}
Since the empirical game is finite (i.e., finite number of players with finite strategy space for each player), we conclude that it possesses at least one PNE.
\end{proof}

\begin{proof}[\textbf{Proof of Lemma \ref{lemma:betterconverge}}]
Let $N(z_j;\bl h)$ as in the proof of Lemma \ref{lemma:PNEexistence}. In each iteration of the dynamics it holds that
\begin{align}
\label{eq:bnfesw}
\Phi(h_i', \bl h_{-i}) - \Phi(\bl h) = \pi_i(h_i', \bl h_{-i}) - \pi_i(\bl h) \geq \epsilon.
\end{align}
Notice that 
\begin{equation}
\label{eq:aewrbg}
\Phi(\bl h) =  \frac{1}{m}\sum_{j=1}^m \sum_{k=1}^{N(z_j;\bl h)} \frac{1}{k} \leq  \frac{1}{m}\sum_{j=1}^m \sum_{k=1}^{N} \frac{1}{k} \leq \ln N +1.
\end{equation}
Since the potential is bounded by $\ln N +1$ and increases by at least $\epsilon$ per iteration throughout the dynamics, after at most $\frac{\ln N +1}{\epsilon}$ iterations, it will reach its maximum value, thereby obtaining an $\epsilon$-PNE.
\end{proof}

\begin{proof}[\textbf{Proof of Lemma \ref{lemma:deltaandm}}]
By Equation (\ref{eq:inthmunionbound}), we look for $m$ that satisfies
\[
4N (2em)^{10\sum_{i=1}^N d_i}e^{-\frac{\epsilon^2 m}{8}} \leq \delta
\]
for given $\epsilon,\delta$; thus
\begin{align}
&(2em)^{10\sum_{i=1}^N d_i}e^{-\frac{\epsilon^2 m}{8}} \leq \frac{\delta}{4N} \nonumber \\
&\Rightarrow 10d \log(2em)-\frac{\epsilon^2 m}{8} \leq \log \frac{\delta}{4N}\nonumber \\
& \frac{\epsilon^2 m}{8} \geq 10d\log(2em)-\log \frac{\delta}{4N}\nonumber \\
& m\geq \frac{80d\log(2em)}{\epsilon^2}-\frac{8}{\epsilon^2}\log \frac{\delta}{4N}\nonumber \\
& m\geq \frac{80d}{\epsilon^2}\log(m)+\frac{80d\log(2e)}{\epsilon^2}+\frac{8}{\epsilon^2}\log \frac{4N}{\delta}. \label{eq:dgfjnkindfjsnk}
\end{align}
Next, 
\begin{claim}[\cite{shalev2014understanding}, Section A]
\label{claim:fromsssappendix}
Let $a\geq 1$ and $b>0$. If $m \geq 4a\log(2a)+2b$, then $m\geq a \log m +b$.
\end{claim}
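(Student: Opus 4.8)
The plan is to reduce the target inequality $m \ge a\log m + b$ to two independent bounds, each controlling one term on the right-hand side by $m/2$; adding them then yields the result. Throughout I read $\log$ as the natural logarithm, consistent with its usage in the excerpt (e.g.\ the terms $\log(2e)$ and $\log(2em)$ arising in the derivation of Lemma~\ref{lemma:deltaandm}). The first and easiest step handles the additive constant: since $a \ge 1$ we have $2a \ge 2$, so $\log(2a) \ge \log 2 > 0$ and the term $4a\log(2a)$ is nonnegative. Hence the hypothesis $m \ge 4a\log(2a) + 2b$ already forces $m \ge 2b$, i.e.\ $b \le m/2$. This costs nothing.

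It then remains to prove $a\log m \le m/2$, as combining this with $b \le m/2$ gives $a\log m + b \le m$. The engine here is the tangent-line (concavity) inequality for $\log$: for every $t > 0$, $\log m \le \log t + \frac{m-t}{t} = \log t + \frac{m}{t} - 1$. I would instantiate it at $t = 4a$, obtaining $\log m \le \log(4a) + \frac{m}{4a} - 1$, and multiply through by $a > 0$ to get $a\log m \le a\log(4a) + \frac{m}{4} - a$. Thus it suffices to establish $a\log(4a) - a \le m/4$, equivalently $m \ge 4a\bigl(\log(4a) - 1\bigr)$.

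Closing the loop is then a matter of matching this residual requirement to the given threshold. Writing $\log(4a) = \log(2a) + \log 2$ and using $\log 2 < 1$ gives $\log(4a) - 1 = \log(2a) + (\log 2 - 1) < \log(2a)$, whence $4a\bigl(\log(4a) - 1\bigr) < 4a\log(2a) \le m$, the last step using the hypothesis. Chaining these inequalities yields $a\log m \le m/2$, and together with the first step we conclude $a\log m + b \le m$.

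The argument is entirely elementary, so there is no genuinely hard obstacle; the only real decision — and the one place foresight is needed — is the choice of the linearization point $t$ in the tangent-line bound. Taking $t = 4a$ is exactly what makes the leftover demand $m \ge 4a(\log(4a)-1)$ line up with the available bound $4a\log(2a)$, precisely because the resulting slack $\log 2 - 1$ is negative. A more naive choice such as $t = 2a$, or linearizing at $t = m$ itself, would either leave the constants unabsorbed or introduce a circular dependence on $m$, so pinning down $t = 4a$ is the step I would flag as the crux.
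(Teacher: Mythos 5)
Your proof is correct. Note that the paper itself gives no proof of this claim at all: it is imported verbatim from Shalev-Shwartz and Ben-David (Section A of their book) as a black box inside the proof of Lemma \ref{lemma:deltaandm}, so there is no internal argument to compare against. Your decomposition into $b \le m/2$ (immediate from $4a\log(2a)\ge 0$ when $a\ge 1$) and $a\log m \le m/2$ (via the tangent-line bound $\log m \le \log t + m/t - 1$ at $t=4a$, whose residual requirement $m \ge 4a(\log(4a)-1)$ is absorbed by $4a\log(2a)\le m$ because $\log 2 < 1$) is sound at every step, and in fact delivers the strict inequality $a\log m + b < m$. This is essentially the same elementary route the cited textbook takes, so the claim is fully justified.
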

Set $a = \frac{80d}{\epsilon^2}$ and $b=\frac{80d\log(2e)}{\epsilon^2}+\frac{8}{\epsilon^2}\log \frac{4N}{\delta}$. Due to Claim \ref{claim:fromsssappendix}, we know that every $m$ that satisfies
\[
m \geq \frac{320d}{\epsilon^2} \log\left( \frac{160d}{\epsilon^2} \right) +\frac{160d\log(2e)}{\epsilon^2}+\frac{16}{\epsilon^2}\log\left( \frac{4N}{\delta} \right)
\]
also satisfies Equation (\ref{eq:dgfjnkindfjsnk}).
\end{proof}

\begin{proof}[\textbf{Proof of Lemma \ref{lemma:empiseq}}]
Notice that for every $i,h'_i$ it holds that
\begin{align*}
\pi_i(h_i', \bl h_{-i})-\pi_i(\bl h)&=\pi_i(h_i', \bl h_{-i})-\pi_i^\mS(h_i', \bl h_{-i})+\pi_i^\mS(h_i', \bl h_{-i})-\pi_i(\bl h) \\
& \stackrel{\substack{\bl h \text{ is an } \\ \frac{\epsilon}{2}\text{-empirical-PNE}} }{\leq} \pi_i(h_i', \bl h_{-i})-\pi_i^\mS(h_i', \bl h_{-i})+\pi_i^\mS(\bl h)-\pi_i(\bl h)+\frac{\epsilon}{2};
\end{align*}
therefore, if $\pi_i(h_i', \bl h_{-i})-\pi_i(\bl h)>\epsilon$ then at least one of $\pi_i(h_i', \bl h_{-i})-\pi_i^\mS(h_i', \bl h_{-i})> \frac{\epsilon}{4}$ or $\pi_i^\mS(\bl h)-\pi_i(\bl h) > \frac{\epsilon}{4}$ must hold. Overall,
\begin{align*}
&\Pr_{\mS\sim \mD^m}\left(\bl h \text{ is not an  } \epsilon \text{-PNE}  \right) = \Pr_{\mS\sim \mD^m}\left(\exists i \in \mN, h'_i \in \mH_i: \pi_i(h_i', \bl h_{-i})-\pi_i(\bl h) > \epsilon \right) \\
& \leq \Pr_{\mS\sim \mD^m}\left(\exists i \in \mN, h'_i \in \mH_i: \pi_i(h_i', \bl h_{-i})-\pi_i^\mS(h_i', \bl h_{-i})> \frac{\epsilon}{4} \text{ or } \pi_i^\mS(\bl h)-\pi_i(\bl h) > \frac{\epsilon}{4} \right) \\
& \leq \Pr_{\mS\sim \mD^m}\left(\exists i \in \mN, h'_i \in \mH_i: \abs{\pi_i(h_i', \bl h_{-i})-\pi_i^\mS(h_i', \bl h_{-i})> \frac{\epsilon}{4}} \text{ or } \abs{\pi_i^\mS(\bl h)-\pi_i(\bl h) > \frac{\epsilon}{4}} \right) \\
& \leq \Pr_{\mS\sim \mD^m}\left(\exists i\in\mN : \sup_{\bl h'' \in \mH}\abs{\pi_i(\bl h'')-\pi_i^{\mS}(\bl h'')} \geq  \frac{\epsilon}{4} \right) \\
& \stackrel{m \geq m_{\frac{\epsilon}{4},\delta}}{\leq }\delta.
\end{align*}
\end{proof}

\begin{proof}[\textbf{Proof of Theorem \ref{thm:best linear response}}]
The proof of this theorem is identical to the proof of \citeauthor{porat2017best} \cite[Theorem 2]{porat2017best}, and hence omitted.
\end{proof}

\begin{proof}[\textbf{Proof of Claim \ref{claim:flowerbound}}]

Denote $\mS=\left(x_1,\dots,x_m\right)\in \mX^m$ and $\bl r \in \R^m$ such that $\mH_i$ pseudo-shatters $\mS$ with witness $\bl r$. We prove the claim by showing that we can construct $\mS' \in \mZ^m$ that is shattered by $\mF_i$. For every binary vector $\bl b\in\{-1,1\}^m$ there exists $h_{\bl b}$ such that $\sign(h_b(x_j)-r_j)=b_j$ for every $j\in [m]$. 

Denote $H = \{h_{\bl b} \in \mH_i \}_{\bl b \in \{-1,1\}^m}$ such that $\abs{H} =2^m$. For every $j$ such that $r_j \geq 0$, let
\[
y_j = \min \left\{0,\min_{h_{\bl b} \in H}  h_{\bl b}(x_j)    \right\}.
\]
In addition, for every $j$ such that $r_j < 0$, let
\[
y_j = \max \left\{0,\max_{h_{\bl b} \in H}  h_{\bl b}(x_j)    \right\},
\]
and denote
\[
\mS' = \left( x_j,y_j,\abs{r_j} -\sign(r_j) t_j  \right)_{j=1}^m.
\]
We now show that $\mF_i$ shatters $\mS'$. Fix an arbitrary $\bl b\in\{-1,1\}^m$, and observe that in case $r_j \geq 0$
\begin{equation}
\label{eq:asdasdasd}
\begin{cases}
b_j = 1 \\
b_j = -1
\end{cases}
\Rightarrow
\begin{cases}
h_{\bl b}(x_j) \geq r_j  \\
h_{\bl b}(x_j) < r_j
\end{cases}
\Rightarrow
\begin{cases}
h_{\bl b}(x_j) -y_j \geq r_j -y_j \\
h_{\bl b}(x_j) -y_j < r_j -y_j 
\end{cases}
\Rightarrow
\begin{cases}
\abs{h_{\bl b}(x_j) -y_j }\geq r_j -y_j \\
\abs{h_{\bl b}(x_j) -y_j }< r_j -y_j
\end{cases},
\end{equation}
where the last argument holds since $h_{\bl b}(x_j)\geq y_j$. Alternatively, if $r_j <0$ we have
\begin{equation}
\label{eq:asdasdyju}
{\small
\begin{cases}
b_j = 1 \\
b_j = -1
\end{cases}
\Rightarrow
\begin{cases}
h_{\bl b}(x_j) \geq r_j  \\
h_{\bl b}(x_j) < r_j
\end{cases}
\Rightarrow
\begin{cases}
-h_{\bl b}(x_j)+y_j \leq -r_j +y_j \\
-h_{\bl b}(x_j) +y_j> -r_j+y_j
\end{cases}
\Rightarrow
\begin{cases}
\abs{-h_{\bl b}(x_j)+y_j} \leq \abs{r_j} +y_j \\
\abs{-h_{\bl b}(x_j)+y_j} > \abs{r_j} +y_j 
\end{cases}
}
,
\end{equation}
where again the last set of inequalities holds since $h_{\bl b}(x_j)\leq y_j$. 
In case one of Equations (\ref{eq:asdasdasd}) and (\ref{eq:asdasdyju}) holds in equality, we can slightly shift $r_j$ (as was done in the proof of Claim \ref{claim:auxiliaryg}); hence we assume that these are strict inequalities. The expression in Equation (\ref{eq:asdasdasd}) corresponds to $\mI(h_{\bl b},(x_j,y_j,r_j-y_j))$, while that of Equation (\ref{eq:asdasdyju}) corresponds to $\mI(h_{\bl b},(x_j,y_j,\abs{r_j}+y_j))$.

This analysis applies for every $\bl b$; hence, $\mF_i$ shatters $\mS'$ as required.
\end{proof}

\begin{proof}[\textbf{Proof of Claim \ref{claim:no-eq-direct}}]
\omer{proofread and verify}
We prove the claim by showing that for every $h \in \mH_1$ and $\epsilon>0$ there exists $h' \in \mH$ such that $\tpi_2(h,h')\geq\frac{2}{3}-\epsilon$. The assertion will then follow due to symmetry ($\mH_1=\mH_2$).

Let $h=(a,b)\in \mH_1$ denote an arbitrary strategy and fix $\epsilon>0$. We denote by $L,C,R$ the left, center and right squares of $U$ with respect to the horizontal axis, i.e.,
\[
L = \left( [0,1]\times [1,2]\right),\quad C = \left( [1,2]\times [0,1] \right), \quad R=\left( [2,3] \times [1,2] \right).
\] 
In addition, let $\ind_L(h)$ indicate whether $h$ crosses $L$, namely
\[
\ind_L(h) = 
\begin{cases}
1 & \{(x,y) \mid ax+b=y  \} \cap L \neq \emptyset \\
0 & \text{otherwise}
\end{cases},
\]
and similarly for $\ind_C(h), \ind_R(h)$. We proceed by an exhaustive case analysis over the alternatives for $\ind_L(h)+\ind_C(h)+\ind_R(h)$:
\begin{itemize}
\item If $\ind_L(h)+\ind_C(h)+\ind_R(h)\leq 1$, by taking $h'$ to be parallel to $h$ and close enough, player 2 can get at least $\frac{2}{3}$.
\item Else, if $\ind_L(h)+\ind_C(h)+\ind_R(h) = 2$, we have three sub-cases:
\begin{itemize}
\item In case $\ind_L(h)=\ind_R(h)=1$, denote by $L_{ab}$ ($L_{bl}$) the set of points above (below) $h$ inside $L$, namely
\[
L_{ab} = \left\{ (x,y)\in L \mid ax+b  \leq y \right\}, L_{bl} = \left\{ (x,y)\in L \mid ax+b  > y \right\}.
\]
Similarly,  let $R_{ab}$ ($R_{bl}$) denote the set of points above (below) $h$ inside $R$. Let $\lambda(A)$ denote the Lebesgue measure (informally, the area) of a set $A\subset \R^2$. Notice that by selecting the appropriate $h'$, player 2 can obtain a payoff arbitrarily close to 
\[
\frac{1}{3} \left(\lambda(C)+\max\left\{\lambda(L_{bl})+\lambda(R_{ab}),\lambda(L_{ab})+\lambda(R_{bl}) \right\} \right) \geq \frac{2}{3}   ,
\]
where the inequality follows from having $\lambda(L_{bl})+\lambda(R_{ab})+\lambda(L_{ab})+\lambda(R_{bl})=\lambda(L)+\lambda(R)=2$.
\item In case $\ind_L(h)=\ind_C(h)=1$, let $L_{ab},L_{bl}$ be as in the previous case, and also define $C_{ab},C_{bl}$ in the same manner. Here too, player 2 can obtain a payoff arbitrarily close to 
\[
\frac{1}{3} \left(\lambda(R)+\max\left\{\lambda(L_{bl})+\lambda(C_{bl}),\lambda(L_{ab})+\lambda(C_{ab}) \right\} \right) \geq \frac{2}{3}   ,
\]
where the inequality follows from having $\lambda(L_{bl})+\lambda(C_{bl})+\lambda(L_{ab})+\lambda(C_{ab})=\lambda(L)+\lambda(C)=2$.
\item The case of $\ind_R(h)=\ind_C(h)=1$ is symmetric to the previous case.
\end{itemize}
\item Finally, if $\ind_L(h)+\ind_C(h)+\ind_R(h)=3$, then $h=(0,1)$. In this case, by taking $h'=(0,1+3\epsilon)$, player 2 obtains precisely $\frac{2}{3}-\epsilon$.
\end{itemize}
\end{proof}

\begin{proof}[\textbf{Proof of Lemma \ref{lemma:realizable}}] \omer{proofread and verify}
Let $\tilde G = \langle \tmZ,\tmD ,\mN,\mH,\tpi \rangle$. We define an auxiliary game $G = \langle \mZ,\mD ,\mN,\mH,\pi \rangle$ such that $\mT=\{0\}$, $\mZ=\tmZ \times \mT$ and $\mD$ such that
\[
\forall Z\subseteq \tmZ: \mD(Z\times \mT)=\tmD(Z).
\]
We use the tilde notation to describe the elements of $\tilde G$, and by non-tilde to denote the elements of $G$. 

Recall that $\tilde G$ is realizable, and let $h_i^*\in \mH_i$ such that $h_i^*(x)=y$ for every $(x,y)\in \tmZ$. By definition of $\bl h$, which is a function of the sample $\tmS$,  $\pi_i^{\mS}(h_i^*,\bl h_{-i}) \leq\pi_i^{\mS}(\bl h)$ holds in $G$. By slightly modifying the proof of Lemma \ref{lemma:empiseq} (the first inequality, as $\bl h$ is a PNE of the empirical game of $G$ and not an $\frac{\epsilon}{2}$-PNE as analyzed there), we know that with probability of at least $1-\delta$ over the selection of $\mS$ it holds that
\begin{equation}
\label{eq:hghhgsd}
\forall i\in \mN: \sup_{h_i'\in \mH_i} \pi_i(h_i', \bl h_{-i})-\pi_i(\bl h) = \pi_i(h_i^*, \bl h_{-i})-\pi_i(\bl h)  < \epsilon.
\end{equation}
Using the definitions of $\tw_i,w_i$ and Equation (\ref{eq:hghhgsd}) we obtain that
\begin{align}
\label{eq:poire}
\tpi_i(h_i^*, \bl h_{-i})-\tpi_i(\bl h) &= \E_{(x,y)\sim \tmD} \left[\tw_i(x,y;h_i^*, \bl h_{-i})-\tw_i(x,y;\bl h)  \right] \nonumber\\
&= \E_{(x,y)\sim \tmD} \left[\frac{\ind_{h_i(x)\neq y} }{\abs{B(\tz ; h_i^*, \bl h_{-i})}}   \right]\nonumber\\
&= \E_{(x,y,0)\sim \mD} \left[\frac{\ind_{h_i(x)\neq y} }{\abs{B(\tz ; h_i^*, \bl h_{-i})}}   \right]\nonumber\\
&= \E_{(x,y,0)\sim \mD} \left[\frac{\ind_{h_i(x)\neq y} }{1+\sum_{i'\in\mN\setminus \{i\}} \mI(x,y,0,h_{i})}   \right]\nonumber\\
&= \E_{(x,y,0)\sim \mD} \left[\ind_{h_i(x)\neq y}w_i(x,y,0;h_i^*, \bl h_{-i}) \right]\nonumber\\
&= \E_{(x,y,0)\sim \mD} \left[w_i(x,y,0;h_i^*, \bl h_{-i})-w_i(x,y,0; \bl h) \right]\nonumber\\
&= \pi_i(h_i^*, \bl h_{-i})-\pi_i(\bl h) \nonumber\\
&<\epsilon,
\end{align}
where the last inequality holds with probability at least $1-\delta$. As a result, with probability at least $1-\delta$,
\begin{equation}
\label{eq:bbgbgu}
\forall i\in \mN: \sup_{h_i'\in \mH_i} \tpi_i(h_i', \bl h_{-i})-\tpi_i(\bl h) = \tpi_i(h_i^*, \bl h_{-i})-\tpi_i(\bl h)  < \epsilon.
\end{equation}
This concludes the proof of the lemma.
\end{proof}

\section{Additional claims and proofs}
\label{sec:additionalproofs}
\begin{claim}
\label{claim:rqep}
For a given $\bl h$ and  $m \geq \frac{2}{\epsilon^2 }$ it holds that
\begin{align}
\label{eq:hoefbdvg}
\Pr_{\mS' \sim \mD^m}\left(\abs{\pi_i(\bl h)-\pi_i^{\mS'}(\bl h)}  \leq \frac{\epsilon}{2}   \right) \geq \frac{1}{2}.
\end{align}
\end{claim}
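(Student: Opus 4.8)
The plan is to observe that once $\bl h$ is fixed, the claim reduces to a one-sided concentration statement for an average of i.i.d. bounded random variables, which Chebyshev's inequality handles directly. The very threshold $m \geq \frac{2}{\epsilon^2}$ signals that Chebyshev (rather than Hoeffding) is the intended tool: Hoeffding would yield the bound $2e^{-m\epsilon^2/2}$, requiring $m \geq \frac{2\ln 4}{\epsilon^2}$, whereas Chebyshev matches the stated $\frac{2}{\epsilon^2}$ exactly.

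Concretely, I would first note that for the fixed profile $\bl h$, the random variables $w_i(z_1;\bl h),\dots,w_i(z_m;\bl h)$ are i.i.d.\ (since $\mS'=(z_1,\dots,z_m)\sim\mD^m$), each taking values in $\{0,\tfrac1N,\dots,1\}\subseteq[0,1]$, with common expectation $\E_{z\sim\mD}[w_i(z;\bl h)]=\pi_i(\bl h)$ by definition. Hence $\pi_i^{\mS'}(\bl h)=\frac1m\sum_{j=1}^m w_i(z_j;\bl h)$ is precisely their sample mean, so
\[
\Var\!\left(\pi_i^{\mS'}(\bl h)\right)=\frac1m\,\Var\!\left(w_i(z;\bl h)\right)\leq \frac{1}{4m},
\]
where the last inequality uses the elementary fact that any $[0,1]$-valued random variable has variance at most $\tfrac14$.

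Next I would apply Chebyshev's inequality to the deviation of $\pi_i^{\mS'}(\bl h)$ from its mean $\pi_i(\bl h)$, at scale $\tfrac{\epsilon}{2}$:
\[
\Pr_{\mS'\sim\mD^m}\!\left(\abs{\pi_i(\bl h)-\pi_i^{\mS'}(\bl h)}>\frac{\epsilon}{2}\right)\leq \frac{\Var\!\left(\pi_i^{\mS'}(\bl h)\right)}{(\epsilon/2)^2}\leq \frac{1/(4m)}{\epsilon^2/4}=\frac{1}{m\epsilon^2}.
\]
Substituting the hypothesis $m\geq \frac{2}{\epsilon^2}$, i.e.\ $m\epsilon^2\geq 2$, gives $\frac{1}{m\epsilon^2}\leq \frac12$, and passing to the complementary event yields $\Pr_{\mS'\sim\mD^m}\!\big(\abs{\pi_i(\bl h)-\pi_i^{\mS'}(\bl h)}\leq \tfrac{\epsilon}{2}\big)\geq \frac12$, as claimed.

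I expect no genuine obstacle here: the only decisions are recognizing that fixing $\bl h$ removes the need for any uniform-convergence machinery (this is a single-profile statement, used only as an auxiliary ingredient in the symmetrization Step~1 of Lemma~\ref{lemma:uniconvergenceoneplayer}), and choosing the variance bound $\tfrac14$ together with Chebyshev so that the constant in the sample-size threshold comes out to exactly $2$. The boundedness of $w_i$ in $[0,1]$ is immediate from its definition, so every ingredient is already available.
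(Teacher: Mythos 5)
Your proposal is correct and matches the paper's proof essentially verbatim: the paper likewise applies Chebyshev's inequality to the sample mean $\pi_i^{\mS'}(\bl h)$ of i.i.d.\ $[0,1]$-valued random variables, bounds its variance by $\frac{1}{4m}$ (via Popoviciu's inequality, which is the same elementary fact you invoke), and concludes from $m\geq \frac{2}{\epsilon^2}$. No differences worth noting.
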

\begin{proof}[\textbf{Proof of Claim \ref{claim:rqep}}]
Recall Chebyshev's inequality
\[
\Pr \left(\abs{X - \E[x]} \geq \epsilon \right) \leq \frac{\Var(X)}{\epsilon^2}.
\]
Applying it for our problem, we get
\[
\Pr_{\mS' \sim \mD^m}\left(\abs{\pi_i(\bl h)-\pi_i^{\mS'}(\bl h)}  \geq \frac{\epsilon}{2}   \right) \leq \frac{\Var(\pi_i^{\mS'}(\bl h))}{\frac{\epsilon^2}{4}}.
\]
Notice that $\pi_i^{\mS'}(\bl h)$ is the average of independent random variables bounded in the $[0,1]$ segment; hence, by Popoviciu's inequality on variances we have 
\[
\Var(\pi_i^{\mS'}(\bl h)) \leq \frac{1}{4m}.
\]
Finally, for $m \geq \frac{2}{\epsilon^2 }$ it holds that
\[
\Pr_{\mS' \sim \mD^m}\left(\abs{\pi_i(\bl h)-\pi_i^{\mS'}(\bl h)}  \leq \frac{\epsilon}{2}   \right)=1-\Pr_{\mS' \sim \mD^m}\left(\abs{\pi_i(\bl h)-\pi_i^{\mS'}(\bl h)}  \geq \frac{\epsilon}{2}   \right) \geq 1-\frac{\frac{1}{4m}}{\frac{\epsilon^2}{4}}=1-\frac{1}{\epsilon^2m}\geq \frac{1}{2}.
\]
\end{proof}

\begin{claim} 
\label{claim:auxhoeffbin}
Let $m\geq 15$, $(X_i)_{i=1}^m$ be a sequence of i.i.d. Bernoulli r.v. with $p=\frac{1}{2}$, and let $\bar X = \frac{1}{m} \sum_{i=1}^m X_i$. Then $\Pr\left(\frac{1}{2} < \bar X < \frac{3}{4}\right) \geq \frac{1}{4}$ .
\end{claim}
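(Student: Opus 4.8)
The plan is to reduce the claim to the standard anti-concentration and tail estimates for the binomial distribution, and then verify a clean numerical inequality. Write $S_m = \sum_{i=1}^m X_i$, so that $S_m \sim \mathrm{Bin}(m,\tfrac12)$ and the event $\{\tfrac12 < \bar X < \tfrac34\}$ is exactly $\{\tfrac m2 < S_m < \tfrac{3m}{4}\}$. Since $3m/4 > m/2$, the event $\{S_m > m/2\}$ splits as the disjoint union of $\{m/2 < S_m < 3m/4\}$ and $\{S_m \geq 3m/4\}$, giving the exact identity
\[
\Pr\!\left(\tfrac m2 < S_m < \tfrac{3m}{4}\right) = \Pr\!\left(S_m > \tfrac m2\right) - \Pr\!\left(S_m \geq \tfrac{3m}{4}\right).
\]
It therefore suffices to lower bound the first term and upper bound the second.

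For the first term I would use the symmetry of $\mathrm{Bin}(m,\tfrac12)$ about $m/2$: since $\Pr(S_m > m/2) = \Pr(S_m < m/2)$, we get $\Pr(S_m > m/2) = \tfrac12\bigl(1 - \Pr(S_m = m/2)\bigr)$. When $m$ is odd this equals $\tfrac12$ exactly, as $S_m$ never hits $m/2$; when $m$ is even I would control the central mass via the standard estimate $\Pr(S_m = m/2) = \binom{m}{m/2}2^{-m} \leq \sqrt{2/(\pi m)}$ (equivalently $\binom{2n}{n}\leq 4^n/\sqrt{\pi n}$). For the second term, Hoeffding's inequality applied to $\bar X$ with deviation $\tfrac34-\tfrac12 = \tfrac14$ yields $\Pr(S_m \geq 3m/4) = \Pr(\bar X - \tfrac12 \geq \tfrac14) \leq e^{-2m(1/4)^2} = e^{-m/8}$. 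Combining the two bounds gives
\[
\Pr\!\left(\tfrac12 < \bar X < \tfrac34\right) \geq \tfrac12 - \tfrac12\sqrt{2/(\pi m)} - e^{-m/8},
\]
where the middle term is simply absent (zero) when $m$ is odd.

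It remains to check the right-hand side is at least $\tfrac14$ for every $m \geq 15$, which I would do by cases. For odd $m \geq 15$ only $e^{-m/8} \leq \tfrac14$ is needed, i.e. $m \geq 8\ln 4 \approx 11.1$, which holds. For even $m \geq 16$ I would observe that $g(m) = \tfrac12\sqrt{2/(\pi m)} + e^{-m/8}$ is decreasing in $m$ (both summands decrease), so it suffices to check the boundary value $g(16) = \tfrac12\sqrt{2/(16\pi)} + e^{-2} \approx 0.0997 + 0.135 = 0.235 < \tfrac14$. The genuine obstacle is precisely this even case at the smallest admissible value $m=16$: there the two error terms together fall only a few hundredths below the $\tfrac14$ threshold, which is why a sharp anti-concentration bound on $\Pr(S_m = m/2)$ (rather than a crude estimate) together with the explicit $e^{-m/8}$ Hoeffding exponent are both required; the remaining verifications are routine.
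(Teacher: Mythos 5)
Your proof is correct and follows the same basic route as the paper's: write the target probability as $\Pr(S_m>m/2)-\Pr(S_m\geq 3m/4)$, bound the lower piece by symmetry of $\mathrm{Bin}(m,\tfrac12)$, and kill the upper tail with Hoeffding at deviation $\tfrac14$, giving $e^{-m/8}$. The one substantive difference is that you are more careful than the paper at the symmetry step: the paper simply writes $\Pr(\bar X\leq\tfrac12)=\tfrac12$, which is exact only for odd $m$ (for even $m$ the correct value is $\tfrac12+\tfrac12\Pr(S_m=m/2)$, so the paper's chain of equalities is strictly speaking wrong there, though the claim itself still holds). You handle this by bounding the central atom with $\binom{2n}{n}\leq 4^n/\sqrt{\pi n}$ and verifying numerically that $\tfrac12\sqrt{2/(16\pi)}+e^{-2}<\tfrac14$ at the worst even case $m=16$; this buys a proof that is airtight for all $m\geq 15$ at the cost of one extra standard estimate and a boundary check. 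Your numerics are right ($\approx 0.0997+0.1353=0.235$), and the monotonicity argument for larger even $m$ is fine.
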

\begin{proof}[\textbf{Proof of Claim \ref{claim:auxhoeffbin}}]
By Hoeffding's inequality we have $\Pr\left( \bar X \geq (p+\epsilon) \right)\leq \exp\left( -2\epsilon^2 m \right) $. Therefore
\begin{align*}
\Pr\left(\frac{1}{2} < \bar X < \frac{3}{4}\right) = \Pr\left( \bar X < \frac{3}{4}\right) - \Pr\left(\bar X \leq \frac{1}{2} \right) = 1 - \Pr\left( \bar X \geq \frac{3}{4}\right) -\frac{1}{2} \stackrel{\epsilon= \frac{1}{4}}{\geq} \frac{1}{2}-e^{\frac{-m}{8} } \stackrel{ m\geq 15}{\geq} \frac{1}{4}.
\end{align*}
\end{proof}

\section{Details for Subsection \ref{SUBSEC:SIM}}
\label{sec:simulation specifics}
In all simulations, we selected $m=100$ points. The distribution over $\mX \times \mY \times \mT$ was the product distribution $\mD_{\mX} \cdot \mD_{\mY} \cdot \mD_{\mT}$, such that $\mD_{\mX}$ is $Uni[0,5]$ and $\mD_{\mY}$ as appear in Subsection \ref{SUBSEC:SIM} with the following parameters:
\begin{enumerate}
\item Linear: For every instance $x$, $y=2x+1+\epsilon$ for $\epsilon\sim Normal(0,1)$.
\item V-shape: The value of every $x$ is determined by $y=5\abs{x-2.5}+1+\epsilon$ for $\epsilon\sim Normal(0,1)$.
\item X-shape: For every instance $x$, the value $y$ is distributed as follows:
\[
y=
\begin{cases}
5x+0+\epsilon & \text{w.p. }\frac{1}{2}\\
-5x+25+\epsilon & \text{w.p. }\frac{1}{2}
\end{cases},
\]
where $\epsilon\sim Normal(0,1)$.
\item Piecewise: For every $x$, the value $y$ is given by
\[
y=
\begin{cases}
10x+1+\epsilon & \text{if }x\leq 2.5\\
2x+18+\epsilon & \text{else}
\end{cases},
\]
where $\epsilon\sim Normal(0,1)$.
\end{enumerate}
In addition, the tolerance level was deterministic, i.e., $\mD_{\mT}$ was the degenerate deterministic distribution over a value $t$, such that 
\begin{enumerate}
\item Linear: High - $t=\frac{3}{2}$ ,Medium - $t=\frac{2}{3}$, Low - $t=\frac{1}{3}$.
\item V-shape: High - $t=\frac{2}{1}$ ,Medium - $t=\frac{1}{1}$, Low - $t=\frac{1}{2}$.
\item X-shape: High - $t=\frac{2}{1}$ ,Medium - $t=\frac{1}{1}$, Low - $t=\frac{1}{2}$.
\item Piecewise: High - $t=\frac{3}{1}$ ,Medium - $t=\frac{3}{2}$, Low - $t=\frac{1}{2}$.
\end{enumerate}
}\fi} %closing appendices

\end{document}